	\let\cref\crtcref
\crefname{ineq}{Inequality}{Inequality}
\crefname{ineqs}{Inequalities}{Inequalities}
\patchcmd\WF@putfigmaybe{\lower\intextsep}{}{}{\fail}\AddToHook{env/wrapfigure/begin}{\setlength{\intextsep}{0pt}}
\newcommand{\ssucc}{{\vartriangleright}}
\newcommand{\mchoosetwo}{{\textstyle \binom{m}{2}}}
\renewcommand{\epsilon}{\varepsilon}
\renewcommand{\le}{\leqslant}
\renewcommand{\leq}{\leqslant}
\renewcommand{\ge}{\geqslant}
\renewcommand{\geq}{\geqslant}
\newcommand{\K}{\mathrm{K}}
\newcommand{\sqK}{\mathrm{SqK}}
\DeclareMathOperator{\supp}{supp}
\DeclareMathOperator{\round}{round}
\DeclareMathOperator{\swap}{swap}
\DeclareMathOperator{\weight}{w}
\newcommand{\moveuprank}[1]{\!\!\!\!\!\!($\bigtriangleup${\tiny$#1$})}
\newcommand{\movedownrank}[1]{\!\!\!\!\!\!($\bigtriangledown${\tiny$#1$})}
\newcommand{\staysamerank}{\!\!\!\!\!\!({\scriptsize$-$})}
\title{The Squared Kemeny Rule for Averaging Rankings}
\author{Patrick Lederer}
\affiliation{
	\institution{Technical University of Munich}
	\country{Germany}
}
\email{ledererp@in.tum.de}
\author{Dominik Peters}
\affiliation{
	\institution{CNRS, LAMSADE, Universit\'e Paris Dauphine - PSL}
	\country{France}
}
\email{dominik.peters@lamsade.dauphine.fr}
\author{Tomasz W\k{a}s}
\affiliation{
	\institution{CNRS, LAMSADE, Universit\'e Paris Dauphine - PSL}
	\country{France}
}
\email{tomasz.was@dauphine.psl.eu}
\begin{abstract}
{\large Manuscript: April 2024}
	
\bigskip
\noindent
For the problem of aggregating several rankings into one ranking, \citet{Kem-1959-Kemeny} proposed two methods: the \emph{median rule} which selects the ranking with the smallest total swap distance to the input rankings, and the \emph{mean rule} which minimizes the \emph{squared} swap distances to the input rankings. The median rule has been extensively studied since and is now known simply as \emph{Kemeny's rule}. It exhibits majoritarian properties, so for example if more than half of the input rankings are the same, then the output of the rule is the same ranking. 

We observe that this behavior is undesirable in many rank aggregation settings. For example, when we rank objects by different criteria (quality, price, etc.) and want to aggregate them with specified weights for the criteria, then a criterion with weight 51\% should have 51\% influence on the output instead of 100\%. We show that the Squared Kemeny rule (i.e., the mean rule) behaves this way, by establishing a bound on the distance of the output ranking to any input rankings, as a function of their weights. Furthermore, we give an axiomatic characterization of the Squared Kemeny rule, which mirrors the existing characterization of the Kemeny rule but replaces the majoritarian Condorcet axiom by a proportionality axiom. Finally, we discuss the computation of the rule and show its behavior in a simulation study.
\end{abstract}
\begin{document}

\begin{titlepage}
	\maketitle
	\vspace{1cm}
	\hrule
	\vspace{10pt}
	\tableofcontents
	\vspace{-15pt}
	\hrule
\end{titlepage}

\section{Introduction}
\begin{wrapfigure}{r}{0.23\textwidth}
	\includegraphics[width=\linewidth]{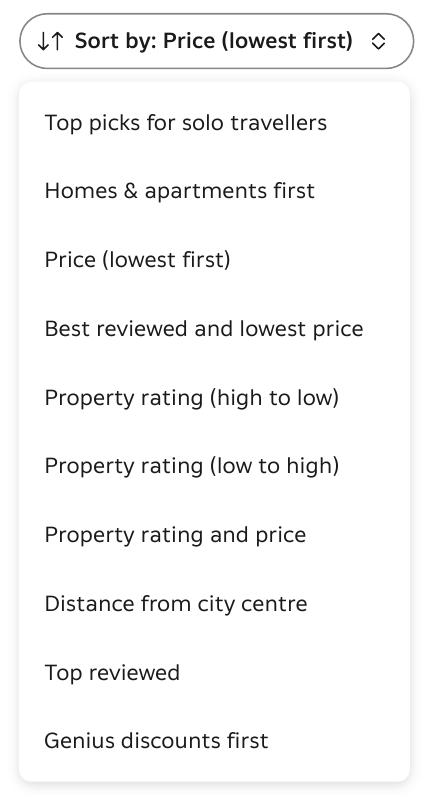}
	\vspace{-21pt}
	\caption{Sort options on booking.com.}
	\label{fig:booking-sor}
\end{wrapfigure}
Many search engines allow users to sort results by several criteria. For example, websites such as booking.com and expedia.com allow users to sort hotels by their price, their review score, or their distance to the city center. They also offer sorting by combinations of these criteria (for example ``best reviewed and lowest price'', see \Cref{fig:booking-sor}). More generally, they could allow users to specify weights over the different ways to sort alternatives (e.g. 60\% price, 30\% reviews, 10\% location) and then present an aggregated ranking of hotels.

The task of combining several rankings (possibly with weights) into one ranking is known as \emph{rank aggregation}. The best-known and most frequently discussed rule for this problem is \emph{Kemeny's rule} \citep{Kem-1959-Kemeny,KemSne-1960-Kemeny}, which minimizes the total distance to all input rankings. To be precise, let $A$ be a set of $m$ alternatives (e.g., hotels), and let $\mathcal R$ be the set of rankings (linear orders) on $A$. For two rankings ${\succ}, {\rhd} \in \mathcal R$, the \emph{swap distance} (or Kendall-tau distance) between them is the number of pairs of alternatives on which they disagree:
$\swap({\succ}, \rhd) = |\{  \{x,y\} \subseteq A : x \succ y \text{ and } y \rhd x  \}|$. 
A \emph{profile} $R$ is a function that assigns to each linear order $\succ$ a weight $R(\succ) \ge 0$, with weights summing to $1$.
To aggregate the rankings in a profile into a collective ranking, we use a \emph{social preference function} (\emph{SPF}) $f$, which selects for each profile $R$ a set $f(R) \subseteq \mathcal{R}$ of output rankings (ideally just one ranking, but there may be ties). Finally, Kemeny's rule is the SPF that selects the rankings which minimize the average swap distance to the input:
\[
\textstyle
\text{Kemeny}(R) = \arg\min_{{\rhd} \in \mathcal{R}} \sum_{{\succ} \in \mathcal{R}} R(\succ) \cdot \swap(\succ, \rhd).
\]
Kemeny's rule is an attractive SPF for several reasons: it is the maximum likelihood estimator (MLE) of the Mallow's $\phi$ model \citep{young1988condorcet} which makes it a good fit for epistemic problems where we wish to discover a ground truth ranking given noisy estimates. Further, the rule is axiomatically characterized by a Condorcet-style axiom \citep{YouLev-1978-KemenyAxioms}; in particular, when the weight of some ranking exceeds 50\%, then the output of Kemeny's rule will be that ranking.

While this latter property is desirable in epistemic and electoral settings, for a hotel booking website it is disqualifying. In the above example of a user wishing to sort hotels by 60\% price, 30\% reviews, 10\% location, the Kemeny output would just be the ranking by price, with review scores and location having no influence. Instead, what the user wants is a ranking where a hotel can compensate a lower position in the price ranking by a high position in the reviews and location rankings. Thus, we need a rule that faithfully follows the desired weighting, and makes use of all the information instead of ignoring low-weight criteria.

As it turns out, exactly such a rule was proposed under the name ``mean rule'' by \citet{Kem-1959-Kemeny} and \citet{KemSne-1960-Kemeny} in the same articles that introduced Kemeny's rule. Our name for this SPF is the \emph{Squared Kemeny rule} as it is obtained by squaring the distances in the objective function:
\[
\textstyle
\text{SqK}(R) = \arg\min_{{\rhd} \in \mathcal{R}} \sum_{{\succ} \in \mathcal{R}} R(\succ) \cdot \swap(\succ, \rhd)^2.
\]
The Squared Kemeny rule appears to have been almost entirely ignored in the subsequent literature. \citet[p.~290]{YouLev-1978-KemenyAxioms} quickly dismiss it, writing that ``Kemeny left the problem of which solution to choose unresolved. But from the standpoint of collective decision-making there is ample reason to prefer the median, since it turns out that the median consensus leads to a Condorcet method, while the mean does not.'' We believe that this dismissal was too quick.

The contribution of this paper is showing that the Squared Kemeny rule is well-suited to perform rank aggregation when we wish each input ranking to be reflected in the output ranking, to an extent that is proportional to its weight.\footnote{Proportionality can be seen as a fairness notion with respect to voters (who have a guaranteed amount of influence on the output ranking). The rank aggregation literature has studied distinct notions of fairness for candidates that come labelled as belonging to protected groups \citep{2020fairconsensus,chakraborty2022fair,wei2022rank}}
Such proportionality notions have recently attracted significant attention in voting, in particular in the settings of multi-winner voting and participatory budgeting \citep[see, e.g.,][]{aziz2017jr,lackner2023multi,peters2021proportional}.
Following the approach in that literature, we could formalize proportionality, for example, by saying that a ranking that makes up $\alpha\%$ of the weight should agree with the output ranking on at least roughly $\alpha\%$ pairwise comparisons. While Squared Kemeny does not satisfy this in general, we will see that it does in important special cases, and that it satisfies an approximate version in general. More generally, we will show that Squared Kemeny behaves more like an average and thus is more responsive to changes in its input than the Kemeny rule, which behaves more like a median.

\subsection*{Rank Aggregation With Two Criteria}

\newcommand{\hotelbubble}[2]{\tikz[transform shape, scale=0.88]{\node [fill=#1, rounded corners, baseline, text height=7pt, inner ysep=1pt, inner xsep=0pt, minimum height=7pt, minimum width=39pt, font=\footnotesize] {\vphantom{Q}#2};
}}

\definecolor{GraduateColor}{RGB}{230, 207, 154}
\definecolor{StudyColor}{RGB}{225, 191, 227}
\definecolor{OmniColor}{RGB}{159, 180, 219} 

\newcommand{\LaQuinta}{\hotelbubble{SpringGreen}{La Quinta}}
\newcommand{\NHH}{\hotelbubble{Melon}{NHH}}
\newcommand{\Graduate}{\hotelbubble{GraduateColor}{Graduate}}
\newcommand{\Omni}{\hotelbubble{OmniColor}{Omni}}
\newcommand{\HotelMarcel}{\hotelbubble{Apricot}{H.\,Marcel}}
\newcommand{\TheStudy}{\hotelbubble{StudyColor}{The Study}}

\begin{figure}
	\centering
	\makebox[\textwidth][c]{
	\setlength{\tabcolsep}{1.2pt}
	\renewcommand{\arraystretch}{0.9}
	\footnotesize
	\begin{tabular}{ccccccccccc}
		\toprule
		Price & 90\% & 80\% & 70\% & 60\% & 50\% & 40\% & 30\% & 20\% & 10\% & Score \\
		\midrule
		\LaQuinta & \Graduate & \Graduate & \Graduate & \Graduate & \Graduate & \Graduate & \Graduate & \Graduate & \Graduate & \Graduate \\
		\Graduate & \LaQuinta & \LaQuinta & \LaQuinta & \NHH & \NHH & \NHH & \HotelMarcel & \HotelMarcel & \HotelMarcel & \TheStudy \\
		\NHH & \NHH & \NHH & \NHH & \LaQuinta & \HotelMarcel & \HotelMarcel & \NHH & \TheStudy & \TheStudy & \HotelMarcel \\
		\Omni & \Omni & \HotelMarcel & \HotelMarcel & \HotelMarcel & \LaQuinta & \TheStudy & \TheStudy & \NHH & \NHH & \NHH \\
		\HotelMarcel & \HotelMarcel & \Omni & \TheStudy & \TheStudy & \TheStudy & \LaQuinta & \LaQuinta & \LaQuinta & \Omni & \Omni \\
		\TheStudy & \TheStudy & \TheStudy & \Omni & \Omni & \Omni & \Omni & \Omni & \Omni & \LaQuinta & \LaQuinta \\
		\bottomrule
	\end{tabular}
	}
	\caption{Squared Kemeny when mixing two criteria.}
	\label{fig:booking-single-crossing}
\end{figure}

To understand how the Squared Kemeny rule behaves, it is instructive to consider the problem of aggregating just two different rankings with different weights. Let us again consider a hotel booking example. In \Cref{fig:booking-single-crossing}, we show a ranking of 6 hotels offered on booking.com in New Haven, Connecticut, for the nights 8--11 July 2024 (accessed 5 February 2024). At the very left, the hotels are ranked by price, and at the very right, they are ranked by average user score.
The figure shows the output of Squared Kemeny (with ties broken consistently) when the two rankings are given different weights; the top row shows the weight given to the price ranking.

We see that Squared Kemeny smoothly interpolates between the two rankings. Indeed, the price and score rankings differ on exactly 10 pairwise comparisons, and going through the rankings from left to right, we see that in each step one pairwise swap is performed. Thus, for example, when the price ranking has weight 70\% (and the score ranking has weight 30\%), the Squared Kemeny ranking agrees with the price ranking on 7 of the 10 disagreement comparisons, and it agrees with the score ranking on 3 of 10.

This is true in general. We formalize this by saying that Squared Kemeny satisfies \emph{2-Rankings-Proportionality} (2RP). This axiom says that for every profile $R$ containing just two rankings $\succ_1$, $\succ_2$ with positive weight that disagree on $d = \swap(\succ_1, \succ_2)$ pairwise comparisons, we have
\[
{\rhd} \in \sqK(R) \iff d - \swap(\succ_i, \rhd) \in \round(R(\succ_i) \cdot d) \text{ for both $i = 1$ and $i = 2$},
\]
where $\round(z)$ is the set of one or two integers closest to $z \in \mathbb R$. Thus, for profiles with two input rankings, the Squared Kemeny rule chooses all ``mean rankings'' $\rhd$ where the number of pairwise agreements between $\rhd$ and the input rankings is proportional to their weights. 

Note that the Kemeny rule behaves very differently on profiles with two rankings -- it just outputs the input ranking with higher weight.

Our main result is an axiomatic characterization of the Squared Kemeny rule that uses the same axioms as the famous characterization of the Kemeny rule by \citet{YouLev-1978-KemenyAxioms}, but replaces their Condorcet axiom by the 2RP axiom. We also impose a mild continuity axiom.

{
\renewcommand{\thetheorem}{\ref{thm:characterization}}
\begin{theorem}
An SPF satisfies neutrality, reinforcement, continuity, and 2RP if and only if it is the Squared Kemeny rule.
\end{theorem}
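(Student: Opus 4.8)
\emph{The easy direction} ($\sqK$ satisfies the four axioms). I would check neutrality from the relabelling-invariance of $\swap$; reinforcement and continuity from the fact that, for fixed $\rhd$, the map $R\mapsto\sum_{\succ}R(\succ)\swap(\succ,\rhd)^2$ is affine-linear in $R$ and $\mathcal R$ is finite (so a common minimiser of two profiles minimises every mixture, and conversely every mixture-minimiser lies in the intersection). For 2RP, on a two-ranking profile the only candidates for the minimum lie in the betweenness interval $[\succ_1,\succ_2]$ -- swapping any pair on which $\rhd$ disagrees with both $\succ_1$ and $\succ_2$ strictly lowers both squared distances -- and there the objective equals $\alpha j^2+(1-\alpha)(d-j)^2$ with $j=\swap(\succ_1,\rhd)$ and $d=\swap(\succ_1,\succ_2)$, whose integer minimisers over $j\in\{0,\dots,d\}$ are exactly $\round((1-\alpha)d)$; this is 2RP.

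\emph{The hard direction.} I would follow the architecture of the Young--Levenglick proof. Step 1: reinforcement together with continuity force $f$ to be a \emph{scoring SPF} -- there is $s\colon\mathcal R\times\mathcal R\to\mathbb R$ with $f(R)=\arg\min_{\rhd}\sum_{\succ}R(\succ)\,s(\rhd,\succ)$ -- which is precisely the structural lemma in \citet{YouLev-1978-KemenyAxioms} (anonymity and homogeneity being built into the profile model, and continuity being what excludes non-linear, ``lexicographic'' consistent rules). Step 2: identifying rankings with permutations in $S_m$, neutrality forces $s(\rhd,\succ)$ to depend only on $\rhd^{-1}\succ$, so $s(\rhd,\succ)=g(\rhd^{-1}\succ)$ for some $g\colon S_m\to\mathbb R$, and subtracting the constant $g(\mathrm{id})$ from all scores we may take $g(\mathrm{id})=0$. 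Writing $\mathrm{inv}(\sigma)$ for the number of inversions of $\sigma$ (so $\swap(\rhd,\succ)=\mathrm{inv}(\rhd^{-1}\succ)$), the goal becomes: show $g=\kappa\cdot\mathrm{inv}^2$ for some $\kappa>0$; then $f=\sqK$.

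\emph{Extracting $g$ from 2RP.} Here 2RP must do all the remaining work despite constraining only two-ranking profiles. For the profile with weights $\alpha$ on $\succ_1$ and $1-\alpha$ on $\succ_2$, and $d=\swap(\succ_1,\succ_2)$, 2RP gives $f(R)=\{\rhd\in[\succ_1,\succ_2]:\swap(\succ_1,\rhd)\in\round((1-\alpha)d)\}$. I would exploit this in two moves. First, take $\succ_2$ to be the reversal of $\succ_1$, so $d=\mchoosetwo$ and $[\succ_1,\succ_2]=\mathcal R$: then for each $j$ and all $\alpha$ in an open interval around $1-j/\mchoosetwo$, $f(R)$ is the \emph{entire} sphere $\{\rhd:\swap(\succ_1,\rhd)=j\}$, and equating the (affine-in-$\alpha$) scores of any two rankings in that sphere forces all permutations with $j$ inversions to receive the same $g$-value; hence $g=G\circ\mathrm{inv}$ with $G(0)=0$. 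Second, for each $d\in\{1,\dots,\mchoosetwo\}$ pick $\succ_1,\succ_2$ at distance $d$; at the weight $\alpha^\ast=1-\tfrac1{2d}$ we get $\round((1-\alpha^\ast)d)=\{0,1\}$, so $f(R)$ contains both $\succ_1$ and a distance-$1$ ranking of $[\succ_1,\succ_2]$, and equating their scores gives $G(d)-G(d-1)=(2d-1)G(1)$, whence $G(j)=G(1)j^2$ by telescoping. Finally the case $d=1$, $\alpha>\tfrac12$ (where $f(R)=\{\succ_1\}$) forces $G(1)>0$, so $s(\rhd,\succ)=G(1)\swap(\rhd,\succ)^2$ and $f=\sqK$.

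\emph{Main obstacle.} The structural step (reinforcement $+$ continuity $\Rightarrow$ scoring SPF) is the technically heaviest and is where continuity is genuinely needed, but it is standard and imported from \citet{YouLev-1978-KemenyAxioms}. The new difficulty is that 2RP only ``sees'' the thin family of two-ranking profiles, yet one must reconstruct the entire score function $g$ on all of $S_m$. The two ideas that unlock this -- (i) that a ranking and its reversal have betweenness interval equal to all of $\mathcal R$, so 2RP on such profiles collapses $g$ to a function of the inversion number, and (ii) that the exact tie-point $\alpha^\ast=1-\tfrac1{2d}$ produced by the $\round$ operator reads off the discrete second difference $G(d)-G(d-1)=(2d-1)G(1)$ characterising the quadratic -- are what I expect to be the crux, together with the bookkeeping confirming that the first move really does range over \emph{all} permutations at a given inversion count.
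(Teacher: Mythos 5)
Your reduction of the quadratic score from 2RP is sound and closely mirrors what the paper actually does in \Cref{thm:welfarist}: the reversal profile forcing all rankings at a fixed distance to tie (the paper uses the reverse ranking $\blacktriangleleft_2$ with weight $\nicefrac{1}{d_{\max}}$ for this), and the tie-point $\alpha^\ast=1-\tfrac{1}{2d}$ reading off the increment $G(d)-G(d-1)=(2d-1)G(1)$, are both present there almost verbatim. The problem is your Step 1. The claim that neutrality, reinforcement, and continuity force $f$ to be a ranking scoring function $f(R)=\arg\min_{\rhd}\sum_{\succ}R(\succ)\,s(\rhd,\succ)$ is \emph{not} a standard lemma you can import: it is precisely the open conjecture of \citet{ConRogXia-2009-ScoringRankings}, as the paper notes explicitly before stating \Cref{thm:welfarist}. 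The structural lemma in \citet{YouLev-1978-KemenyAxioms} does not give this either; their argument uses the Condorcet axiom to pass to the low-dimensional space of majority margins, a move the paper explicitly says is unavailable here. Young's 1975 scoring-rule characterization also does not apply, because neutrality for SPFs only provides the $m!$ symmetries induced by permuting $A$, not the full symmetric group on the $m!$ output rankings, and without that symmetry reinforcement and continuity give you only pairwise separating hyperplanes $u^{\rhd_i,\rhd_j}$ between the convex regions, with no guarantee that they cohere into differences of a single score vector (the cocycle condition $u^{\rhd_i,\rhd_j}=s_i-s_j$, up to positive rescalings, is exactly what fails to be automatic).

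Consequently your proposal proves \Cref{thm:welfarist} (the characterization \emph{within} the class of ranking scoring functions, which does not even need neutrality) but not \Cref{thm:characterization}. The paper's appendix proof exists precisely to close this gap without assuming the scoring structure: it lifts $f$ to $\mathbb{Q}^{m!}$, obtains separating hyperplanes from reinforcement, pins down the adjacent hyperplanes $u^{\rhd_i,\rhd_j}$ for $\swap(\rhd_i,\rhd_j)=1$ via 2RP (Lemma~\ref{lem:swap_hyperplanes}, essentially your computation), and then \emph{uses 2RP again} to build profiles on which the rule must select an entire maximal single-crossing path (Lemmas~\ref{lem:Kemenysquared}--\ref{lem:SCNI}); the resulting linear-independence and linear-dependence arguments (Lemmas~\ref{lem:linearIndependence} and~\ref{lem:linearDependence}) are what force the non-adjacent hyperplanes to be sums of the adjacent ones with equal positive coefficients, i.e., what substitutes for the missing scoring structure. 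If you want to complete your proof along your own lines, this is the piece you must supply; as written, the argument rests on an unproven conjecture.
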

}

Neutrality is a standard symmetry condition.
The reinforcement axiom is a consistency or convexity axiom, which says that if a ranking $\rhd$ is selected at two different profiles $R_1$ and $R_2$, i.e., ${\rhd} \in f(R_1) \cap f(R_2)$, then $\rhd$ is also selected for all convex combinations $\lambda R_1 + (1-\lambda) R_2$ of the two profiles ($\lambda \in (0,1)$), and that in addition $f(\lambda R_1 + (1-\lambda) R_2) = f(R_1) \cap f(R_2)$. This is a classic axiom that has been used in many characterizations in social choice \citep[e.g.,][]{You-1975-ScoringFunctions,Fish78dAVcharacterization,Myer95bSingleWinnerScoring,LaSk21aABCScoring}.

Our proof operates within the space $\mathbb{Q}^{m!}$ of (generalized) profiles and uses reinforcement in a standard way to obtain separating hyperplanes between the regions of profiles where some particular output ranking is selected. However, unlike \citet{YouLev-1978-KemenyAxioms}, we cannot pass to a lower-dimensional space of majority margins. Instead, we characterize the hyperplanes by using 2RP to construct profiles where the rule chooses rankings that form a single-crossing path, which allows us to deduce that the hyperplanes encode the Squared Kemeny rule.

\subsection*{General Proportionality Guarantee}

The 2RP axiom applies only to profiles that contain two different rankings. Can we say anything similar about Squared Kemeny in the general case? 
Inspired by the literature on proportionality in multi-winner voting \citep{lackner2023multi}, we will consider groups of rankings and will bound the maximum distance of the output ranking to the group as a function of the group's size.

\begin{figure}[th]
	\centering
	\begin{subfigure}{0.47\linewidth}
		\begin{tikzpicture}
			\begin{axis}[
				width=7cm,
				height=4.5cm,
legend style={
					cells={anchor=west},
					legend pos=south west,
					draw=none,
					font=\small
				},
				]
				\addplot [red, domain=0:1, samples=100] {x<0.5};
\legend{Actual}
			\end{axis}
			\node at (0.05,-0.25) {$\alpha = $};
			\node[rotate=90, transform shape, scale=0.83] at (-0.85, 1.4) {normalized swap distance};
		\end{tikzpicture}
		\caption{Kemeny}
	\end{subfigure}
	\quad
	\begin{subfigure}{0.47\linewidth}
		\begin{tikzpicture}
			\begin{axis}[
				width=7cm,
				height=4.5cm,
legend style={
					cells={anchor=west},
					legend pos=south west,
					draw=none,
					font=\small
				},
				]
\addplot [blue, domain=0:1, samples=200] {min(1,
					sqrt((1-x)/x))
					)};
\addplot [red,
				] coordinates {
					(0.0,1.0) (0.02,1.0) (0.04,1.0) (0.06,1.0) (0.08,1.0) (0.1,1.0) (0.12,1.0) (0.14,1.0) (0.16,1.0) (0.18,0.933333) (0.2,0.866667) (0.22,0.866667) (0.24,0.8) (0.26,0.8) (0.28,0.733333) (0.3,0.733333) (0.32,0.733333) (0.34,0.666667) (0.36,0.666667) (0.38,0.6) (0.4,0.6) (0.42,0.6) (0.44,0.533333) (0.46,0.533333) (0.48,0.533333) (0.5,0.533333) (0.52,0.466667) (0.54,0.466667) (0.56,0.466667) (0.58,0.4) (0.6,0.4) (0.62,0.4) (0.64,0.333333) (0.66,0.333333) (0.68,0.333333) (0.7,0.333333) (0.72,0.266667) (0.74,0.266667) (0.76,0.266667) (0.78,0.2) (0.8,0.2) (0.82,0.2) (0.84,0.133333) (0.86,0.133333) (0.88,0.133333) (0.9,0.133333) (0.92,0.0666667) (0.94,0.0666667) (0.96,0.0666667) (0.98,0.0) (1.0,0.0)
				};
\legend{Upper bound,Actual}
			\end{axis}
			\node at (0.05,-0.25) {$\alpha = $};
		\end{tikzpicture}
		\caption{Squared Kemeny}
	\end{subfigure}
	\vspace{-7pt}
	\caption{The maximum swap distance (normalized to $[0, 1]$) between the output of the Kemeny or Squared Kemeny rule to an input ranking, as a function of the weight $\alpha$ of the input ranking, for $m = 6$ alternatives.}
	\label{fig:intro-alpha-curve}
\end{figure}
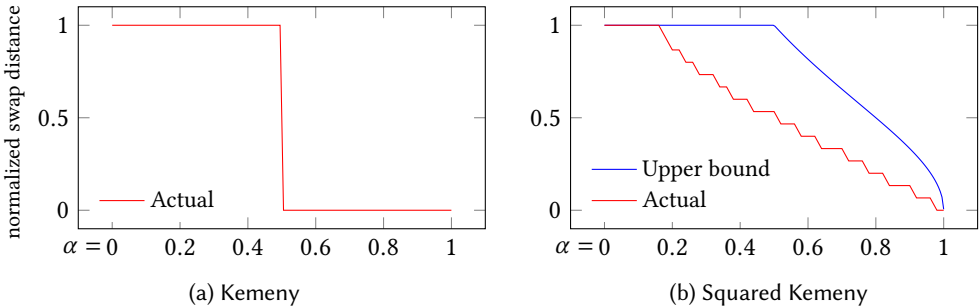

Our first question asks how large the swap distance can be between an input ranking $\succ$ with weight $\alpha \in [0,1]$ and the output ranking $\rhd$ of some rule $f$, as a function of $\alpha$. For Kemeny, this is easy to specify: for $\alpha < \frac12$, Kemeny might output the reverse ranking of $\succ$ (if this ranking has weight more than $\frac12$), and so the distance can be as large as $\binom{m}{2} = \frac{m(m-1)}{2}$ (the highest possible swap distance). When $\alpha > \frac12$, the distance is guaranteed to be 0 by the Condorcet property of Kemeny.

For Squared Kemeny, we expect the bound to be smoother, giving some guaranteed influence to rankings with weights below $\frac12$. This is indeed the case. We can compute the exact worst-case bound provided by Squared Kemeny for fixed small numbers $m$ of alternatives (see \Cref{fig:intro-alpha-curve}), and we can see that it is approximately linear in $\alpha$, except for small $\alpha$. We can also prove a theoretical upper bound that holds for all $m$: the maximum distance of the Squared Kemeny output to an input ranking with weight $\alpha$ is at most $\sqrt{(1-\alpha)/\alpha} \cdot \binom{m}{2}$ (\Cref{thm:prop-guarantee-one-ranking}). This bound implies that Squared Kemeny will never output the reverse ranking of an input ranking that has weight more than $\frac12$. For large $m$, we prove another bound that shows this even for rankings with weight more than $\frac14$.

Our second question is about giving a similar type of guarantee not to a single ranking, but to a group of rankings. For example, there could be many  similar rankings in the profile that each have a small weight, but which collectively have a significant weight. We want to show that the Squared Kemeny outcome cannot be too far away from those rankings, on average. \Cref{thm:prop-guarantee} establishes a bound that applies to all groups of rankings with total weight $\alpha$ (whether these rankings are similar or not), guaranteeing that the Squared Kemeny output has an average distance of at most $\smash{\sqrt{1/(4\alpha)} \cdot \binom{m}{2} + o(m^{1.5})}$ to the rankings in the group, where the lower-order term vanishes quickly.

\subsection*{Empirical Analysis}

We complement our theoretical analysis with results from simulations to better understand how the Squared Kemeny rule compares to the Kemeny rule. Since we need to compute the outcomes of the rules, we discuss their computational complexity in \Cref{sec:computation}. In \Cref{sec:experiments:city}, we then perform a detailed analysis of an example of using the rank aggregation rules to rank cities according to a mixture of three criteria (GDP per capita, air quality, and sunniness).

\begin{wrapfigure}[13]{r}{0.23\textwidth}
	\begin{tikzpicture}
		[euclid/.style={inner sep=0pt, draw=black!40}]
		\node[euclid] at (0,0) {\includegraphics[width=\linewidth]{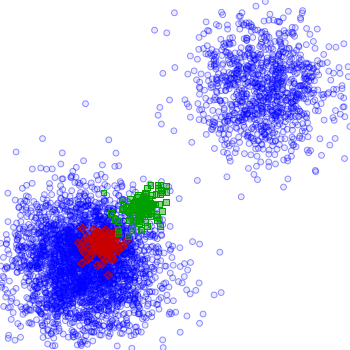}};
	\end{tikzpicture}
	\caption{Euclidean profiles with Kemeny (red) and Squared Kemeny (green) ranking locations shown}
	\label{fig:intro:euclidean}
\end{wrapfigure}
Next, we analyze in \Cref{sec:embeddings} the rankings chosen by Kemeny and Squared Kemeny on random data. For example, we sample Euclidean profiles, where criteria and alternatives correspond to points in 2D space, and rankings are induced by sorting the alternatives by their distance to each criterion. 
In more detail, for the example in \Cref{fig:intro:euclidean}, we sampled 100 profiles with 40 rankings each, where 75\% of the input rankings come from a Gaussian in the lower left corner and the other 25\% from a Gaussian in the upper right corner. We then computed the output rankings of the Kemeny rule (red diamonds) and of the Squared Kemeny rule (green squares), and embed these rankings in the same Euclidean space. We observe in \Cref{fig:intro:euclidean} that the Kemeny rule is located within the larger of the two voter clusters, while the Squared Kemeny rule interpolates between the two clusters.

Finally, in \Cref{sec:experiments:group-distance} we revisit our quantitative worst-case proportionality bounds from an average-case perspective, and experimentally investigate the distance between an output ranking and a group of input rankings, as a function of the total group weight.
The results confirm our theoretical predictions: the Kemeny rule ensures only that large groups are satisfied, while the Squared Kemeny rule caters to all group sizes.

\subsection*{Applications of Proportional Rank Aggregation}

Hotel booking websites are just one example where it makes sense to give users fine-grained control over how to sort items and where proportional rank aggregation methods such as Squared Kemeny are desirable. Further examples are lists of products in e-commerce (ranking by cost, rating, delivery time, etc.), newsfeeds of social networks, and database display applications in general.

There are also less technical applications, such as producing university rankings. These rankings are usually a result of aggregating rankings for several criteria (such as student satisfaction, \% of students employed after graduating, research output). These rankings could be weighted and then be used to produce an aggregate ranking via Squared Kemeny as all criteria should be taken into account. Similarly, one could produce rankings of cities by livability or suitability for remote work.

In all previous examples, the input rankings are \emph{criteria} (which tend to be objective) and the whole setup is essentially single-agent. However, there are also compelling multi-agent applications, where we can think of the input rankings as \emph{votes}.
An example might be a university hiring committee needing to rank applicants. In such a scenario, each committee member can provide their personal ranking. The output should be a ranking instead of a single winning candidate, because we do not know which candidates will accept the job offer. Proportionality may be desirable in this context to ensure that the output ranking reflects the diverse interests of the university department. Other multi-agent examples are groups of friends wanting to produce rankings of favorite music, restaurants, or travel destinations, a context in which a majoritarian method seems out of place.

In our example applications based on criteria, one may object that many criteria are numerical in nature (e.g., hotel price and average user rating). Using only the induced ranking throws away information, and taking a weighted average of the underlying numbers may produce a better result. The advantage of the rank aggregation approach is that it does not require the aggregator to decide on how to normalize the numerical values of different criteria. Normalization can be a difficult task without principled solutions -- consider that hotel prices are in the hundreds while user ratings are between 0 and 10, and that for prices, lower is better, while for ratings, higher is better. 
Rank aggregation sidesteps these problems, and it is also more robust to outlier values.

\section{The Model}

Let $A$ be a finite set of $m\geq 2$ alternatives.
A \emph{ranking} $\succ$ is
a complete, transitive, and anti-symmetric binary relation on $A$.
We denote by  $\mathcal{R}$ the set of all rankings on $A$.
In this paper, we study the problem of aggregating rankings into a collective ranking.
To this end, we define a \emph{(ranking) profile} $R$
to be a function from $\mathcal{R}$ to weights in $[0,1]\cap \mathbb{Q}$
such that $\sum_{{\succ}\in\mathcal{R}} R({\succ})=1$.
More intuitively, a ranking profile specifies for each ranking a weight;
these weights may, e.g., arise from a user assigning importance to different criteria
(as in the hotel example in the introduction)
or represent the fraction of voters who report a ranking in an election.
The restriction that the weights are rational
is to ensure compatibility with electorate settings (where discrete voters report preference relations) and does not affect our results.
We denote the set of all rankings profiles $R$ by $\mathcal{R}^*$. 
For a profile $R$, we write $\supp(R) = \{{\succ} \in \mathcal{R} : R(\succ) > 0\}$ for the set of rankings with positive weight.

Given a profile $R$, we want to derive an aggregate ranking.
A \emph{social preference function (SPF)}
does this, being a function $f : \mathcal{R}^* \rightarrow 2^\mathcal{R}\setminus \{\emptyset\}$
that for every profile $R\in\mathcal{R}^*$ returns
a non-empty set of chosen rankings $f(R) \subseteq \mathcal{R}$.
To help distinguish between input and output rankings,
we will follow a convention of denoting the input rankings by $\succ$
and the output rankings by $\ssucc$. 
For two rankings $\succ, \rhd \in \mathcal{R}$, their \emph{swap distance}
is the number of pairs of alternatives
which they order differently, i.e., $\swap(\succ, \rhd) = |\{a, b \in A : a \succ b \text{ and } b \rhd a\}|$.

We focus on two SPFs in this paper.
The \emph{Kemeny} rule
is defined as the set of rankings minimizing the (weighted) average swap distance between the input rankings and the output ranking, i.e.,
$\text{Kemeny}(R) = \arg \min_{\rhd \in \mathcal{R}} \sum_{{\succ}\in\mathcal{R}} R({\succ}) \cdot \swap({\succ}, \ssucc)$.
The \emph{Squared Kemeny}
rule is defined analogously, but with the swap distance squared, i.e.,
$\sqK(R) = \arg \min_{\rhd \in \mathcal{R}} \sum_{{\succ}\in\mathcal{R}} R({\succ}) \cdot \swap({\succ}, \ssucc)^2$.
For a ranking $\rhd$, we let $C_\sqK(R, \rhd) = \sum_{{\succ}\in\mathcal{R}} R({\succ}) \cdot \swap({\succ}, \ssucc)^2$ be its \emph{Squared Kemeny cost} in $R$.

\section{Axiomatic Analysis}
\label{sec:axioms}

We begin by analyzing the Squared Kemeny rule from an axiomatic perspective. First, we demonstrate in \Cref{subsec:proportional} that this rule indeed behaves like an average for special profiles (profiles where only two rankings have positive weight as well as single-crossing profiles). This means that the Squared Kemeny rule is proportional on these profiles, and we use this insight to characterize this SPF in \Cref{subsec:characterization}. Finally, we also consider standard properties such as efficiency, participation, and strategyproofness, and check which of them are satisfied by the Squared Kemeny rule in \Cref{subsec:moreaxioms}.

\subsection{2-Rankings-Proportionality and Single-Crossing Profiles}
\label{subsec:proportional}

We start by showing that the Squared Kemeny rule is proportional in some special cases by defining two properties that formalize what it means to ``proportionally'' aggregate rankings on important classes of profiles. Our first property concerns cases where the rule must ``average'' two rankings with specified weights, just like in the introduction's hotel example. It requires that the output ranking must agree with the input rankings on a proportional number of pairwise comparisons.

\paragraph{2-Rankings-Proportionality.} An SPF $f$ satisfies  \emph{2-Rankings-Proportionality} (2RP) if, for all profiles $R$ with $\supp(R)=\{{\succ_1},{\succ_2}\}$ for two rankings $\succ_1$ and $\succ_2$ with $d=\swap({\succ_1}, {\succ_2})$, it holds that 
\begin{align*}
	f(R) &= 
	\{\ssucc \in \mathcal{R}\colon d - \swap({\succ_i}, \ssucc) \in \round(R(\succ_i) \cdot d) \text{ for $i \in \{1,2\}$}
	\}, \\
	\intertext{or equivalently,}
	f(R) &= 
	\{\ssucc \in \mathcal{R}\colon \swap({\succ_i}, \ssucc) \in \round((1-R(\succ_i)) \cdot d) \text{ for $i \in \{1,2\}$}
	\},
\end{align*}
where $\round(z)$ denotes the set of closest integers to $z$.\footnote{For $k\in \mathbb{Z}$, $\round(k+x)=\{k\}$ if $x\in [0, 0.5)$, $\round(k+0.5)=\{k, k+1\}$, and $\round(k+x)=\{k+1\}$ if $x\in (0.5,1]$.}
Less formally, this means that the higher the weight of $\succ_1$ (resp. $\succ_2$), the closer the output rankings are to $\succ_1$ (resp. $\succ_2$).  
\medskip

For example, suppose that $\swap(\succ_1, \succ_2) = 10$, so the two rankings in $R$ disagree on 10 pairwise comparisons and that $R(\succ_1) = 30\%$. Then the output ranking $\rhd$ should agree with $\succ_1$ on $30\% \cdot 10 = 3$ of those disagreement pairs, and disagree with $\succ_1$ on $(1 - 30\%) \cdot 10 = 7$ of the disagreement pairs.

The Squared Kemeny rule satisfies 2RP, and in fact it satisfies a stronger property about \emph{single-crossing} profiles.
A sequence of rankings $\succ_0, \dots, \succ_n \in \mathcal{R}$ is called \emph{single-crossing}
if for every pair of alternative $a, b \in A$ with $a \succ_0 b$,
\[
\text{there exists $i \in \{0, \dots, n\}$ such that }
a \succ_0 b, \dots, a \succ_i b \text{ and } b \succ_{i+1} a, \dots, b \succ_n a.
\]
Thus, scanning the rankings from left to right, the relative positions of every pair of alternatives cross at most once. We say that a sequence is \emph{maximal single-crossing} if every pair of alternatives crosses exactly once (which implies that $n = \binom{m}{2}$ and that $\succ_0$ and $\succ_n$ are reverse rankings). 
As an example, the rankings shown in \Cref{fig:single-crossing} form a maximal single-crossing sequence.

\begin{wrapfigure}{r}{0.48\textwidth}
	\centering
	\includegraphics[width=\linewidth]{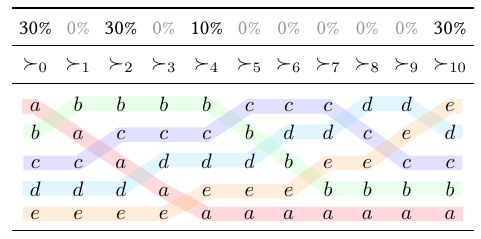}
	\vspace{-18pt}
	\caption{A single-crossing profile, on which Squared Kemeny outputs $\succ_4$ (the average of the voter locations) and Kemeny outputs $\succ_2$ (the median).}
	\label{fig:single-crossing}
\end{wrapfigure}

We say that a \emph{profile} $R$ is \emph{single-crossing} if the rankings in $\supp(R)$ can be arranged in a single-crossing sequence.
On single-crossing profiles, there is a natural definition of what it means to be an average, because each input ranking is associated with a location $0, \dots, n$ in a one-dimensional space; thus the output ranking should be at the weighted average of these locations. For example, the profile in \Cref{fig:single-crossing} has 30\% of the voters each in locations 0, 2, and 10, with the remaining 10\% in location 4. This gives an average location of $0.3 \cdot (0 + 2 + 10) + 0.1 \cdot 4 = 4$. Thus, the ``average'' ranking for this profile is $\succ_4$, and indeed this is the output ranking of Squared Kemeny. In contrast, the Kemeny rule takes the median location, which is location 2, and so $\succ_2$ is the Kemeny output.\footnote{This is known as the \emph{representative voter theorem} \citep{rothstein1991representative} which states that the majority relation of a single-crossing profile (and hence the Kemeny ranking) coincides with the input ranking of the median voter.}

To formalize this behavior, let us say that $R$ is \emph{compatible} with a maximal single-crossing sequence $\succ_0, \dots, \succ_n$ if $\supp(R) \subseteq \{\succ_0, \dots, \succ_n\}$. We can now state an axiom specifying what it means to proportionally aggregate rankings on a single-crossing profile.

\paragraph{Single-Crossing-Proportionality.} An SPF $f$ satisfies \emph{Single-Crossing-Proportionality} if for every single-crossing profile $R$ we have that a ranking $\rhd$ is in $f(R)$ if and only if there exists a maximal single-crossing sequence $\succ_0, \dots, \succ_n$ compatible with $R$ and ${\rhd} = {\succ_i}$ for $i \in \round(\sum_{j=1}^n R(\succ_j) \cdot j)$.

\medskip
It is straightforward to check that Single-Crossing-Proportionality implies 2RP, because any profile in which only 2 rankings occur is single-crossing.

\begin{theorem}
	\label{thm:sqk-single-crossing}
	The Squared Kemeny rule satisfies Single-Crossing-Proportionality and 2RP.
\end{theorem}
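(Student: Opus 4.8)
Since the excerpt already notes that Single-Crossing-Proportionality implies 2RP, the plan is to prove only the former. Fix a single-crossing profile $R$, write $\supp(R)=\{\rho_0,\dots,\rho_\ell\}$ in single-crossing order, and set $\delta_t:=\swap(\rho_0,\rho_t)$, $d:=\delta_\ell$, $\nu:=\sum_t R(\rho_t)\delta_t$, and $g(r):=\sum_t R(\rho_t)(\delta_t-r)^2$ for $r\in\mathbb Z$. Expanding the square, $g(r)=(r-\nu)^2+\text{const}$, so $g$ is convex and $\arg\min_{0\le r\le d}g(r)=\round(\nu)$, which lies in $\{0,\dots,d\}$ because $0\le\nu\le d$. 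First I would record the standard facts about single-crossing sequences that the argument uses: in a maximal single-crossing sequence $\succ_0,\dots,\succ_n$ one has $\swap(\succ_i,\succ_j)=|i-j|$ (each of the $n=\binom m2$ pairs of alternatives crosses exactly one gap, and $\succ_i,\succ_j$ disagree precisely on the pairs that cross strictly between positions $i$ and $j$); the inversion sets along a single-crossing sequence, taken relative to any fixed ranking, form a chain, and conversely every chain of inversion sets comes from a single-crossing sequence; and, using gradedness of the weak Bruhat order, every single-crossing sequence extends to a maximal one. Using $\rho_0$ as reference, the sets $J_t:=\mathrm{Inv}_{\rho_0}(\rho_t)$ are nested, $\emptyset=J_0\subseteq\dots\subseteq J_\ell=X$, where $X$ is the set of crossing pairs on which $\rho_0$ and $\rho_\ell$ disagree; for $p\in X$ let $g_p$ be the gap at which $p$ crosses, so $J_t=\{p\in X:g_p\le t\}$, and let $F$ be the complement of $X$, i.e.\ the pairs on which all of $\supp(R)$ agree.

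The heart of the proof is a term-by-term bound on $C_\sqK(R,\cdot)$. For an arbitrary ranking $\rhd$, the number of pairs on which $\rho_t$ and $\rhd$ disagree is $b$ plus the number of crossing pairs on which exactly one of $\rho_t,\rhd$ agrees with $\rho_0$, where $b:=|\mathrm{Inv}_{\rho_0}(\rhd)\cap F|$ is independent of $t$. With $K:=|\mathrm{Inv}_{\rho_0}(\rhd)\cap X|\in\{0,\dots,d\}$ and $P_t:=|J_t\cap\mathrm{Inv}_{\rho_0}(\rhd)|$, that second count equals $\delta_t+K-2P_t$, and the bounds $0\le P_t\le\min(\delta_t,K)$ give $\delta_t+K-2P_t\ge|\delta_t-K|\ge0$. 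Hence $\swap(\rho_t,\rhd)\ge|\delta_t-K|$ for every $t$, so $C_\sqK(R,\rhd)\ge\sum_t R(\rho_t)(\delta_t-K)^2=g(K)\ge\min_r g(r)$. Equality throughout forces $b=0$, $P_t=\min(\delta_t,K)$ for all $t$, and $K\in\round(\nu)$; since $P_t=\min(|J_t|,|\mathrm{Inv}_{\rho_0}(\rhd)|)$ means $J_t$ and $\mathrm{Inv}_{\rho_0}(\rhd)$ are nested, these conditions say exactly that $\mathrm{Inv}_{\rho_0}(\rhd)$ is comparable to every $J_t$, i.e.\ that $\supp(R)\cup\{\rhd\}$ is single-crossing with $\rhd$ lying between $\rho_0$ and $\rho_\ell$ at offset $\swap(\rho_0,\rhd)=K$. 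Conversely, any such ranking attains $C_\sqK(R,\rhd)=g(K)$, and simple staircase rankings realize every offset $0,\dots,d$. Therefore $\min_\rhd C_\sqK(R,\rhd)=\min_r g(r)$, and $\sqK(R)$ is precisely the set of rankings $\rhd$ with $\supp(R)\cup\{\rhd\}$ single-crossing, $\rhd$ between $\rho_0$ and $\rho_\ell$, and $\swap(\rho_0,\rhd)\in\round(\nu)$.

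It then remains to recast this description in the language of the axiom. Given $\rhd\in\sqK(R)$, I would extend the single-crossing sequence on $\supp(R)\cup\{\rhd\}$ to a maximal single-crossing sequence $\succ_0,\dots,\succ_n$ compatible with $R$ (ordered so that $\rho_0$ precedes $\rho_\ell$). If $\rho_0$ sits at position $q$, then $\rho_t$ sits at position $q+\delta_t$ and $\rhd$ at position $q+\swap(\rho_0,\rhd)$, so $\sum_j R(\succ_j)\,j=q+\nu$ and $q+\swap(\rho_0,\rhd)\in\round(q+\nu)$, exactly as the axiom demands. Conversely, if $\rhd=\succ_i$ with $i\in\round(\sum_j R(\succ_j)\,j)$ for some compatible maximal sequence, then reading positions off the sequence shows $\swap(\rho_0,\rhd)\in\round(\nu)$ and that $\rhd$ lies between $\rho_0$ and $\rho_\ell$, hence $\rhd\in\sqK(R)$. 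This proves Single-Crossing-Proportionality, and 2RP follows as the special case of two-ranking profiles, which are single-crossing.

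I expect the term-by-term bound to be the main obstacle: the crucial observation is the identity $\swap(\rho_t,\rhd)=b+\delta_t+K-2P_t$ with $0\le P_t\le\min(\delta_t,K)$, which is what makes the estimate $\swap(\rho_t,\rhd)\ge|\delta_t-K|$ hold for each $t$ separately and, simultaneously, pins the equality case down to precisely the single-crossing rankings at offset $\round(\nu)$. The remaining work is bookkeeping: the structural lemmas on single-crossing sequences (especially extending an arbitrary such sequence to a maximal one) and the translation of the clean characterization into the $\round$-of-position wording of the axiom.
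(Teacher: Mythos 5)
Your proof is correct and follows the same overall architecture as the paper's: bound each term $\swap(\rho_t,\rhd)$ from below by $|\delta_t-K|$, show that equality for every $t$ in the support forces $\rhd$ onto a compatible maximal single-crossing sequence between $\rho_0$ and $\rho_\ell$, and then minimize the convex quadratic $g(r)=(r-\nu)^2+\mathrm{const}$ over integer positions to land on $\round(\nu)$. The genuine difference is how the key lemma is obtained. The paper gets the lower bound $\swap(\succ_i,\rhd)\ge|d-i|$ from the triangle inequality and then handles the equality case by citing an external result (Proposition~4.6 of Elkind, Lackner, and Peters) stating that additivity of swap distances implies single-crossing betweenness. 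You instead prove the bound and the equality characterization in one stroke from the inversion-set identity $\swap(\rho_t,\rhd)=b+\delta_t+K-2P_t$ with $0\le P_t\le\min(\delta_t,K)$: equality forces $b=0$ and nestedness of $\mathrm{Inv}_{\rho_0}(\rhd)$ with every $J_t$, which is exactly the betweenness statement. This makes your argument self-contained where the paper's is not, at the cost of explicitly carrying the standard structural facts about inversion-set chains and the extension of single-crossing sequences to maximal ones (facts the paper also relies on, but only implicitly). Your final translation into the $\round$-of-positions wording of the axiom, and your handling of independence from the chosen maximal sequence via the shift $\sum_j R(\succ_j)\,j=q+\nu$, match the paper's linear-reparametrization argument in substance.
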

\begin{proof}
Since Single-Crossing-Proportionality implies 2RP,
	it is sufficient to prove that Squared Kemeny satisfies the former.

	For this, we fix a single-crossing profile $R$ and an arbitrary maximal single-crossing sequence $\succ_0, \dots, \succ_n$ compatible with $R$.
	Also, let $\rhd \in \mathcal{R}$ be an arbitrary ranking and define $d = \swap(\succ_0, \rhd)$.
Because of the triangle inequality, we can now compute that 
	\begin{align}
	\notag
	\swap(\succ_i,\rhd) &\ge
	\swap(\succ_0, \rhd) - \swap(\succ_0, \succ_i) =
	d - i,
	\quad \mbox{for } i \in [d], \mbox{ and}\\
	\label[ineqs]{ineqs:single-crossing}
	\swap(\succ_i, \rhd) &\ge
	\swap(\succ_0, \succ_i) - \swap(\succ_0, \rhd) =
	i - d,
	\quad \mbox{for } i \in [n] \setminus [d].
	\end{align}
	We will first show that if $\rhd$ does not belong to any maximally single-crossing sequence compatible with $R$, then at least one of \Cref{ineqs:single-crossing} for $i \in [n]$ such that ${\succ_i} \in \supp(R)$ is strict.
	Assume otherwise, i.e.,
	$\swap(\succ_i,\rhd) = |d - i|$ for every $i \in [n]$ such that ${\succ_i} \in \supp(R)$.
	If $R(\succ_d) > 0$,
	this implies that $\rhd = {\succ_d}$,
	so $\rhd$ belongs to the maximal single-crossing sequence $\succ_0, \dots, \succ_n$.
	Thus, let us assume that $R(\succ_d) = 0$.
	Then, let $i \in \{0, \dots, d-1\}$ be maximal and $j \in \{d+1, \dots, n\}$ minimal such that ${\succ_i}, {\succ_j} \in \supp(R)$.
	By adding the respective equalities sidewise, we obtain that
	\[
		\swap(\succ_i,\rhd) + \swap(\succ_j,\rhd) =
		(d - i) + (j - d) =
		j - i = \swap(\succ_i, \succ_j).
	\]
	By \cite[Proposition 4.6]{ElkLacPet-2022-RestrictedDomains},
	this implies that $\succ_i$, $\rhd$, and $\succ_j$ form a single-crossing sequence
	and $\rhd$ is between $\succ_i$ and $\succ_j$.
	This means that for every pair of alternatives $x, y \in A$ such that $x \mathrel{\rhd} y$,
	it holds that $y \succ_i x$ implies $x \succ_j y$
	and $y \succ_j x$ implies $x \succ_i y$.
	Consequently, the sequence
	$\succ_0, \dots, \succ_i, \rhd, \succ_j, \dots, \succ_n$ is single-crossing,
	which contradicts the assumption
	that $\rhd$ does not belong to any maximally single-crossing sequence compatible with $R$.
	
	Therefore, if $\rhd$ does not belong to any maximally single-crossing sequence compatible with $R$,
	we can show that there will be a ranking in such set, i.e., $\succ_d$, for which
	$C_{\sqK}(R, \rhd) > C_{\sqK}(R, \succ_d)$.
	Indeed, from \Cref{ineqs:single-crossing} and the fact that one of them is strict, we get that
	\[
		C_{\sqK}(R, \rhd) =
		\textstyle \sum_{i= 0}^n
			R(\succ_i) \cdot \swap(\succ_i, \rhd)^2 >
		\textstyle \sum_{i= 0}^n
			R(\succ_i) \cdot (i - d)^2 =
		C_{\sqK}(R, \succ_d).
	\]
	Thus, $\rhd \not \in \sqK(R)$
	and we know that the only rankings selected by the Squared Kemeny rule belong to some maximally single-crossing sequence compatible with $R$.

	Then, take an arbitrary $\rhd \in \sqK(R)$ and again denote $d = \swap(\succ_0, \rhd)$.
	This means that
	\(
		C_{\sqK}(R, \rhd) = 
		\sum_{i = 0}^n
		R(\succ_{i}) \cdot (i - d)^2.
	\)
	Taking the derivative with respect to $d$, we get that $C_{\sqK}(R, \rhd)$ is minimized when
	$\sum_{i = 0}^n 2 R(\succ_{i}) \cdot (d - i) = 0$,
	which is equivalent to
	$\sum_{i = 0}^n R(\succ_{i}) \cdot i = d$.
	Since $d$ has to be an integer and a quadratic function grows symmetrically from its minimum
	(note that a convex combination of quadratic functions is still a quadratic function),
	the rankings
	$\succ_d$ for $d \in \round(\sum_{i = 0}^n R(\succ_{i}) \cdot i)$ have the lowest Squared Kemeny cost among the rankings in $\succ_0, \dots, \succ_n$.
	
	It remains to show that for $\succ_d$ selected in this way, the cost $C_{\sqK}(R, \succ_d)$
	is the same no matter which maximally single-crossing sequence we have chosen at the beginning.
	To this end,
	take two arbitrary maximally single-crossing sequences $\succ_0, \dots, \succ_n$ and
	$\succ'_0, \dots, \succ'_n$, both compatible with $R$.
Observe that there is a linear function
	$\ell(x) = ax + b$,
	with $a \in \{1, -1\}$ 
	such that
	${\succ_i} ={ \succ'_{\ell(i)}}$
	for every ${\succ_i} \in \supp(R)$.
	Then, by the linearity of the mean,
	if $\succ_d$ minimizes $C_{\sqK}(R,\succ_j)$
	among $\succ_0, \dots, \succ_n$,
	then $\succ'_{\ell(d)}$ minimizes $C_{\sqK}(R,\succ'_j)$
	among $\succ'_0, \dots, \succ'_n$.
	Furthermore, we have that
	\(
		C_{\sqK}(R,\succ_d) =
		\textstyle\sum_{i=0}^n R(\succ_i) \cdot (i - d)^2 =
		\textstyle\sum_{i=0}^n R(\succ_i) \cdot (\ell(i) - \ell(d))^2 =
		C_{\sqK}(R,\succ_{\ell(d)}),
	\)
	which concludes the proof.
\end{proof}

\subsection{Characterization of the Squared Kemeny Rule}
\label{subsec:characterization}

We will next present our characterization of the Squared Kemeny rule, which combines 2RP with three standard properties, namely neutrality, reinforcement, and continuity.

\paragraph{Neutrality.} Neutrality is a mild symmetry condition that precludes a rule from depending on the names of candidates. An SPF $f$ is \emph{neutral} if $f(\tau(R))=\{\tau({\ssucc})\colon {\ssucc}\in f(R)\}$ for all profiles $R\in\mathcal{R}^*$ and permutations $\tau:A\rightarrow A$. Here, we denote by ${\ssucc'}=\tau({\ssucc})$ the ranking defined by $\tau(x)\mathrel{\ssucc}'\tau(y)$ if and only if $x\mathrel{\ssucc} y$ for all $x,y\in A$ and $R'=\tau(R)$ is the profile defined by $R'(\tau({\succ}))=R({\succ})$ for all ${\succ}\in\mathcal{R}$.

\paragraph{Reinforcement.} Reinforcement is a classic axiom in social choice theory which describes that if some outcomes are chosen for two profiles, then precisely these common outcomes should be chosen in a convex combination of these profiles. An SPF $f$ satisfies \emph{reinforcement} if for all profiles $R,R'\in\mathcal{R}^*$ with $f(R)\cap f(R')\neq\emptyset$, we have $f(\lambda R+(1-\lambda) R')=f(R)\cap f(R')$ for all $\lambda\in (0,1)\cap \mathbb{Q}$.

\paragraph{Continuity.} Continuity requires that a group of rankings with sufficient weight can overrule any other set of rankings and thus determine the outcome. Formally, an SPF $f$ is \emph{continuous} if for all profiles $R,R'\in\mathcal{R}^*$, there is a scalar $\lambda\in(0,1)\cap \mathbb{Q}$ such that $f(\lambda R+(1-\lambda) R')\subseteq f(R)$.\medskip

The above three axioms that have been frequently used in social choice theory to characterize scoring rules in various contexts \citep[e.g.,][]{You-1975-ScoringFunctions,Myer95bSingleWinnerScoring,SFS19aCommitteeScoring,LaSk21aABCScoring,Lede23aSPFscoring}. Moreover, Kemeny's rule has been characterized as the unique SPF satisfying neutrality, reinforcement, and a Condorcet axiom \citep{YouLev-1978-KemenyAxioms}.

We can now state our characterization result.

\begin{restatable}{theorem}{characterization}\label{thm:characterization}    
An SPF satisfies neutrality, reinforcement, continuity, and 2RP if and only if it is the Squared Kemeny rule.
\end{restatable}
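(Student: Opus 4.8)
The plan is to prove the nontrivial (``only if'') direction by the standard two-stage scoring argument: first use reinforcement and continuity to show that $f$ must be a scoring rule, then use neutrality together with 2RP to identify its score function with (an affine image of) squared swap distance. The easy (``if'') direction is quick: Squared Kemeny is neutral because $\swap$ is invariant under renaming alternatives; it satisfies 2RP by \Cref{thm:sqk-single-crossing}, since Single-Crossing-Proportionality implies 2RP; and it satisfies reinforcement and continuity because $C_{\sqK}(\cdot,\rhd)$ is affine in the profile, so the common minimizers of $C_{\sqK}(R,\cdot)$ and $C_{\sqK}(R',\cdot)$ are exactly the minimizers of $C_{\sqK}(\lambda R+(1-\lambda)R',\cdot)$ lying in both, and for $\lambda$ close enough to $1$ the finitely many positive gaps in $C_{\sqK}(R,\cdot)$ dominate.

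For the hard direction, let $f$ satisfy the four axioms. First I would invoke the by-now-standard machinery \citep{You-1975-ScoringFunctions,Lede23aSPFscoring}: passing to the space $\mathbb{Q}^{m!}$ of generalized profiles, reinforcement and continuity imply that $f$ is a \emph{scoring SPF}, i.e.\ there are score vectors $(w^{\rhd})_{\rhd\in\mathcal{R}}\in\mathbb{Q}^{m!}$ with ${\rhd}\in f(R)$ iff $\sum_{{\succ}\in\mathcal{R}}R({\succ})\,w^{\rhd}_{\succ}\le\sum_{{\succ}\in\mathcal{R}}R({\succ})\,w^{\rhd'}_{\succ}$ for all $\rhd'\in\mathcal{R}$, and neutrality of $f$ lets us take the scores neutral: $w^{\tau(\rhd)}_{\tau(\succ)}=w^{\rhd}_{\succ}$ for all permutations $\tau$. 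The goal is to show the scores can be chosen to be $w^{\rhd}_{\succ}=\swap({\succ},\rhd)^2$; a scoring rule is unchanged by a global positive rescaling and by adding a constant, so this yields $\sum_{\succ}R({\succ})w^{\rhd}_{\succ}=\mathrm{const}+\mathrm{const}\cdot C_{\sqK}(R,\rhd)$ and hence $f=\sqK$. Moreover it suffices to pin down $w^{\rhd}_{\succ_0}$ for one fixed ranking $\succ_0$, because if $w^{\rhd}_{\succ_0}=g(\swap(\succ_0,\rhd))$ then neutrality (writing ${\succ}=\tau(\succ_0)$, using $w^{\rhd}_{\tau(\succ_0)}=w^{\tau^{-1}(\rhd)}_{\succ_0}$ and invariance of $\swap$) forces $w^{\rhd}_{\succ}=g(\swap({\succ},\rhd))$ for all ${\succ},\rhd$. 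Unlike the Kemeny characterization of \citet{YouLev-1978-KemenyAxioms}, we cannot project onto majority margins, since squared swap distances are not functions of the margin vector; the scores must be identified directly.

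The heart of the argument is extracting $g$ from 2RP using two-ranking profiles, in two steps. (i) \emph{$g$ is well-defined.} Fix $\succ_0$, let $n=\binom{m}{2}$, and let $\sigma_0$ be the reverse ranking of $\succ_0$, so that $\swap(\succ_0,\rhd)+\swap(\sigma_0,\rhd)=n$ for every $\rhd$. On the two-ranking profiles $R_\mu$ with $R_\mu(\succ_0)=\mu$ and $R_\mu(\sigma_0)=1-\mu$, 2RP gives $f(R_\mu)=\{\rhd:\swap(\succ_0,\rhd)=\ell\}$ whenever $(1-\mu)n\in(\ell-\tfrac12,\ell+\tfrac12)$; but $f(R_\mu)$ is also the $\arg\min$ over $\rhd$ of the affine maps $L_{\rhd}(\mu)=\mu\,w^{\rhd}_{\succ_0}+(1-\mu)\,w^{\rhd}_{\sigma_0}$, so on a nonempty open rational interval all $\rhd$ with $\swap(\succ_0,\rhd)=\ell$ share the same (affine) map $L_{\rhd}$, hence the same $w^{\rhd}_{\succ_0}=:g(\ell)$; by neutrality (the reversal is a permutation sending $\succ_0$ to $\sigma_0$) also $w^{\rhd}_{\sigma_0}=g(n-\ell)$. (ii) \emph{$g$ is quadratic.} For each $d\le n$ take a ranking $\sigma_d$ at swap distance $d$ from $\succ_0$ lying on a single-crossing sequence $\succ_0=\rho_0,\dots,\rho_d=\sigma_d$, so $\swap(\succ_0,\rho_k)=k$. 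On the edge $\{\succ_0,\sigma_d\}$, 2RP makes $\rho_k$ the unique $\arg\min$ among $L_{\rho_0},\dots,L_{\rho_d}$ for $(1-\mu)d\in(k-\tfrac12,k+\tfrac12)$, so the envelope pieces belonging to $\rho_k$ and $\rho_{k-1}$ cross at $\mu=\tfrac{2(d-k)+1}{2d}$; substituting $w^{\rho_k}_{\succ_0}=g(k)$, $w^{\rho_k}_{\sigma_d}=g(d-k)$ and writing $\delta(j)=g(j)-g(j-1)$, equality of the two pieces at that point gives
\[
(2(d-k)+1)\,\delta(k)=(2k-1)\,\delta(d-k+1)\qquad(1\le k\le d).
\]
Taking $k=1$ yields $\delta(d)=(2d-1)\,\delta(1)$ for all $d\le n$, hence $g(\ell)=g(0)+\delta(1)\sum_{j=1}^{\ell}(2j-1)=g(0)+\delta(1)\,\ell^2$; and $\delta(1)>0$, since $\delta(1)=0$ would make $f(R)=\mathcal{R}$ always and $\delta(1)<0$ would make $\sigma_0$ (not $\succ_0$) win on $\{\succ_0,\sigma_0\}$ near $\mu=1$, both contradicting 2RP. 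Rescaling so that $\delta(1)=1$ gives $w^{\rhd}_{\succ_0}=g(0)+\swap(\succ_0,\rhd)^2$, and the reduction above yields $f=\sqK$.

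I expect the delicate part to be step (ii): a single edge — even the longest one $\{\succ_0,\sigma_0\}$ — determines the scores only up to the scoring-rule normalizations together with the ``shape'' of one concave piecewise-linear envelope, so one genuinely needs to range over edges of \emph{every} length $d$ and read off the $k=1$ crossing equation to force the successive differences $\delta(d)$ to grow linearly, i.e.\ to force $g$ to be quadratic. The other technical burden is the first step (the reinforcement/continuity argument in generalized-profile space), which is standard but must be carried out without the majority-margin reduction available to \citet{YouLev-1978-KemenyAxioms}; there one also checks the routine facts that the relevant profiles have rational weights lying densely in the stated intervals and that two affine functions agreeing on such a dense set coincide.
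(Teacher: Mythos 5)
There is a genuine gap, and it sits at the very first step of your ``only if'' direction. You propose to ``invoke the by-now-standard machinery'' to conclude from reinforcement, continuity, and neutrality alone that $f$ is a scoring SPF, i.e.\ that there exist score vectors $w^{\rhd}$ with ${\rhd}\in f(R)$ iff $\sum_{\succ}R(\succ)\,w^{\rhd}_{\succ}$ is minimal. For SPFs whose output is a \emph{ranking}, this is not a known theorem: it is precisely the open conjecture of \citet{ConRogXia-2009-ScoringRankings}, which the paper explicitly flags as unresolved. The result of \citet{You-1975-ScoringFunctions} you lean on is for single-winner social choice functions and does not transfer; what reinforcement and continuity give you here is only, for each pair of output rankings $\rhd,\rhd'$, a separating hyperplane between the (closures of the) regions where each is selected. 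To get a scoring representation you additionally need these $\binom{m!}{2}$ hyperplane normals to cohere into differences $w^{\rhd'}-w^{\rhd}$ of a single global score family (a cocycle/consistency condition), and that is where the real work lies. The paper cannot and does not derive this from neutrality, reinforcement, and continuity; it uses 2RP itself to establish the coherence --- constructing, for every maximal single-crossing swap sequence, a profile on which the rule selects exactly that sequence, proving linear independence of the adjacent-pair normals along the sequence, and then showing that the normal separating the two endpoints is a linear combination of the adjacent-pair normals with equal positive coefficients. So the clean two-stage decomposition you propose (first ``scoring'' from three axioms, then ``which score'' from 2RP) is not available; 2RP has to enter already in stage one.

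The second half of your argument --- pinning down the score function on two-ranking profiles --- is essentially sound and closely parallels the paper's Theorem~\ref{thm:welfarist} (which proves exactly your stage two, under the explicit hypothesis that $f$ is a ranking scoring function). Your route via the crossing points of the piecewise-linear envelope on edges $\{\succ_0,\sigma_d\}$ of every length $d$, yielding $(2(d-k)+1)\,\delta(k)=(2k-1)\,\delta(d-k+1)$ and hence $\delta(d)=(2d-1)\delta(1)$, is a correct and slightly different packaging of the paper's Step~1 computation with the profile $R(\succ)=\nicefrac{1}{2d}$, $R(\rhd_1)=1-\nicefrac{1}{2d}$; the use of neutrality to propagate $g$ from the fixed first argument $\succ_0$ to all first arguments is also fine (the paper's welfarist proof instead avoids neutrality by chaining adjacent pairs). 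But as written, your proof establishes only the analogue of Theorem~\ref{thm:welfarist}, not Theorem~\ref{thm:characterization}.
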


Like other reinforcement-based characterizations of SPFs, our proof is quite involved, and thus we defer it to \Cref{app:proof-characterization}. Using well-known techniques, the proof uses reinforcement to divide the space of profiles into convex regions where a particular ranking is chosen by the SPF, and then uses 2RP in combination with the other axioms to characterize the boundaries (hyperplanes) of these regions. As for the independence of the axioms, we do not know if neutrality or continuity can be dropped from the characterization. Without 2RP, the Kemeny rule satisfies the remaining axioms. 
Without reinforcement, the rule that agrees with the Squared Kemeny rule on all profiles in which two rankings jointly have more than 90\% of the weight, and returns the set of all rankings $\mathcal{R}$ for all other profiles, satisfies the remaining axioms. 

To give more insights into the proof of \Cref{thm:characterization}, we will show a weaker statement that is still of interest. To this end, we introduce the family of \emph{ranking scoring functions}. 
These are analogues of scoring rules in voting and are defined based on a \emph{cost function} $c \colon \mathcal{R} \times \mathcal{R} \to \mathbb{R}$ that assigns to each pair of rankings ${\succ}, \rhd \in \mathcal{R}$ a cost $c({\succ}, \rhd)$. Intuitively, we interpret the term $c({\succ},\rhd)$ as the disutility that the outcome ranking $\rhd$ would give to a voter with a preference order $\succ$.
The ranking scoring function $f_c$ based on $c$ returns the rankings with minimal total cost: $f_c(R) = \arg \textstyle\min_{\rhd \in \mathcal{R}} \sum_{\succ \in \mathcal{R}} R(\succ) \cdot c(\succ, \rhd)$ for every profile $R$. For example, $f_{\swap}$ is the Kemeny rule and $f_{\swap^2}$ is the Squared Kemeny rule.

It is straightforward to check that every ranking scoring function satisfies reinforcement and continuity. The class of (neutral) ranking scoring functions was introduced by \citet{ConRogXia-2009-ScoringRankings}, who conjectured that this class is in fact characterized by neutrality, reinforcement, and continuity.

We now give a proof that the only ranking scoring function that satisfies the 2RP axiom is the Squared Kemeny rule. The proof uses a very similar strategy to the proof of our full axiomatic characterization (\Cref{thm:characterization}) but avoids some of its overhead. Note that this version of the characterization does not require neutrality.

\begin{theorem}
	\label{thm:welfarist}
	A ranking scoring function satisfies 2RP if and only if it is the Squared Kemeny rule.
\end{theorem}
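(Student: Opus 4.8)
The plan is to show that a ranking scoring function $f_c$ satisfying 2RP must have a cost function $c$ that, for all profiles, induces the same minimizer as $\swap^2$. First I would reduce to a canonical form for $c$. Since $f_c$ depends only on the relative differences $c(\succ,\rhd) - c(\succ,\rhd')$ and on the linear structure of the objective, I would argue it suffices to pin down, for each fixed input ranking $\succ$, how $c(\succ,\cdot)$ behaves as a function of $\swap(\succ,\cdot)$. Concretely, I would work with two-ranking profiles $R$ with $\supp(R) = \{\succ_1,\succ_2\}$ lying on a maximal single-crossing sequence $\succ_0,\dots,\succ_n$, so that $\succ_1 = \succ_0$ and $\succ_2 = \succ_k$ for some $k$, and study which intermediate rankings $\succ_d$ are selected.

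The key step is to translate 2RP into a statement about the function $d \mapsto c(\succ_0,\succ_d) + c(\succ_k,\succ_d)$ restricted to $d \in \{0,\dots,k\}$ (for output rankings on the path; rankings off the path can be handled separately, analogously to the proof of \Cref{thm:sqk-single-crossing}, noting that 2RP fixes $f(R)$ to consist only of on-path rankings when the weights are generic). Writing $g(d) = c(\succ_0,\succ_d)$ and using neutrality-free symmetry only through the single-crossing structure, a profile putting weight $\lambda$ on $\succ_0$ and $1-\lambda$ on $\succ_k$ has cost $\lambda g(d) + (1-\lambda) \tilde g(k-d)$ at $\succ_d$, where $\tilde g$ is the analogous function measured from $\succ_k$. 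The 2RP axiom says the argmin over $d$ is exactly $\round((1-\lambda)k)$ for every rational $\lambda \in (0,1)$ and every $k \le \binom{m}{2}$ realizable on a maximal single-crossing path. Varying $\lambda$ continuously, the breakpoints where the argmin jumps from $d$ to $d+1$ must occur exactly at $\lambda$ with $(1-\lambda)k = d + \tfrac12$; feeding this into the indifference condition $\lambda g(d) + (1-\lambda)\tilde g(k-d) = \lambda g(d+1) + (1-\lambda)\tilde g(k-d-1)$ yields a recurrence forcing the second differences of $g$ to be constant, i.e.\ $g(d) = \alpha d^2 + \beta d + \gamma$ for constants $\alpha > 0$, $\beta$, $\gamma$ that a priori could depend on $\succ_0$ and on the chosen path. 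A separate argument, again using 2RP across different single-crossing sequences sharing a common ranking (as in the last part of the proof of \Cref{thm:sqk-single-crossing}), shows these constants are path-independent, and comparing profiles built from overlapping paths forces $\alpha,\beta,\gamma$ to be global constants independent of $\succ_0$. Hence $c(\succ,\rhd) = \alpha \cdot \swap(\succ,\rhd)^2 + \beta \cdot \swap(\succ,\rhd) + \gamma$.

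Finally I would check that such a $c$ yields the Squared Kemeny rule. The additive constant $\gamma$ contributes $\gamma$ to every $\rhd$ and is irrelevant. The linear term $\beta \sum_\succ R(\succ)\swap(\succ,\rhd)$ is the Kemeny objective scaled by $\beta$; I need to rule out $\beta \ne 0$. Here I would exhibit a single-crossing profile — for instance the one in \Cref{fig:single-crossing}, or any profile where the Kemeny (median) and Squared-Kemeny (mean) outputs differ — on which 2RP (via Single-Crossing-Proportionality, which 2RP does not directly give, so instead I use a genuine two-ranking sub-case or the reinforcement-free consequence of the recurrence) forces the mean location, contradicting any nonzero weight on the linear term unless $\alpha$ dominates in the required way; more cleanly, since the recurrence already pinned $g$ to be exactly quadratic with the argmin at the mean, one checks $\beta$ merely shifts the vertex, and matching $\round((1-\lambda)k)$ for all $\lambda$ forces the vertex to sit at $(1-\lambda)k$ exactly, which kills $\beta$. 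Thus $c = \alpha\,\swap^2$ with $\alpha>0$, and $f_c = \sqK$. The converse, that $\sqK = f_{\swap^2}$ satisfies 2RP, is \Cref{thm:sqk-single-crossing}.

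I expect the main obstacle to be the path-independence / globality step: ensuring that the quadratic coefficients extracted from one single-crossing family agree with those from another, and that they do not depend on the base ranking $\succ_0$. Handling output rankings that lie off every maximal single-crossing sequence compatible with the profile — where 2RP says nothing directly and one must instead leverage that $f_c$ is a ranking scoring function, comparing costs via the triangle-inequality bounds of \Cref{ineqs:single-crossing} — is the other delicate point, though it closely parallels the first part of the proof of \Cref{thm:sqk-single-crossing}.
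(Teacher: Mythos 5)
Your overall strategy is the right one -- the paper's proof also works by extracting linear equations on $c$ from the ties that 2RP forces in two-ranking profiles at half-integer weights (its Step~1 uses exactly your breakpoint profile, with weights $\nicefrac{1}{2d}$ and $1-\nicefrac{1}{2d}$). But there is a genuine gap at your central step. The within-path indifference conditions do \emph{not} force the second differences of $g(d)=c(\succ_0,\succ_d)$ to be constant. If you work out all the breakpoint equations coming from every two-ranking subprofile of a single maximal single-crossing sequence $\succ_0,\dots,\succ_n$, what you get is precisely $g(d+1)-g(d)=(2d+1)\,c(\succ_d,\succ_{d+1})$, where the edge costs $c(\succ_d,\succ_{d+1})>0$ remain completely free and mutually independent; the equations at interior breakpoints (e.g.\ the tie between $\succ_1$ and $\succ_2$ at $\lambda=\nicefrac12$ for the pair $(\succ_0,\succ_3)$) turn out to be identities and give nothing new. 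Concretely, for $n=3$ one gets $g(0),\dots,g(3)=0,\,p_1,\,p_1+3p_2,\,p_1+3p_2+5p_3$ with $p_1,p_2,p_3$ arbitrary positive numbers, which is quadratic only if $p_1=p_2=p_3$. So the entire content of the theorem is the equalization of the edge costs, and this provably cannot be done with profiles supported on one path: you need a profile whose support leaves the path. The paper does this with the profile assigning weight $1-\nicefrac{1}{d_{\max}}$ to a ranking $\rhd_2$ and $\nicefrac{1}{d_{\max}}$ to its complete reverse $\blacktriangleleft_2$, under which 2RP forces \emph{all} neighbors of $\rhd_2$ to tie, whence $c(\rhd_1,\rhd_2)=c(\rhd_3,\rhd_2)$ for any two neighbors $\rhd_1,\rhd_3$; connectivity of the swap graph then makes the constant global. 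Your ``separate argument\dots\ across different single-crossing sequences sharing a common ranking'' points in this direction but supplies no construction, and since it is the crux rather than a technicality, the proof as written does not go through.

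Two smaller remarks. First, once the edge costs are equalized to a common $\alpha$, the relation $g(d+1)-g(d)=(2d+1)\alpha$ gives $g(d)=\alpha d^2$ on the nose, so no linear term $\beta$ ever appears and your final paragraph is addressing a problem that does not arise (your vertex argument for killing $\beta$ is correct but moot). Second, your worry about off-path output rankings is a non-issue: 2RP tells you $f(R)$ exactly, the relevant on-path rankings are always members of it, and membership of two rankings in $f(R)$ is all you need to write down the cost equality for a ranking scoring function -- no analogue of the triangle-inequality analysis from \Cref{thm:sqk-single-crossing} is required here.
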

\begin{proof}	
We already showed that the Squared Kemeny rule satisfies 2RP (\Cref{thm:sqk-single-crossing}), so we need to show that it is the only ranking scoring function with that property. Hence, let $f$ denote a ranking scoring function that satisfies 2RP and let $c$ be its cost function.
Without loss of generality,
we can assume for every ranking ${\succ} \in \mathcal{R}$ that $c(\succ, \succ) = 0$
because adding a constant to a cost function,
even a different constant for each first argument,
does not affect the outcomes of a ranking scoring function.
For a profile $R$ and a ranking $\rhd$, we define
$C(R, \rhd) = \sum_{\succ \in \mathcal{R}} R(\succ) \cdot c(\succ, \rhd)$.
We will show that $c$ is proportional to
the Squared Kemeny cost function $c_\sqK(\succ, \rhd) = \swap(\succ, \rhd)^2$, which  implies that $f$ is the Squared Kemeny rule.
We prove this in three steps:
first, we will show that for every pair of rankings $\rhd_1, \rhd_2 \in \mathcal{R}$
with swap distance $1$, the difference in the costs of $\rhd_1$ and $\rhd_2$ with respect to any other ranking 
is proportional to the difference in their Squared Kemeny costs, i.e.,
there exists a constant $\alpha_{\rhd_1,\rhd_2} > 0$ such that
\(
c(\succ, \rhd_1)-c(\succ, \rhd_2) =
\alpha_{\rhd_1,\rhd_2}
(c_\sqK({\succ,} \rhd_1)-c_\sqK({\succ}, \rhd_2))
\)
for every ${\succ} \in \mathcal{R}$
(Step 1).
Next, we will prove that these constants are equal
if two such pairs of rankings intersect,
i.e., 
$\alpha_{\rhd_1,\rhd_2}=\alpha_{\rhd_2,\rhd_3}$
for all $\rhd_1,\rhd_2,\rhd_3 \in \mathcal{R}$
such that $\swap(\rhd_1, \rhd_2)=\swap(\rhd_2, \rhd_3)=1$
(Step~2).
Finally, we infer that all these constants are equal and derive that there is $\alpha>0$ such that
\(
	c({\succ}, \rhd_1) - c({\succ}, \rhd_2) =
	\alpha (c_\sqK({\succ}, \rhd_1) - c_\sqK({\succ}, \rhd_2))
\)
for all $\succ,\rhd_1, \rhd_2 \in \mathcal{R}$. Since $c(\ssucc,\ssucc)=c_{\sqK}(\ssucc,\ssucc)$ for all $\ssucc\in\mathcal{R}$, this means that $c$ is indeed 
proportional to $c_\sqK$ (Step~3).
\medskip
	
\textbf{Step 1:}
Fix two rankings $\rhd_1,\rhd_2 \in \mathcal{R}$ with $\swap(\rhd_1,\rhd_2)=1$
and define $\alpha_{\rhd_1,\rhd_2} = c(\rhd_1, \rhd_2)$.
We will show that $\alpha_{\rhd_1, \rhd_2}>0$ and
$c(\succ, \rhd_1) - c(\succ, \rhd_2) = 
\alpha_{\rhd_1,\rhd_2}(c_\sqK({\succ}, \rhd_1) - c_\sqK({\succ}, \rhd_2))$ for all $\succ \ \in \mathcal{R}$.
Let us first prove that $\alpha_{\rhd_1,\rhd_2}$
is equal to $\alpha_{\rhd_2,\rhd_1}=c(\rhd_2, \rhd_1)$.
For this, we consider the profile $R$ with $R(\rhd_1)=R(\rhd_2)=\nicefrac{1}{2}$.
2RP requires that $f(R) = \{ \rhd_1, \rhd_2\}$.
Thus, from the definition of ranking scoring functions
and our assumption that $c(\succ,\succ) = 0$ for every ${\succ} \in \mathcal{R}$
we derive that $c(\rhd_1, \rhd_2) = c(\rhd_2, \rhd_1)$ and therefore 
\begin{equation}
	\label{eq:welfarist:alpha:symmetric}
	\alpha_{\rhd_1,\rhd_2} =
	\alpha_{\rhd_2,\rhd_1}.
\end{equation}
To show that $\alpha_{\rhd_1,\rhd_2} > 0$,
consider another profile $R'$
with $R'(\rhd_1)=\nicefrac{2}{3}$ and $R'(\rhd_2)=\nicefrac{1}{3}$.
By 2RP, we get that $f(R') = \{ \rhd_1\}$ and thus that
$\nicefrac{2}{3} \cdot c(\ssucc_1,\ssucc_1)+\nicefrac{1}{3} \cdot c(\ssucc_2, \ssucc_1) < \nicefrac{2}{3} \cdot c(\ssucc_1,\ssucc_2)+\nicefrac{1}{3} \cdot c(\ssucc_2, \ssucc_2)$.
Because $c(\ssucc_1, \ssucc_2)=c(\ssucc_2, \ssucc_1)$ and $c(\rhd_1,\rhd_1)=c(\rhd_2,\rhd_2)=0$, we derive that
$\alpha_{\rhd_1,\rhd_2} > 0$.

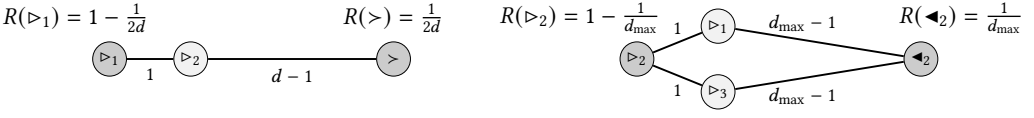
\begin{figure*}[t]
\centering
\scalebox{0.9}{
	\begin{tikzpicture}
		\def\s{1.2cm} 

		\tikzset{
			node/.style={circle,draw,minimum size=0.5cm,inner sep=0, fill = black!05, font=\footnotesize},
			nodeb/.style={circle,draw,minimum size=0.5cm,inner sep=0, fill = black!20, font=\footnotesize},
			edge/.style={draw,thick,below}
		}
		
\node[nodeb] (1) at (0*\s, 0) {$\rhd_1$};
		\node (t1) at (-0.5, 0.6) {$R(\rhd_1) = 1 - \frac{1}{2d}$};
		\node[node] (2) at (1*\s, 0) {$\rhd_2$};
\node[nodeb] (r) at (3.5*\s, 0) {$\succ$};
		\node (t2) at (3.5*\s, 0.6) {$R(\succ) = \frac{1}{2d}$};
		
		\path[edge]
		(1) edge node[below] {\footnotesize $1$} (2)
		(2) edge node[below] {\footnotesize $d-1$} (r)
		;
		
\node[nodeb] (v2) at (6.5*\s, 0) {$\rhd_2$};
		\node (t1) at (5.8*\s, 0.6) {$R(\rhd_2) = 1- \frac{1}{d_{\max}}$};
		\node[node, label={270:}] (v1) at (7.5*\s, 0.4*\s) {$\rhd_1$};
		\node[node, label={270:}] (v3) at (7.5*\s, -0.4*\s) {$\rhd_3$};
\node[nodeb] (vr) at (10*\s, 0) {$\blacktriangleleft_2$};
		\node (t1) at (10.5*\s, 0.6) {$R(\blacktriangleleft_2) = \frac{1}{d_{\max}}$};
		
		\path[edge]
		(v2) edge node[above] {\footnotesize $1$} (v1)
		(v2) edge node[below] {\footnotesize $1$} (v3)
		(v1) edge node[above,pos=0.4] {\footnotesize $d_{\max}-1$} (vr)
		(v3) edge node[below,pos=0.4] {\footnotesize $d_{\max}-1$} (vr);
	\end{tikzpicture}
}
\caption{An illustration of Steps 1 (left) and 2 (right) of the proof of \Cref{thm:characterization}.}
\label{fig:thm:welfarist}
\end{figure*}
	
Now, fix an arbitrary ranking $\succ$.
Since $\swap(\rhd_1,\rhd_2)=1$,
there is exactly one pair of alternatives
on which $\rhd_1$ and $\rhd_2$ disagree.
Let us denote them by $a$ and $b$, i.e.,
$a\mathrel{\ssucc_1} b$ and $b\mathrel{\ssucc_2} a$.
We subsequently assume that $b \succ a$ and discuss the case that $a\succ b$ later. Let $d=\swap({\succ}, \ssucc_1)$ and observe that
$\swap({\succ}, \ssucc_2)=d-1$
(see \Cref{fig:thm:welfarist} for an illustration). Moreover, we define $R$ as the profile with $R(\succ)=\nicefrac{1}{2d}$ and $R(\ssucc_1)=1 - \nicefrac{1}{2d}$.
Since $\nicefrac{1}{2d} \cdot \swap(\succ, \rhd_1) = \nicefrac{1}{2}$
and $(1 - \nicefrac{1}{2d}) \cdot \swap(\succ, \rhd_1) = d -\nicefrac{1}{2}$,
2RP implies that $\rhd_1, \rhd_2 \in f(R)$.
Thus, by the definition of ranking scoring functions,
$C(R,\ssucc_1) = C(R,\ssucc_2)$, which further implies that
$0=2d\cdot C(R, \ssucc_1)-2d\cdot C(R, \ssucc_2)$.
We can now compute that
\begin{align*}
	0&=c({\succ}, \ssucc_1)+(2d-1)c(\ssucc_1, \ssucc_1) - c({\succ}, \ssucc_2) - (2d-1)c(\ssucc_1, \ssucc_2)\\
	&=c({\succ}, \ssucc_1)- c({\succ}, \ssucc_2) - (2d-1)\alpha_{\ssucc_1, \ssucc_2}.
\end{align*}
Since 
\(
	c_\sqK(\succ, \rhd_1) - c_\sqK(\succ, \rhd_2) =
	\swap({\succ}, \rhd_1)^2 - \swap({\succ}, \rhd_2)^2 = 
	d^2 - (d-1)^2 = 
	2d -1
\), it follows that 
$c({\succ}, \ssucc_1)- c({\succ}, \ssucc_2) = \alpha_{\ssucc_1, \ssucc_2} (c_\sqK(\succ, \rhd_1) - c_\sqK(\succ, \rhd_2))$.

Lastly, let us consider the case of $a \succ b$.
By analogous reasoning, we obtain that
$c({\succ}, \ssucc_2)- c({\succ}, \ssucc_1) = 
\alpha_{\ssucc_2, \ssucc_1} (c_\sqK(\succ, \rhd_2) - c_\sqK(\succ, \rhd_1))$.
By \Cref{eq:welfarist:alpha:symmetric} and
sidewise multiplication by $-1$, we hence infer the thesis of this Step also in this case.\medskip

\textbf{Step 2:} 
Consider three rankings $\ssucc_1, \ssucc_2, \ssucc_3 \in \mathcal{R}$
with $\swap(\ssucc_1, \ssucc_2)=\swap(\ssucc_2, \ssucc_3)=1$ and let $\alpha_{\ssucc_1, \ssucc_2}$ and $\alpha_{\ssucc_2, \ssucc_3}$
denote the constants derived in the previous step.
We will show in this step that
$\alpha_{\ssucc_1, \ssucc_2}=\alpha_{\ssucc_2, \ssucc_3}$.
If $m=2$, then necessarily $\rhd_1 = \rhd_3$
and our claim directly follows from \Cref{eq:welfarist:alpha:symmetric}.

Thus, assume $m \ge 3$, let $\blacktriangleleft_2$ denote the ranking that is completely reverse of $\ssucc_2$, and let $d_{\max}=\binom{m}{2}=\swap({\blacktriangleleft_2}, \rhd_2)$.
Furthermore, we define $R$ as the profile 
with $R(\blacktriangleleft_2)=\nicefrac{1}{d_{\max}}$ 
and $R(\ssucc_2)=1-\nicefrac{1}{d_{\max}}$
(see \Cref{fig:thm:welfarist} for an illustration).
Observe that $\nicefrac{1}{d_{\max}} \cdot \swap({\blacktriangleleft_2}, \rhd_2) = 1$
and $(1-\nicefrac{1}{d_{\max}}) \cdot \swap({\blacktriangleleft_2}, \rhd_2) = d_{\max}-1$.
Hence, 2RP implies that all rankings in swap distance $1$ from $\rhd_2$
are selected by $f$.
In particular, $\ssucc_1,  \ssucc_3\in f(R)$.
By the definition of the ranking scoring function, this means that
$C(R, \ssucc_1)=C(R, \ssucc_3)$, and therefore also 
$C(R, \ssucc_1)-C(R, \ssucc_2)=C(R, \ssucc_3)-C(R, \ssucc_2)$. Next, Step 1 implies that $C(R, \ssucc_1)-C(R, \ssucc_2)=\alpha_{\ssucc_1, \ssucc_2} (C_{\sqK}(R, \ssucc_1)-C_{\sqK}(R, \ssucc_2))$ and $C(R, \ssucc_3)-C(R, \ssucc_2)=\alpha_{\ssucc_3, \ssucc_2} (C_{\sqK}(R, \ssucc_3)-C_{\sqK}(R, \ssucc_2))$. Since we have
\(
	C_{\sqK}(R, \ssucc_1)
	=C_{\sqK}(R, \ssucc_1)
	=\nicefrac{1}{d_{\max}} \cdot (d_{\max} - 1)^2 + (1-\nicefrac{1}{d_{\max}}) \cdot 1^2
	<\nicefrac{1}{d_{\max}} \cdot d_{\max}^2 =
	C_{\sqK}(R, \ssucc_2)
\), 
we now derive that
\[
	\alpha_{\rhd_1,\rhd_2} =
	\frac{C(R, \ssucc_1)-C(R, \ssucc_2)}{C_{\sqK}(R, \ssucc_1)-C_{\sqK}(R, \ssucc_2)} =
	\frac{C(R, \ssucc_3)-C(R, \ssucc_2)}{C_{\sqK}(R, \ssucc_3)-C_{\sqK}(R, \ssucc_2)} =
	\alpha_{\rhd_3,\rhd_2}.
\]
Hence, by \Cref{eq:welfarist:alpha:symmetric}, it follows that $\alpha_{\rhd_1,\rhd_2} =\alpha_{\rhd_2,\rhd_3}$.
\medskip
	
	\textbf{Step 3:}
	Finally, we will show that
	$c$ is proportional to $c_{\sqK}$,
	so that $f$ is the Squared Kemeny rule.
	To this end, we first show that $\alpha_{\rhd, \rhd'} = \alpha_{\rhd^+,\rhd^*}$,
	for all $\rhd,\rhd',\rhd^+,\rhd^*\in \mathcal{R}$
	with $\swap(\rhd,\rhd') = \swap(\rhd^+,\rhd^*)=1$.
	To see this,
	observe that there exists a sequence of rankings $\rhd_1,\rhd_2,\dots,\rhd_k$
	such that $\rhd_1=\rhd$, $\rhd_2=\rhd'$, $\rhd_{k-1}=\rhd^+$, $\rhd_k = \rhd^*$, 
	and $\swap(\rhd_i,\rhd_{i+1})=1$
	for every $i \in \{1,\dots, k-1\}$.
	Then, Step 2 implies that
	$\alpha_{\rhd,\rhd'} = 
	\alpha_{\rhd_1,\rhd_2} = \dots =
	\alpha_{\rhd_{k-1},\rhd_k} = 
	\alpha_{\rhd^+,\rhd^*}$.
	We hence drop the index of these constants and simply refer to them by $\alpha$.

Next, take arbitrary rankings $\succ,\rhd, \rhd' \in \mathcal{R}$
and let $\ssucc_1,\dots, \ssucc_k$ be a sequence of rankings such that
$\ssucc_1=\ssucc$, $\ssucc_k=\ssucc'$,
and $\swap(\ssucc_i, \ssucc_{i+1})=1$
for every $i \in \{1,\dots, k-1\}$.
Using the ``telescoping sum'' technique twice, we get that
\(
	c(\succ, \ssucc)-c(\succ,\ssucc') =
	\textstyle\sum_{i=1}^{k-1}
		\left( c(\succ, \ssucc_i)-c(\succ, \ssucc_{i+1}) \right) =
	\alpha \textstyle\sum_{i=1}^{k-1} 
		\left( c_{\sqK}(\succ, \ssucc_i) - c_{\sqK}(\succ, \ssucc_{i+1}) \right) =
	\alpha(c_{\sqK}(\succ, \ssucc) -c_{\sqK}(\succ, \ssucc')).
\)

From this we get that $c$ is proportional to $c_{\sqK}$.
Indeed, for every two rankings $\succ, \rhd \in \mathcal{R}$,
we get that
\(
	c(\succ, \rhd) = 
	c(\succ, \rhd) - c(\succ, \succ) = 
	\alpha (c_{\sqK}(\succ, \rhd) - c_{\sqK}(\succ, \succ)) = 
	\alpha (c_{\sqK}(\succ, \rhd).
\)
Since $\alpha>0$, this means that $f$ is the Squared Kemeny rule.
\end{proof}

\subsection{Efficiency, Participation, and Strategyproofness}
\label{subsec:moreaxioms}

We will next show that the Squared Kemeny rule satisfies desirable efficiency and participation properties but violates strategyproofness. To define these axioms for SPFs, we first need to specify how we compare two output rankings $\ssucc_1$, $\ssucc_2$ based on an input ranking $\succ$. Following the literature \citep[e.g.,][]{BosSto-1992-KemenySP, BosSpr-2014-SPrankingaggregation,Ath-2016-KemenyCriteria}, we use the swap distance between the input ranking and the output rankings: given an input ranking $\succ$, $\ssucc_1$ is \emph{weakly preferred} to $\ssucc_2$ (denoted by $\ssucc_1\succsim\ssucc_2$) if $\swap({\succ}, {\ssucc_1})\leq \swap({\succ}, {\ssucc_2})$, and $\ssucc_1$ is strictly preferred to $\ssucc_2$ (denoted by $\ssucc_1\succ\ssucc_2$) if $\swap({\succ}, {\ssucc_1})< \swap({\succ}, {\ssucc_2})$. It does not make a difference for these purposes whether we use the swap distance or the squared swap distance since they induce the same preferences. Next, we will define efficiency, participation, and strategyproofness. 

\paragraph{Efficiency.} An outcome is (Pareto) efficient if it is not possible to make one voter better off without making any other voter worse off. Formally, we say that a ranking $\ssucc_1$ dominates another ranking $\ssucc_2 $ in a profile $R$ if $\ssucc_1\succsim\ssucc_2$ for all ${\succ}\in\supp(R)$ and $\ssucc_1\succ\ssucc_2$ for some ${\succ}\in\supp(R)$. Moreover, a ranking $\ssucc$ is efficient for a profile $R$ if it is not dominated by any other ranking. Finally, an SPF $f$ is \emph{efficient} if, for every profile $R$, every ranking $\ssucc\in f(R)$ is efficient.\footnote{\citet{BosSpr-2014-SPrankingaggregation} show that if $a \succ b$ for all ${\succ} \in \supp(R)$, then we also have $a \rhd b$ whenever $\rhd$ is efficient in $R$.}

\paragraph{Participation.} The axiom of participation is typically formulated in electoral settings and intuitively requires that it is never better for a group of agents to abstain from an election than to participate. In our context, participation can be seen as a consistency notion: if $\ssucc_1$ is a winning ranking in the profile $R$ and we add additional criteria to $R$ according to which $\ssucc_1$ is better than $\ssucc_2$, then $\ssucc_2$ should not be winning in the extended profile. More formally, we say an SPF $f$ satisfies \emph{participation} if there are no profiles $R^1, R^2$, a constant $\lambda\in (0,1)\cap \mathbb{Q}$, and rankings $\ssucc_1\in f(R^1)$, $\ssucc_2\in f(\lambda R^1+(1-\lambda R^2))$ such that $\ssucc_1 \succsim \ssucc_2$ for all ${\succ}\in\supp(R^2)$ and $\ssucc_1 \succ \ssucc_2$ for some ${\succ}\in\supp(R^2)$.

\paragraph{Strategyproofness.} As the last axiom of this section, we will consider strategyproofness. This axiom is also typically studied in electoral settings and requires that agents should never be better off by lying than by voting truthfully. We hence say an SPF $f$ is \emph{strategyproof} if there are no profiles $R^1$, $R^2$, rankings ${\succ_1}\in \supp(R^1)$, ${\succ_2}\in\mathcal{R}$, and a constant $\epsilon\in (0, R^1({\succ_1})]$ such that \emph{(i)} $f(R^1)=\{\ssucc_1\}$ and $f(R^2)=\{\ssucc_2\}$ for some rankings $\ssucc_1, \ssucc_2$ with $\ssucc_2 \succ_1 \ssucc_1$, and \emph{(ii)} $R^2({\succ_1})=R^1({\succ_1})-\epsilon$, $R^2({\succ_2})=R^1({\succ_2})+\epsilon$, and $R^2({\succ})=R^1({\succ})$ for all ${\succ}\in\mathcal{R}\setminus \{{\succ_1},{\succ_2}\}$. 

\begin{proposition}\label{prop:EfficiencyParticipationStrategyproofness}
	The Squared Kemeny rule satisfies efficiency and participation but violates strategyproofness. 
\end{proposition}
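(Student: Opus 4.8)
The plan is to split into the three assertions. The first two follow from one observation: the map $x \mapsto x^2$ is strictly increasing on $\mathbb{R}_{\ge 0}$, so $\swap(\succ, \ssucc_1) \le \swap(\succ, \ssucc_2)$ implies $\swap(\succ, \ssucc_1)^2 \le \swap(\succ, \ssucc_2)^2$, and strict inequalities are preserved. For \emph{efficiency} I would argue by contradiction: if $\ssucc \in \sqK(R)$ were dominated by some $\ssucc'$, then $\swap(\succ, \ssucc')^2 \le \swap(\succ, \ssucc)^2$ for every ${\succ} \in \supp(R)$, with a strict inequality for some ${\succ} \in \supp(R)$ (which has positive weight); weighting by $R(\succ)$ and summing gives $C_\sqK(R, \ssucc') < C_\sqK(R, \ssucc)$, contradicting $\ssucc \in \sqK(R)$. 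For \emph{participation}, suppose $R^1, R^2$, $\lambda \in (0,1)\cap\mathbb{Q}$, $\ssucc_1 \in \sqK(R^1)$ and $\ssucc_2 \in \sqK(R)$ with $R = \lambda R^1 + (1-\lambda)R^2$ witnessed a violation. The hypothesis on $\supp(R^2)$ gives $C_\sqK(R^2, \ssucc_1) < C_\sqK(R^2, \ssucc_2)$ by the same monotonicity, while $\ssucc_1 \in \sqK(R^1)$ gives $C_\sqK(R^1, \ssucc_1) \le C_\sqK(R^1, \ssucc_2)$. Since $C_\sqK(\cdot, \ssucc)$ is linear in the profile, forming the $\lambda$, $(1-\lambda)$ combination (with $1-\lambda > 0$) yields $C_\sqK(R, \ssucc_1) < C_\sqK(R, \ssucc_2)$, contradicting $\ssucc_2 \in \sqK(R)$. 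These are just the usual arguments showing that any ranking scoring function whose cost is non-decreasing in swap distance is efficient and satisfies participation.

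For the failure of \emph{strategyproofness} I would construct an explicit counterexample with $m = 3$, using \Cref{thm:sqk-single-crossing} to read outcomes off as rounded weighted means of positions along a maximal single-crossing sequence. Let $A = \{a,b,c\}$ and take the maximal single-crossing sequence $(abc, acb, cab, cba)$ with positions $0,1,2,3$ (consecutive rankings differ by a single swap; here $acb$ abbreviates $a \succ c \succ b$, and similarly for the others). Let $R^1$ place weight $\frac23$ on $acb$ (position $1$) and $\frac13$ on $cba$ (position $3$); the weighted mean position is $\frac53$, whose nearest integer is $2$, so $\sqK(R^1) = \{cab\}$. Now transfer weight $\epsilon = \frac13 \le R^1(acb)$ from the ranking ${\succ_1} = {acb}$ onto the ranking ${\succ_2} = {abc}$ (position $0$); the resulting profile $R^2$ is uniform on $\{abc, acb, cba\}$, with weighted mean position $\frac43$, whose nearest integer is $1$, so $\sqK(R^2) = \{acb\}$. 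Since $\swap(acb, acb) = 0 < 1 = \swap(acb, cab)$, a voter truly holding $acb$ strictly prefers the outcome $acb$ of the manipulated profile to the truthful outcome $cab$; moreover $R^2$ differs from $R^1$ exactly by the prescribed $\epsilon$-transfer from $\succ_1$ to $\succ_2$, and both $\sqK(R^1)$ and $\sqK(R^2)$ are singletons. This exhibits the configuration forbidden by strategyproofness.

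The main obstacle is this last construction. The efficiency and participation arguments are routine once the monotonicity-plus-linearity idea is in place; the delicate point is arranging three things at once for the manipulation: both outcomes must be \emph{unique} singletons (so one must steer clear of the rounding ties in Single-Crossing-Proportionality), the manipulated report must be ``more extreme'' than the truthful one, and the rounded weighted mean it induces must land strictly closer to the voter's true position without overshooting. Restricting attention to a single-crossing family collapses all of this to a short one-dimensional calculation, which is why I would present the example in that form; I would also remark that $m \ge 3$ is needed, since for $m = 2$ the rule is trivially strategyproof.
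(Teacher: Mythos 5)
Your proof is correct, and its overall architecture matches the paper's: direct arguments for efficiency and participation, plus an explicit counterexample for strategyproofness. The efficiency and participation parts coincide with the paper's proof in substance --- the paper merely asserts the two key inequalities $C_{\sqK}(R,\ssucc_1)<C_{\sqK}(R,\ssucc_2)$ and $C_{\sqK}(R^2,\ssucc_1)<C_{\sqK}(R^2,\ssucc_2)$ as ``easy to check,'' and your monotonicity-of-squaring plus linearity-of-the-cost reasoning is exactly what is behind them. Where you genuinely diverge is the counterexample. The paper takes $R^1$ with weights $\nicefrac{1}{3},\nicefrac{5}{9},\nicefrac{1}{9}$ on $abc,\,bac,\,cab$ and moves $\nicefrac{1}{9}$ from $bac$ to $cba$, so that the unique output switches from $abc$ to $bac$; verifying this requires computing Squared Kemeny costs directly (and the paper uses the same instance to additionally refute the weaker ``betweenness'' strategyproofness of Bossert and Sprumont, a remark that is not part of the proposition). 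Your example instead lives inside the maximal single-crossing sequence $abc, acb, cab, cba$, so that \Cref{thm:sqk-single-crossing} lets you read the outputs off as rounded weighted mean positions: $\round(\nicefrac{5}{3})=\{2\}$ gives $cab$ for $R^1$ and $\round(\nicefrac{4}{3})=\{1\}$ gives $acb$ for $R^2$, with uniqueness for free since neither mean is a half-integer. I verified this by brute force as well: in $R^1$ the cost of $cab$ is $1$ against a runner-up cost of $\nicefrac{4}{3}$, and in $R^2$ the cost of $acb$ is $\nicefrac{5}{3}$ against a runner-up cost of $2$, and the weight transfer of $\epsilon=\nicefrac{1}{3}$ from $acb$ to $abc$ fits the paper's definition exactly. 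Your route buys a shorter, more conceptual verification and makes visible \emph{why} the manipulation works (it drags the mean position toward the manipulator); the paper's example buys the extra betweenness refutation. Nothing required by the proposition is missing from your argument.
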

\begin{proof}
	We first show that Squared Kemeny is efficient. For this, let $R$ be a profile and $\ssucc_1$, $\ssucc_2$ be two rankings such that $\ssucc_1$ dominates $\ssucc_2$ in $R$. It is easy to check that $C_{\sqK}(R, \ssucc_1)<C_{\sqK}(R, \ssucc_2)$, so $\ssucc_2\not\in \sqK(R)$, which implies that the Squared Kemeny rule is indeed efficient.
	
	For participation, consider two profiles $R^1$, $R^2$,  and two rankings $\ssucc_1$, $\ssucc_2$ such that $\ssucc_1\in \sqK(R^1)$, $\ssucc_1\succsim \ssucc_2$ for all ${\succ}\in \supp(R^2)$, and $\ssucc_1\succ \ssucc_2$ for some ${\succ}\in \supp(R^2)$. Since $\ssucc_1\in \sqK(R^1)$, we have $C_{\sqK}(R^1, \ssucc_1)\leq C_{\sqK}(R^1, \ssucc_2)$. Moreover, from the conditions on $R^2$ we get that $C_{\sqK}(R^2, \ssucc_1)< C_{\sqK}(R^1, \ssucc_2)$. This implies for every $\lambda\in (0,1)\cap\mathbb{Q}$ that $\ssucc_2\not\in \sqK(\lambda R^1+(1-\lambda) R^2)$ because 
	\begin{align*}
		C_{\sqK}(\lambda R^1+(1-\lambda) R^2, \ssucc_1)&=\lambda C_{\sqK}(R^1, \ssucc_1)+(1-\lambda) C_{\sqK}(R^2, \ssucc_1)\\
		&>\lambda C_{\sqK}(R^1, \ssucc_2)+(1-\lambda) C_{\sqK}(R^2, \ssucc_2)=C_{\sqK}(\lambda R^1+(1-\lambda) R^2, \ssucc_2).
	\end{align*}
	Hence, the Squared Kemeny rule indeed satisfies participation.
	
	Finally, we turn to strategyproofness, and consider the following two profiles $R^1$ and $R^2$.
	\begin{center}
		\begin{tabular}{cc}
			$R^1$: \begin{tabular}{ccc}
				$\nicefrac{1}{3}$ & $\nicefrac{5}{9}$ &$\nicefrac{1}{9}$\\\hline
				$a$ & $b$ & $c$\\
				$b$ & $a$ & $a$\\
				$c$ & $c$ & $b$
			\end{tabular}\hspace{2cm}
			& 
			$R^2$: \begin{tabular}{cccc}
				$\nicefrac{1}{3}$ & $\nicefrac{4}{9}$ &$\nicefrac{1}{9}$ & $\nicefrac{1}{9}$ \\\hline
				$a$ & $b$ & $c$ & $c$ \\
				$b$ & $a$ & $a$ & $b$\\
				$c$ & $c$ & $b$ & $a$
			\end{tabular}
		\end{tabular}
	\end{center}
	It can be verified that Squared Kemeny uniquely chooses the ranking $\ssucc_1=a\mathrel{\ssucc_1}  b\mathrel{\ssucc_1}  c$ for $R^1$ and the ranking $\ssucc_2=b\mathrel{\ssucc_2}  a\mathrel{\ssucc_2}  c$ for $R^2$. Since the profile $R^2$ arises from the profile $R^1$ by moving probability $\nicefrac{1}{9}$ from ${\succ}=b\succ a\succ c$ to $c\succ' b \succ' a$ and $\ssucc_2 \succ \ssucc_1$, this shows that the Squared Kemeny rule is manipulable. This example also shows that Squared Kemeny fails the weaker ``betweenness'' version of strategyproofness of \citet{BosSpr-2014-SPrankingaggregation}, which Kemeny does satisfy.
\end{proof}

\section{Proportionality Guarantees}
\label{sec:proportionality}

\begin{figure}[t]
	\centering
	\input{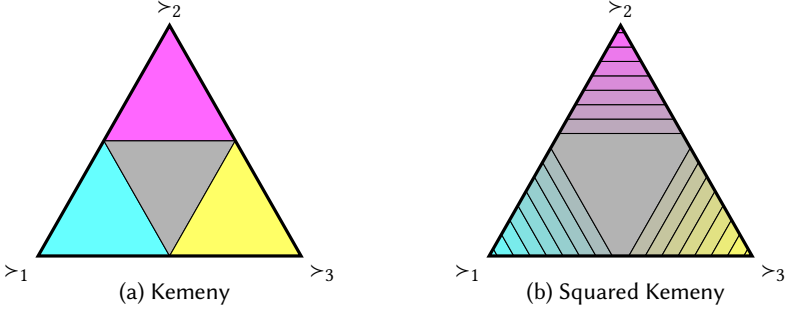}
	\vspace{-5pt}
	\caption{The simplex of profiles in which the rankings ${\succ_1} = abcde\textit{f}gh$, ${\succ_2} = \textit{f}edcbahg$, and ${\succ_3} = bahg\textit{f}dec$ occur. Each point of the simplex is colored according to the swap distance of the (a) Kemeny and (b) Squared Kemeny ranking to the input rankings, with a point's cyan (resp., magenta, yellow) component being more intense if the output ranking is closer to $\succ_1$ (resp., $\succ_2$, $\succ_3$). Both rules output $\rhd^* = ba\textit{f}edchg$ (which is equidistant to the three input rankings) when no input ranking has weight greater than $\frac12$. Otherwise, Kemeny outputs the majority ranking, while Squared Kemeny smoothly moves towards an input ranking as its weight increases.}
	\label{fig:simplex}
\end{figure}

In our axiomatic treatment, we have discussed the behavior of the Squared Kemeny rule on profiles with a lot of structure (single-crossing) and in particular those with only two rankings (2RP). Does Squared Kemeny retain its behavior as an average in general? This is what we will quantify in this section. As an initial matter, we can check what Squared Kemeny does on profiles in which \emph{three} rankings occur. For a fixed set of three rankings, we can use a simplex of weights to picture all profiles based on these rankings. For each point of the simplex (i.e., for each profile), we can compute the Kemeny and Squared Kemeny outcomes, and we can color the point to indicate how close the output rankings are to each of the input rankings. We show the result of this exercise in \Cref{fig:simplex}. This confirms our expectation that Squared Kemeny takes rankings with smaller weight into account, while Kemeny frequently ignores them.

For general profiles with any number of rankings, we can ask about the maximum (over all possible profiles) swap distance between the output ranking and an input ranking, as a function of its weight $\alpha \in [0,1]$. For an ideal proportional rule, there should be a roughly linear relationship between these. For Kemeny, the distance can be as large as $\binom{m}{2}$ (the maximum possible swap distance) when $\alpha < \frac12$, and it is $0$ when $\alpha > \frac12$. For Squared Kemeny, we can compute its behavior for fixed $m$ using a linear program (see \Cref{app:proportionality-linear-programs}) that searches for the worst-case profile, which yields the plot in \Cref{fig:intro-alpha-curve} shown in the introduction. That function is approximately linear, except for a ``hump'' for small $\alpha$, which indicates that Squared Kemeny can output the reverse of an input ranking which has weight as large as $\alpha \approx 17\%$. We do not have a satisfactory explanation for these humps (the profiles witnessing this worst-case behavior are very complicated, see \Cref{app:hump-profiles}), and we do not know how big the hump is as $m \to \infty$, but we do know it cannot exceed $\alpha = 25\%$ (by \Cref{thm:prop-guarantee} below). In addition to computing the exact worst-case behavior for fixed $m$, we can also prove a theoretical upper bound that works for any $m$, which bounds the distance between the Squared Kemeny ranking and a weight-$\alpha$ input ranking. This bound is also shown in \Cref{fig:intro-alpha-curve}. 

\begin{theorem}
	\label{thm:prop-guarantee-one-ranking}
	Let $R$ be a profile and let ${\succ^*} \in \mathcal{R}$ be a ranking with weight $R(\succ^*) = \alpha$. Then
	\[ 
		\swap(\succ^*, \rhd) \le \sqrt{\frac{1-\alpha}{\alpha}} \cdot \binom{m}{2}
		\quad \text{ for every }\: {\rhd} \in \sqK(R).
	\]
\end{theorem}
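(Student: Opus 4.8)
The plan is to exploit the optimality of $\rhd$ directly against the single competitor $\succ^*$. Since $\rhd \in \sqK(R)$ minimizes the Squared Kemeny cost, we have in particular $C_{\sqK}(R, \rhd) \le C_{\sqK}(R, \succ^*)$. I would lower-bound the left-hand side using only the contribution of the term $\succ^*$, and upper-bound the right-hand side using the crude estimate that the swap distance is always at most $d_{\max} := \binom{m}{2}$.

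Concretely, write $k = \swap(\succ^*, \rhd)$. For the lower bound, drop every term of $C_{\sqK}(R, \rhd) = \sum_{\succ \in \mathcal R} R(\succ)\,\swap(\succ,\rhd)^2$ except the one indexed by $\succ^*$; since all terms are nonnegative, this yields $C_{\sqK}(R, \rhd) \ge R(\succ^*)\,\swap(\succ^*,\rhd)^2 = \alpha k^2$. For the upper bound, the term of $C_{\sqK}(R, \succ^*)$ indexed by $\succ^*$ vanishes because $\swap(\succ^*,\succ^*) = 0$, and each remaining term is at most $R(\succ)\, d_{\max}^2$, so summing over $\succ \neq \succ^*$ gives $C_{\sqK}(R, \succ^*) \le (1-\alpha)\, d_{\max}^2$.

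Chaining the two estimates through the optimality inequality gives $\alpha k^2 \le C_{\sqK}(R, \rhd) \le C_{\sqK}(R, \succ^*) \le (1-\alpha)\, d_{\max}^2$, hence $k^2 \le \frac{1-\alpha}{\alpha}\, d_{\max}^2$, and taking square roots yields the claimed bound $\swap(\succ^*, \rhd) \le \sqrt{\frac{1-\alpha}{\alpha}}\cdot\binom{m}{2}$.

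I do not expect any real obstacle here: the argument is a two-line estimate, and the only points worth a remark are the degenerate case $\alpha = 0$ (where the right-hand side is $+\infty$ and the statement is vacuous) and the observation that for $\alpha \ge \tfrac12$ the bound is at most $d_{\max}$, so $\rhd$ can never be the complete reversal of $\succ^*$ — recovering the claim made after the theorem statement. If one wants the sharper ``$\alpha > \tfrac14$'' conclusion for large $m$ mentioned in the introduction, the same skeleton should work but with $d_{\max}$ on the right replaced by a better bound on $\swap(\succ, \succ^*)$ that holds for most $\succ$ simultaneously; that refinement, rather than this theorem, is where the work lies.
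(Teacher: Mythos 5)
Your argument is exactly the paper's proof: lower-bound $C_{\sqK}(R,\rhd)$ by the single term $\alpha\,\swap(\succ^*,\rhd)^2$, upper-bound $C_{\sqK}(R,\succ^*)$ by $(1-\alpha)\binom{m}{2}^2$, and chain the two through the optimality of $\rhd$. The proposal is correct and needs no changes.
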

\begin{proof}
	Note that $C_{\sqK}(R, {\succ^*}) \le (1 - \alpha) \cdot \binom{m}{2}^2$ since $\binom{m}{2}$ is the maximum swap distance between two rankings, and an $\alpha$ fraction of the profile has swap distance $0$ to $\succ^*$. 
	Let ${\rhd} \in \sqK(R)$ be a ranking selected by Squared Kemeny and write $d = \swap(\succ^*, \rhd)$. 
	Then we have $C_{\sqK}(R, \rhd) \ge \alpha \cdot d^2$. Because $\rhd$ optimizes the Squared Kemeny cost, we have $C_{\sqK}(R, \rhd) \le C_{\sqK}(R, {\succ^*})$ and thus $\alpha \cdot d^2 \le (1 - \alpha) \binom{m}{2}^2$.
	Solving for $d$, we get $d \le \sqrt{(1-\alpha)/\alpha}\cdot\binom{m}{2}$, as required.
\end{proof}

The above bound makes sense for profiles with few rankings that are not very similar to each other. But in contexts with many rankings, some of which are similar to each other, it would be better to guarantee to \emph{groups} of rankings that the output ranking should not be too far away from them, on average. 
We will formalize this in a similar way to the work of \citet{skowron2022proportionalbinarydecisions}, by considering arbitrary groups of rankings, without making any cohesiveness assumptions (that would say that the rankings in a group must be similar to each other). Note that such a setup limits the guarantees we can give: for a group of size $\alpha = 1$ (i.e. all the rankings together), we cannot guarantee that the output agrees with the group on more than half the pairwise comparisons on average (consider for example a profile where one ranking and its reverse each have weight $\frac12$).

To state our result, given a profile $R$, we say that $S : \mathcal{R} \to [0, 1] \cap \mathbb{Q}$ is a \emph{subprofile} of $R$ if $S({\succ}) \le R({\succ})$ for all rankings ${\succ} \in \mathcal{R}$. The \emph{size} of $S$ is $\sum_{{\succ}\in\mathcal{R}} S({\succ})$. We can think of $S$ as a \emph{group} of voters, and its size as the fraction of the entire electorate that they make up. We now provide a bound on the average satisfaction of any group.

\begin{theorem}
	\label{thm:prop-guarantee}
	Let $R$ be a profile and let $S$ be a subprofile of $R$ with size $\alpha$. Then
	\[
		\frac{1}{\alpha} \sum_{{\succ} \in \mathcal{R}} S({\succ}) \cdot \swap(\succ, \rhd) \le \sqrt{\frac{1}{4\alpha}} \cdot \binom{m}{2} + o(m^{1.5})
		\quad \text{ for every }\: {\rhd} \in \sqK(R).
	\]
\end{theorem}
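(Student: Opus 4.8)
The plan is to sandwich the Squared Kemeny cost $C_{\sqK}(R,\rhd)$ of the chosen ranking between two bounds: a lower bound in terms of the group's average distance, obtained by convexity, and an upper bound obtained by comparing $\rhd$ with a suitably ``central'' competitor ranking and exploiting the optimality of $\rhd$. Throughout, write $\bar d = \frac1\alpha\sum_{\succ}S(\succ)\swap(\succ,\rhd)$ for the quantity we want to bound, and assume $\alpha>0$ (otherwise the claim is vacuous).

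First I would prove the lower bound $C_{\sqK}(R,\rhd)\ge \alpha\,\bar d^2$. Since $S(\succ)\le R(\succ)$ for every $\succ$ and all terms are nonnegative, $C_{\sqK}(R,\rhd)=\sum_\succ R(\succ)\swap(\succ,\rhd)^2 \ge \sum_\succ S(\succ)\swap(\succ,\rhd)^2$. Viewing $S/\alpha$ as a probability distribution over rankings and applying Jensen's inequality to $t\mapsto t^2$ gives $\sum_\succ S(\succ)\swap(\succ,\rhd)^2 \ge \alpha\big(\tfrac1\alpha\sum_\succ S(\succ)\swap(\succ,\rhd)\big)^2=\alpha\,\bar d^2$. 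Next, for the upper bound, the key point is that since $\rhd$ minimizes $C_{\sqK}(R,\cdot)$, its cost is at most the \emph{average} of $C_{\sqK}(R,\ssucc')$ over all $m!$ rankings $\ssucc'\in\mathcal R$. Exchanging the order of summation, this average equals $\sum_\succ R(\succ)\cdot M_2$, where $M_2$ is the average of $\swap(\succ,\ssucc')^2$ over a uniformly random $\ssucc'$; a relabelling argument shows $M_2$ is independent of $\succ$, and in fact $\swap(\succ,\ssucc')$ is distributed exactly as the number of inversions of a uniformly random permutation of $[m]$. Hence $M_2$ is the second moment of that inversion statistic, namely $\big(\tfrac12\binom m2\big)^2+\tfrac{m(m-1)(2m+5)}{72}=\tfrac14\binom m2^2+\Theta(m^3)$ (using that its mean is $\tfrac12\binom m2$ and its variance is $\tfrac{m(m-1)(2m+5)}{72}$). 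Since $\sum_\succ R(\succ)=1$, this yields $C_{\sqK}(R,\rhd)\le \tfrac14\binom m2^2+\tfrac{m(m-1)(2m+5)}{72}$.

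Combining the two estimates, $\alpha\,\bar d^2\le \tfrac14\binom m2^2+\tfrac{m(m-1)(2m+5)}{72}$, so $\bar d^2\le \tfrac1{4\alpha}\binom m2^2+\tfrac{m(m-1)(2m+5)}{72\alpha}$. Taking square roots and using $\sqrt{a+b}\le\sqrt a+\tfrac b{2\sqrt a}$ with $a=\tfrac1{4\alpha}\binom m2^2$, the correction term simplifies (via $\binom m2=\tfrac{m(m-1)}2$) to $\tfrac{2m+5}{36\sqrt\alpha}$, giving $\bar d\le\sqrt{\tfrac1{4\alpha}}\binom m2+\tfrac{2m+5}{36\sqrt\alpha}$. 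For fixed $\alpha$ this additive term is $O(m)=o(m^{1.5})$, which is what the theorem asserts (and is in fact stronger).

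The main obstacle is finding the right competitor ranking for the upper bound. The naive choice — take the better of some ranking $\ssucc'$ and its reverse $\ssucc''$, using $\swap(\succ,\ssucc')+\swap(\succ,\ssucc'')=\binom m2$ — only yields $C_{\sqK}(R,\rhd)\le\tfrac12\binom m2^2$ and hence the weaker bound $\sqrt{1/(2\alpha)}\binom m2$. Getting the sharp constant $\tfrac14$ requires averaging over \emph{all} rankings and using the exact (or at least $\tfrac14\binom m2^2+O(m^3)$) second moment of the inversion count; pinning down this moment, either by the explicit variance computation or by citing the classical formula for Kendall's $\tau$ / the Mann–Whitney statistic, is the only genuinely quantitative step in the argument.
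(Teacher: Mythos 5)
Your proof is correct and follows essentially the same route as the paper: bound $C_{\sqK}(R,\rhd)$ above by the average cost over all $m!$ rankings using the second moment of the inversion (Mahonian) distribution, which equals $\tfrac14\binom{m}{2}^2+\tfrac{2m^3+3m^2-5m}{72}$, then pass to the subprofile and apply Jensen's inequality. Your explicit correction term $\tfrac{2m+5}{36\sqrt{\alpha}}$ is a slightly sharper bookkeeping of the $o(m^{1.5})$ error, but the substance is identical.
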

\begin{proof}
	Let $d_{\max} = \binom{m}{2}$ be the maximum swap distance between two rankings. Fix any ranking ${\succ} \in \mathcal{R}$.
	For each $i \in \{0,1,\dots,d_{\max}\}$, let $M_i$ be the number of rankings ${\rhd} \in \mathcal{R}$ with $\swap(\succ, \rhd) = i$.
	The values $M_0, M_1, \dots, M_{d_{\max}}$ are the \emph{Mahonian numbers} and it is known~\citep{Ben-2010-Mahonian} that 
	\[
		\sum_{i = 0}^{d_{\max}} M_i \cdot i^2 = m! \left( \frac{1}{4} d_{\max}^2 + \frac{2m^3 + 3m^2 - 5m}{72} \right). 
	\]
	
	For a fixed profile $R$, the average Squared Kemeny cost $C_\sqK(R, \rhd)$ over all rankings $\rhd$ is
	\begin{align*}
		\frac{1}{m!} \sum_{{\rhd} \in \mathcal{R}} \sum_{{\succ} \in \mathcal{R}} R(\succ) \cdot \swap(\succ, \rhd)^2
		= \frac{1}{m!} \sum_{{\succ} \in \mathcal{R}} \sum_{{\rhd} \in \mathcal{R}} R(\succ) \cdot \swap(\succ, \rhd)^2 
		=\frac{1}{m!} \cdot m! \left( \frac{1}{4} d_{\max}^2 + O(m^3) \right).
	\end{align*}
	Therefore, there exists a ranking ${\rhd} \in \mathcal{R}$ such that
	\(	
		C_\sqK(R, \rhd) \le d_{\max}^2 /4 + O(m^3)
	\).
	From minimality of Squared Kemeny, we know that this is true for any ${\rhd} \in \sqK(R)$. Since $S$ is a subprofile of $R$,
	\[
		\frac{1}{\alpha} \sum_{{\succ} \in \mathcal{R}} S(\succ) \cdot \swap(\succ, \rhd)^2
		\le \frac{1}{\alpha}  \sum_{{\succ} \in \mathcal{R}} R(\succ) \cdot \swap(\succ, \rhd)^2
		\le \frac{1}{\alpha} d_{\max}^2 /4 + O(m^3).
	\]
Finally, by Jensen's inequality we get that
	\(	
		\left(\sum_{{\succ} \in \mathcal{R}} S(\succ) \cdot \swap(\succ, \rhd) / \alpha \right)^2 \le d_{\max}^2 /(4\alpha) + O(m^3),
	\)
	since the square function is convex,
	which yields the thesis.\end{proof}

\begin{wrapfigure}{r}{0.5\textwidth}
	\scalebox{0.85}{%
		\input{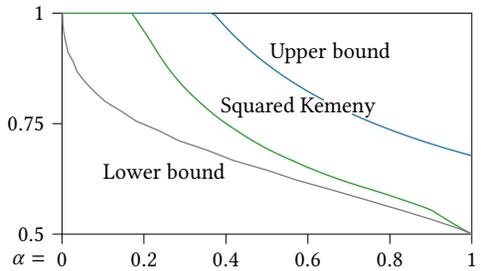}
	}
	\vspace{-5pt}
	\caption{Bound of \Cref{thm:prop-guarantee} for groups of rankings.}
	\label{fig:noncohesive-group-bound}
\end{wrapfigure}

In \Cref{fig:noncohesive-group-bound}, we show the upper bound obtained in \Cref{thm:prop-guarantee} in terms of normalized swap distance (so $\binom{m}{2}$ is mapped to 1). For $m = 6$, we also show the actual worst-case performance of Squared Kemeny, which can be computed for fixed $m$ using linear programs for finding worst-case profiles that maximize the distance between the output and some size-$\alpha$ group. The figure also shows a lower bound, which is obtained by linear programs that find profiles where all $m!$ output rankings are bad simultaneously in the sense that some size-$\alpha$ group incurs at least the lower bound's distance.

\section{Computation}
\label{sec:computation}

The computational complexity of the Kemeny rule has been extensively studied. The problem of deciding if there is a ranking with at most a given cost is NP-complete \citep{bartholdi1989voting}, even for a constant number of input rankings \citep{dwork2001rank,biedl2009complexity,bachmeier2019k}. Thus, it is reasonable to expect that the analogous problem for the Squared Kemeny rule is also NP-complete, and this is indeed the case, see \Cref{app:np-hard-proof}.

\begin{restatable}{theorem}{npcomplete}\label{thm:npcomplete}The problem of deciding, given a profile $R$ and a number $B$, whether there exists a ranking $\rhd$ with $C_{\sqK}(R, \rhd) \le B$, is NP-complete, even for profiles with 4 rankings with equal weight.
\end{restatable}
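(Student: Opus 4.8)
The plan is to establish membership in NP, which is immediate, and then NP-hardness by reducing from the Kemeny rule with four voters, which is NP-hard \citep{dwork2001rank}. For membership, note that given a profile $R$, a bound $B$, and a guessed ranking $\rhd$, one evaluates each $\swap(\succ,\rhd)$ in time $O(m^2)$ and hence tests $C_{\sqK}(R,\rhd)\le B$ in polynomial time; since $\rhd$ is a polynomial-size certificate, the problem is in NP.

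For hardness, I would start from an instance of the decision version of Kemeny with four distinct voters $W_1,\dots,W_4$ over a set $A$ of $n$ alternatives and an integer $K$, asking whether some ranking satisfies $\sum_{i=1}^4\swap(W_i,\rhd)\le K$. I then introduce a set $P$ of $p:=2n^2$ fresh ``padding'' alternatives, fix an arbitrary linear order $\rho$ on $P$, and define four rankings $W_i'$ on $A\cup P$ that rank all of $P$ above all of $A$, order $P$ internally by $\rho$ for $i\in\{1,2\}$ and by its reverse for $i\in\{3,4\}$, and order $A$ internally by $W_i$. The output profile $R'$ assigns weight $\tfrac14$ to each of the pairwise distinct rankings $W_i'$, and I set the target to $B:=\tfrac14\bigl(B_{\max}^2+B_{\max}K+C^++1\bigr)$, where $B_{\max}:=\binom{p}{2}$ and $C^+:=4\binom{n}{2}^2$.

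For correctness, a standard exchange argument (lifting $P$ to the top) shows that any $\rhd'$ meeting the target may be assumed to rank all of $P$ above all of $A$, so it splits into independent orders $\rhd'|_A$ and $\rhd'|_P$ with $\swap(W_i',\rhd')=b_i+\swap(W_i,\rhd'|_A)$, where $b_1=b_2=\swap(\rho,\rhd'|_P)$ and $b_3=b_4=B_{\max}-\swap(\rho,\rhd'|_P)$. Writing $s_i:=\swap(W_i,\rhd'|_A)$, I expand $4\,C_{\sqK}(R',\rhd')=\sum_i(b_i+s_i)^2$ as a quadratic in the single free quantity $\swap(\rho,\rhd'|_P)\in\{0,\dots,B_{\max}\}$, whose minimum is attained near $B_{\max}/2$; substituting back, the expression collapses, up to a rounding error of at most $1$, to $B_{\max}^2+B_{\max}\sum_i s_i+\bigl(\sum_i s_i^2-\tfrac14\Delta^2\bigr)$ with $\Delta:=(s_1+s_2)-(s_3+s_4)$, and Cauchy--Schwarz gives $0\le\sum_i s_i^2-\tfrac14\Delta^2\le C^+$. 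Hence $\min_{\rhd'}C_{\sqK}(R',\rhd')=\tfrac14\bigl(B_{\max}^2+B_{\max}\sigma^\star+L\bigr)$ with $\sigma^\star$ the Kemeny optimum of the original instance and $0\le L\le C^++1$. Since $p$ is chosen so that $B_{\max}>C^++1$, the term $B_{\max}\sigma^\star$ dominates, so ``$\sigma^\star\le K$'' is equivalent to ``$\min_{\rhd'}C_{\sqK}(R',\rhd')\le B$''; as all parameters are polynomial in $n$ and $R'$ has exactly four equally-weighted rankings, this yields the theorem.

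The main obstacle is that the Squared Kemeny objective is not additive across input rankings, so the Kemeny-with-four-voters hardness cannot be transferred directly. The device that makes the reduction go through is the padding block carrying two mutually reverse internal orders: after optimizing the padding part of $\rhd'$, the quadratic self-terms over $P$ become a constant, the cross-terms assemble into the linear quantity $B_{\max}\sum_i\swap(W_i,\rhd'|_A)$, and the residual $\ell_2$ curvature contributes only a lower-order term that is swamped once $p=\mathrm{poly}(n)$ is large enough. Making this cancellation come out exactly, discharging the exchange argument, and verifying both directions of the equivalence is where the real effort lies; the remaining computations are routine.
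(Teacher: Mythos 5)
Your reduction is correct, but it is a genuinely different construction from the one in the paper. The paper also reduces from Kemeny with four equally weighted rankings, but it builds the new instance out of four disjoint copies of the alternative set, cyclically assigning the four input rankings to the four copies so that \emph{every} constructed voter is at the same swap distance (namely the total Kemeny cost) from the block-diagonal output ranking; the forward direction is then immediate, and the converse follows in one line from the norm inequality $\|x\|_1\le 2\|x\|_2$ in $\mathbb{R}^4$ applied to the vector of the four distances. Your construction instead keeps a single copy of $A$ and attaches a large padding block $P$ carrying two mutually reverse internal orders, so that after optimizing the padding part the offsets $b_i\approx B_{\max}/2$ turn the cross terms $2b_is_i$ into the linear quantity $B_{\max}\sum_i s_i$, which dominates the residual quadratic curvature once $B_{\max}>C^++1$; I checked the algebra (the collapse to $B_{\max}^2+B_{\max}\sum_i s_i+\bigl(\sum_i s_i^2-\tfrac14\Delta^2\bigr)$, the bounds $0\le\sum_i s_i^2-\tfrac14\Delta^2\le C^+$, the rounding error of at most $1$, and the exchange argument lifting $P$ above $A$, which works because all four voters unanimously rank $P$ on top) and it all goes through. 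The paper's argument is shorter and needs no asymptotic domination; yours is more generic, since the large-offset linearization would adapt with little change to other convex powers of the swap distance, whereas the $\ell_1$-versus-$\ell_2$ inequality is specific to the squared objective. One small point to tidy up in your write-up: $W_1'$ and $W_2'$ (and likewise $W_3'$, $W_4'$) are distinct only if $W_1\neq W_2$, so to get a profile with exactly four equally weighted rankings you should note that the source Kemeny instance can be assumed to have four pairwise distinct rankings (the same caveat is implicit in the paper's construction).
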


The proof is by reduction from the problem for the Kemeny rule, and uses the same technique used by \citet{biedl2009complexity} for showing that the egalitarian Kemeny rule (which selects the ranking where the maximum swap distance to any input ranking is minimized) is NP-complete to compute.

\begin{wrapfigure}{r}{0.4\textwidth}
	\centering
\begin{tikzpicture}[scale=0.65]
		
		\definecolor{cornflowerblue}{RGB}{100,149,237}
		\definecolor{darkblue}{RGB}{0,0,139}
		\definecolor{darkgray176}{RGB}{176,176,176}
		\definecolor{dimgray}{RGB}{105,105,105}
		\definecolor{gray}{RGB}{140,140,140}
		\definecolor{lightgray}{RGB}{160,160,160}
		\definecolor{lightgray204}{RGB}{180,180,180}
		
		\begin{axis}[
			legend cell align={left},
			legend style={
				fill opacity=0.8,
				draw opacity=1,
				text opacity=1,
				at={(0.03,0.97)},
				anchor=north west,
				draw=lightgray204
			},
			tick align=outside,
			tick pos=left,
			x grid style={darkgray176},
			xlabel={number of alternatives},
			xmin=6.5, xmax=83.5,
			xtick style={color=black},
			xtick distance=10,
			y grid style={darkgray176},
			ylabel={median running time (s.)},
			ymin=-30.8283993721008, ymax=647.880527448654,
			ytick style={color=black},
			ytick distance=120
			]
			\addplot [thick, cornflowerblue]
			table {10 0.0544687509536743
				20 0.657991886138916
				30 3.39515292644501
				40 12.1495265960693
				50 26.5022867918015
				60 115.724662661552
				70 186.585289597511
				80 617.030121684074
			};
			\addlegendentry{Sq. Kemeny, $n=5$}
			\addplot [thick, blue]
			table {10 0.0481027364730835
				20 0.54197096824646
				30 2.73315131664276
				40 8.70813143253326
				50 19.7152553796768
				60 68.6061573028564
				70 121.98901116848
				80 237.46488404274
			};
			\addlegendentry{Sq. Kemeny, $n=4$}
			\addplot [thick, darkblue]
			table {10 0.0385540723800659
				20 0.448656678199768
				30 2.13855218887329
				40 6.7280935049057
				50 15.7510339021683
				60 47.3511701822281
				70 91.8183234930038
				80 194.284425973892
			};
			\addlegendentry{Sq. Kemeny, $n=3$}
			\addplot [thick, lightgray]
			table {10 0.0236619710922241
				20 0.204893112182617
				30 0.722732186317444
				40 1.84698355197906
				50 3.98798882961273
				60 10.5086648464203
				70 16.7370330095291
				80 32.4897501468658
			};
			\addlegendentry{Kemeny, $n=5$}
			\addplot [thick, gray]
			table {10 0.0220063924789429
				20 0.202386379241943
				30 0.726245641708374
				40 1.85857021808624
				50 3.93119180202484
				60 8.65702593326569
				70 11.2142436504364
				80 17.4852221012115
			};
			\addlegendentry{Kemeny, $n=4$}
			\addplot [thick, dimgray]
			table {10 0.0265276432037354
				20 0.207897782325745
				30 0.731740713119507
				40 1.87427616119385
				50 4.09808743000031
				60 9.68661451339722
				70 12.4826674461365
				80 21.5191498994827
			};
			\addlegendentry{Kemeny, $n=3$}
		\end{axis}
		
	\end{tikzpicture}
	\vspace{-8pt}
	\caption{The median running time of computing the Squared Kemeny using Gurobi
		for a given number of alternatives and $n=\{3, 4, 5\}$ rankings
		occurring in the profile with equal weights, drawn uniformly at random.
		The values are based on 50 samples.}
	\label{fig:computation-time}
\end{wrapfigure}
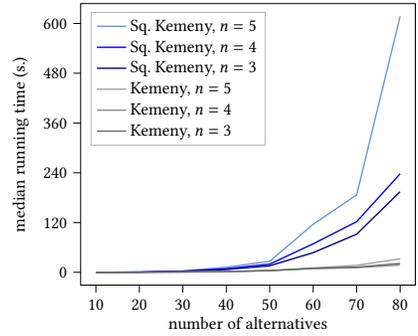

There exists an ILP formulation for computing the Kemeny rule, which is reasonably efficient in practice \citep{conitzer2006improved}. While this ILP formulation depends on the linear nature of the Kemeny objective, it is still possible to give an ILP formulation for the Squared Kemeny rule, using the same trick used by \citet{caragiannis2019unreasonable} for computing the maximum Nash welfare solution for fair allocation, and generalized by \citet{bredereck2020mixed} for various covering problems. The encoding is described in \Cref{app:ilp-formulation}. We found that it allows us to evaluate the Squared Kemeny rule reasonably efficiently up to $m = 80$ (see \Cref{fig:computation-time}).\footnote{We note that outliers (with up to $9$ times longer running time than the median) do occur.}

The Kemeny rule also admits efficient approximation algorithms \citep{ailon2008aggregating,van2009deterministic,coppersmith2010ordering} and even a PTAS \citep{KMSc07a}. The Squared Kemeny rule admits a simple 4-approximation algorithm (output the input ranking that has the best score, see \Cref{app:4-approximation}). In addition, we can show that the optimal Kemeny ranking provides a 2-approximation to the Squared Kemeny rule. Combining this with the known PTAS for Kemeny, we obtain the following result, proved in \Cref{app:approximation-proof}:

\begin{restatable}{theorem}{approximation}\label{thm:approximation}For every constant $\epsilon > 0$, there exists a polynomial-time $(2 + \epsilon)$-approximation to the Squared Kemeny rule.
\end{restatable}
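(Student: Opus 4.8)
The plan is to reduce to the known PTAS for the Kemeny rule. The heart of the argument is a robust version of the claim that the optimal Kemeny ranking $2$-approximates Squared Kemeny: \emph{if $\rhd$ is any ranking whose Kemeny cost $C_{\K}(R,\rhd) := \sum_{\succ\in\mathcal{R}} R(\succ)\swap(\succ,\rhd)$ is within a factor $(1+\delta)$ of the minimum, then $C_{\sqK}(R,\rhd) \le (2 + 2\delta + \delta^2)\cdot\min_{\rhd'} C_{\sqK}(R,\rhd')$.} Granting this, we run the Kemeny PTAS of \citet{KMSc07a} with accuracy $\delta = \delta(\epsilon)$ chosen so that $2\delta + \delta^2 \le \epsilon$ (for instance $\delta = \min\{1,\epsilon\}/3$), obtaining in polynomial time a ranking $\rhd$ with $C_{\K}(R,\rhd) \le (1+\delta)\min_{\rhd'} C_{\K}(R,\rhd')$, and output it; the lemma then gives a $(2+\epsilon)$-approximation to the Squared Kemeny rule.

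To prove the lemma, let $\rhd^*\in\sqK(R)$ be an optimal Squared Kemeny ranking and abbreviate $a_\succ = \swap(\succ,\rhd^*)$, $b_\succ = \swap(\succ,\rhd)$, and $D = \swap(\rhd^*,\rhd)$. Two applications of the triangle inequality for $\swap$ give $a_\succ \ge b_\succ - D$ and $a_\succ \ge D - b_\succ$ for every ${\succ}\in\mathcal{R}$, hence $a_\succ \ge |b_\succ - D|$ and therefore $a_\succ^2 \ge (b_\succ - D)^2$. Taking the $R$-weighted sum and using $\sum_\succ R(\succ)=1$,
\[
C_{\sqK}(R,\rhd^*) = \sum_{\succ} R(\succ) a_\succ^2 \ \ge\ C_{\sqK}(R,\rhd) - 2D\cdot C_{\K}(R,\rhd) + D^2 .
\]
Since $\rhd$ is near-optimal for Kemeny and $\rhd^*$ is just one ranking, $C_{\K}(R,\rhd) \le (1+\delta)\min_{\rhd'}C_{\K}(R,\rhd') \le (1+\delta)\,C_{\K}(R,\rhd^*)$; writing $\mu := C_{\K}(R,\rhd^*) = \sum_\succ R(\succ) a_\succ$ and completing the square in $D$ gives $2D\cdot C_{\K}(R,\rhd) - D^2 \le 2D(1+\delta)\mu - D^2 \le (1+\delta)^2\mu^2$. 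Finally $\mu^2 = \bigl(\sum_\succ R(\succ) a_\succ\bigr)^2 \le \sum_\succ R(\succ) a_\succ^2 = C_{\sqK}(R,\rhd^*)$ by Cauchy--Schwarz (equivalently Jensen's inequality). Combining these three facts yields $C_{\sqK}(R,\rhd) \le C_{\sqK}(R,\rhd^*) + (1+\delta)^2 C_{\sqK}(R,\rhd^*) = (2 + 2\delta + \delta^2)\,C_{\sqK}(R,\rhd^*)$, as claimed; the case $\delta = 0$ recovers the exact statement that any optimal Kemeny ranking is a $2$-approximation.

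The argument is short, so there is no deep obstacle, but two points deserve care. The genuinely load-bearing idea is to bound $a_\succ$ \emph{from below} by $|b_\succ - D|$, i.e.\ to combine \emph{both} triangle inequalities rather than only the upper bound $b_\succ \le a_\succ + D$; the naive one-sided estimate loses a factor and lands at a worse constant rather than at $2$. Second, one should check that the cited Kemeny PTAS applies to our setting of profiles with rational weights summing to $1$ (equivalently, to weighted input permutations whose weights satisfy the usual probability constraints), which it does; failing that, the $2$-approximation lemma can be paired with any polynomial-time $(1+\delta)$-approximation of the Kemeny cost, and in particular degrades gracefully, so no additional machinery is needed.
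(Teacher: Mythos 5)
Your proposal is correct, and it reaches the same quantitative conclusion as the paper (a $(1+(1+\delta)^2)$-factor, hence $2+\epsilon$ after composing with the Kemeny PTAS of \citet{KMSc07a}) by a genuinely different and more elementary route. The paper also reduces to the Kemeny PTAS, but its core step is structural: it sets $\alpha \ge 1$ to be the ratio of the Kemeny costs of $\rhd$ and $\rhd_{\sqK}$, rewrites $C_{\sqK}(R,\rhd)-\alpha^2 C_{\sqK}(R,\rhd_{\sqK})$ via the difference-of-squares factorization as $\sum_{\succ} R(\succ)\,(b_\succ+\alpha a_\succ)(b_\succ-\alpha a_\succ)$, and then invokes a bespoke extremal lemma (its \Cref{lemma:approximation}) bounding $\sum_i x_i y_i z_i$ by $\tfrac14(\sum_i x_i y_i)^2$ under the constraints $\sum_i x_i z_i\le 0$ and $z_i\le y_j$; verifying those constraints is where the paper uses the triangle inequality, and Jensen's inequality closes the argument. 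You instead fix the single scalar $D=\swap(\rhd^*,\rhd)$, use the two-sided triangle inequality to get the pointwise bound $a_\succ^2\ge (b_\succ-D)^2$, sum, and complete the square in $D$ before applying Jensen. This replaces the paper's optimization-over-$(z_i)$ lemma by a one-variable quadratic maximization, which is shorter and arguably more transparent; what the paper's formulation buys is a standalone inequality stated independently of rankings, at the cost of having to identify the worst-case configuration of the $z_i$ in its proof. Your remark that the one-sided triangle inequality only yields the constant $4$ is consistent with the paper's separate $4$-approximation argument in \Cref{app:4-approximation}, which uses exactly that weaker estimate.
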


We believe, however, that such approximation algorithms have limited interest for our applications, since a ranking may have a good approximation factor to the optimum Squared Kemeny score while not satisfying the desirable proportionality properties of the Squared Kemeny rule. Indeed, the rankings returned by Kemeny and Squared Kemeny may be very far apart from each other (in the extreme case, they may be opposite to each other except for 1 shared pairwise comparison, see \Cref{app:dist-kem-sqkem}), even though Kemeny provides a 2-approximation to Squared Kemeny. Still, approximation algorithms may have their use, for example, as subroutines in branch and bound algorithms. We leave the question of whether Squared Kemeny admits a PTAS to future work. 

\section{Empirical Analysis}
\label{sec:experiments}

In this section,
we compare the performance of the Squared Kemeny rule
and the Kemeny rule
based on several empirical experiments.

\definecolor{DublinColor}{RGB}{202, 255, 191}
\definecolor{ZurichColor}{RGB}{160, 196, 255}
\definecolor{DubaiColor}{RGB}{253, 255, 182}
\definecolor{TorontoColor}{RGB}{255, 173, 173}
\newcommand{\citybubble}[2]{\smash{\tikz[transform shape, baseline=(city.base)]{\node [fill=#1, rounded corners, inner ysep=1.5pt, inner xsep=4pt, font=\footnotesize] (city) {#2};
}}}

\begin{table}
	\centering
	\footnotesize
	\begin{tabular}{ccccccc}
		\toprule
		\textbf{Rank} & \textbf{GDP pc. (40\%)} & \textbf{Air Quality (30\%)} & \textbf{Sunniness (30\%)} &
		\textbf{Kemeny} & \textbf{Sq. Kemeny} & \\
		\midrule
		1. & San Francisco & New York City & \citybubble{DubaiColor}{Dubai} & San Francisco & San Francisco & \staysamerank{} \\
		2. & New York City & \citybubble{DublinColor}{Dublin} & Cairo & New York City & New York City & \staysamerank{} \\
		3. & \citybubble{ZurichColor}{Zurich} & \citybubble{TorontoColor}{Toronto} & Johannesburg & \citybubble{ZurichColor}{Zurich} & Sydney & \moveuprank{2} \\
		4. & \citybubble{DublinColor}{Dublin} & Buenos Aires & San Francisco & \citybubble{DublinColor}{Dublin} & \citybubble{TorontoColor}{Toronto} & \moveuprank{2} \\
		5. & Sydney & London & Lahore & Sydney & \citybubble{DubaiColor}{Dubai} & \moveuprank{6} \\
		\midrule
		6. & \citybubble{TorontoColor}{Toronto} & Sydney & Rome & \citybubble{TorontoColor}{Toronto} & Rome & \moveuprank{6} \\
		7. & London & Rio de Janeiro & Sydney & London & Johannesburg & \moveuprank{6} \\
		8. & Paris & San Francisco & Mumbai & Tokyo & Buenos Aires & \moveuprank{6} \\
		9. & Hong Kong & \citybubble{ZurichColor}{Zurich} & New York City & Paris & Rio de Janeiro & \moveuprank{6} \\
		10. & Tokyo & Tokyo & Mexico & Hong Kong & \citybubble{ZurichColor}{Zurich} & \movedownrank{7} \\
		\midrule
		11. & \citybubble{DubaiColor}{Dubai} & Rome & Buenos Aires & \citybubble{DubaiColor}{Dubai} & \citybubble{DublinColor}{Dublin} & \movedownrank{7} \\
		12. & Rome & Moscow & Bangkok & Rome & London & \movedownrank{5} \\
		13. & Seoul & Paris & Rio de Janeiro & Johannesburg & Tokyo & \movedownrank{5} \\
		14. & Moscow & Hong Kong & Istanbul & Buenos Aires & Hong Kong & \movedownrank{4} \\
		15. & Shanghai & Mexico & Seoul & Rio de Janeiro & Mexico & \moveuprank{3} \\
		\midrule
		16. & Johannesburg & Johannesburg & \citybubble{TorontoColor}{Toronto} & Seoul & Seoul & \staysamerank{} \\
		17. & Rio de Janeiro & Seoul & Tokyo & Moscow & Moscow & \staysamerank{} \\
		18. & Istanbul & Bangkok & Shanghai & Mexico & Paris & \movedownrank{9} \\
		19. & Buenos Aires & Istanbul & Lagos & Bangkok & Bangkok & \staysamerank{} \\
		20. & Mexico & Lagos & Hong Kong & Istanbul & Istanbul & \staysamerank{} \\
		\midrule
		21. & Bangkok & Shanghai & Moscow & Shanghai & Shanghai & \staysamerank{} \\
		22. & Mumbai & Cairo & Paris & Cairo & Cairo & \staysamerank{} \\
		23. & Cairo & \citybubble{DubaiColor}{Dubai} & \citybubble{ZurichColor}{Zurich} & Mumbai & Mumbai & \staysamerank{} \\
		24. & Lagos & Mumbai & London & Lagos & Lagos & \staysamerank{} \\
		25. & Lahore & Lahore & \citybubble{DublinColor}{Dublin} & Lahore & Lahore & \staysamerank{} \\
		\bottomrule
	\end{tabular}
	\vspace{5pt}
	\caption{The three input rankings of cities and the results of their aggregation using the Kemeny and Squared Kemeny rule. In the rightmost column we also report how many positions a city has moved in the Squared Kemeny ranking in comparison to the Kemeny ranking.}
	\label{tab:cities}
	\vspace{-20pt}
\end{table}

\subsection{Aggregate City Ranking}
\label{sec:experiments:city}
Our first experiment is a detailed example.
We selected $25$ cities around the world
and ranked them, based on their
Gross Domestic Product (GDP) per capita,
air quality (measured by average PM 2.5 concentration),
and sunniness (based on the average number of sunshine hours in a year).
The details of the sources used can be found in \Cref{app:cities}.
Then, we assigned
weight $40\%$ to GDP per capita,
$30\%$ to air quality, and
$30\%$ to sunniness.
We aggregated the rankings using the Kemeny and Squared Kemeny rules.
While Squared Kemeny selected a unique ranking,
Kemeny selected $4$ tied outputs, in which
a few cities differ by $1$ to $3$ positions.
When increasing the weight on the GDP ranking by a small amount, only
one of these outputs stays optimal, and we went with that output.
The aggregation results are presented in \Cref{tab:cities}.

Note that the top-$5$ of the Kemeny ranking
is identical to the top-$5$ of the GDP per capita ranking.
This includes Dublin and Zurich, which are both in the bottom-$5$ of the sunniness ranking.
This is exactly the behavior we would like to avoid in the proportional aggregation of rankings:
focusing mainly on one input ranking and disregarding another
with still significant weight assigned to it.

In contrast, both Dublin and Zurich do not appear in
the top-$5$ of the ranking selected by the Squared Kemeny rule.
Instead, its top-$5$ includes Toronto
(which is in the middle of the sunniness ranking and 
relatively high in both GDP per capita and air quality rankings)
and Dubai (which, although near the bottom of the air quality ranking,
is the first according to sunniness and $11$-th based on GDP per capita).
In this way, the top-$5$ of the Squared Kemeny rule output ranking
arguably offers more uniform representation of the highest ranked cities
across all input rankings.

\subsection{Drawing Embeddings of Rankings}
\label{sec:embeddings}
Next, we visualize 
how the rankings output by the Kemeny and Squared Kemeny rules
relate to the input rankings, using two methods of embedding rankings into 2-dimensional Euclidean space.

The first method is called \emph{map of preferences},
introduced by \citet{FalKacSorSzuWas-2023-DivAgrPol},
and starts by computing the swap distances between each pair of rankings present in a profile.
Then, we apply a classical multidimensional scaling algorithm~\citep{Tor-1952-MDS}
to put each ranking as a dot on a plane in such a way
that the Euclidean distances between the dots
reflect the swap distances between the rankings as well as possible.
By the size of a dot we signify the weight of a given ranking in the profile.
In order to obtain the coordinates for the outputs of the Kemeny and Squared Kemeny rules as well,
we simply include them while computing the distance matrix.

\Cref{fig:microscope} presents maps of preferences of six profiles
which we have generated using different models.
For each, we sampled 200 rankings, with possible repetitions, over 10 alternatives
and constructed the profile by assigning each ranking a weight
proportional to the number of times it was sampled.
We have also verified that the behavior of Kemeny and Squared Kemeny visible on these examples is consistent across different profiles generated in the same way.

The first two pictures present profiles drawn from the \emph{Euclidean model},
in which we sample 10 \emph{alternative points} and 200 \emph{voter points}
uniformly at random from the unit disc 
(the first picture)
or from its boundary, the unit circle (second picture).
Then, for every voter point we record a ranking of all alternatives
in order of increasing distance from the voter.
For the disc, the Kemeny and Squared Kemeny rules select rankings very close to each other.
However, for the circle, the difference is significant.
While Squared Kemeny chooses a ranking that is in the center, Kemeny outputs a ranking that is similar to one of the input rankings on a circle.
This is because for every ranking on the circle, we also have the reversed (or close to the reversed)  ranking on the opposite side of the circle.
Thus, all possible rankings have a similar average swap distance to the profile
and the smallest occurs on the part of the circle from which, by chance, we sampled more voter points, which is then chosen by Kemeny.
In contrast, since Squared Kemeny minimizes the average squared swap distance,
it tries to equalize the distances to all rankings present in the profile.

\begin{figure}
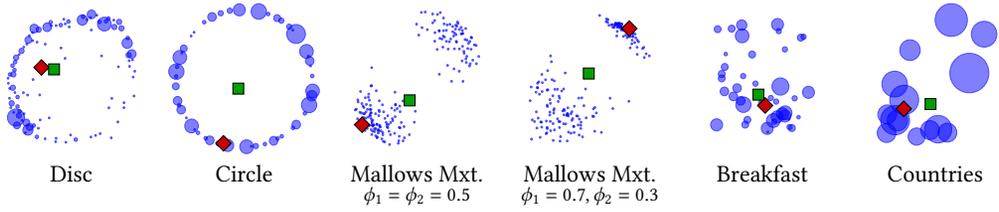

	\definecolor{VoteColor}{RGB}{0, 0, 255}
	\definecolor{KemenyColor}{RGB}{200, 0, 0}
	\definecolor{SqKemenyColor}{RGB}{0, 160, 0}
	\centering

	\caption{Maps of preferences. For the Kemeny rule the position of its output is denoted with a red diamond,
		and for Squared Kemeny with a green square.}
	\label{fig:microscope}
\end{figure}

The next two pictures show the maps of preferences for profiles generated from the mixture of two \emph{Mallows models}~\citep{Mal-1957-Mallows}.
Given a central ranking $\succ$ and a noise parameter $\phi \in [0,1]$, we sample each ranking $\succ'$ with probability proportional to $\phi^{\swap(\succ,\succ')}$ under the Mallows model.
Thus, for smaller $\phi$, the distribution is more concentrated around $\succ$.
We generated $55\%$ of the rankings using one Mallows model with the central ranking $\succ_1$ and noise $\phi_1$ and $45\%$ using another Mallows model with the central ranking $\succ_2$ that is the complete reverse of $\succ_1$ and noise $\phi_2$.
When $\phi_1=\phi_2=0.5$ (the third picture), 
the Kemeny rule outputs $\succ_1$, the central ranking of the Mallows model responsible for $55\%$ of the rankings
while the Squared Kemeny rule selects a ranking in between $\succ_1$ and $\succ_2$.
Interestingly, if $\phi_2$ is significantly smaller than $\phi_1$ (the fourth picture),
the Kemeny rule outputs the central ranking of the smaller but more concentrated model,
while Squared Kemeny is still between $\succ_1$ and $\succ_2$.
This confirms our intuition that Squared Kemeny works like an average of rankings.

Finally, the last two pictures present the profiles drawn from models based on real-world data from Preflib~\citep{MatWal-2013-Preflib}:
the breakfast dataset~\citep{GreRao-1972-Breakfast}
that contains 42 preference orders over 15 breakfast items;
and the 2016 countries ranking dataset~\citep{BoeSch-2023-RealWorld},
where 107 countries are ranked according to 14 different criteria.
For each dataset, we sampled with replacement 200 rankings and then restricted each ranking to 10 randomly chosen alternatives.
For both profiles, Kemeny and Squared Kemeny choose similar rankings, with the former closer to the most concentrated part, and the latter closer to the center of the picture.

\begin{figure}
	\centering
	\begin{tikzpicture}
		[euclid/.style={inner sep=0pt, draw=black!40}]
		\def\ls{\small} \def\w{2.75}
		\node[euclid]            at (0*\w,0) {\includegraphics[width=0.16\textwidth]{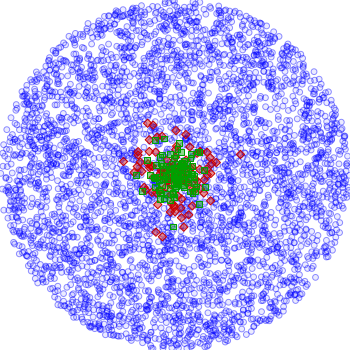}};
		\node[euclid]            at (1*\w,0) {\includegraphics[width=0.16\textwidth]{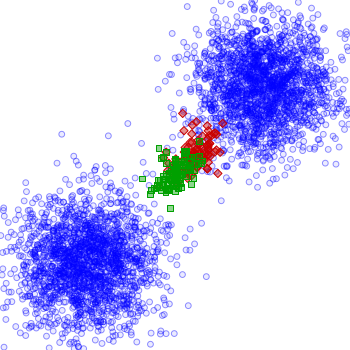}};
		\node[euclid]            at (2*\w,0) {\includegraphics[width=0.16\textwidth]{euclidean/experiment-two-gaussians-30-10.png}};
		\node[euclid]            at (3*\w,0) {\includegraphics[width=0.16\textwidth]{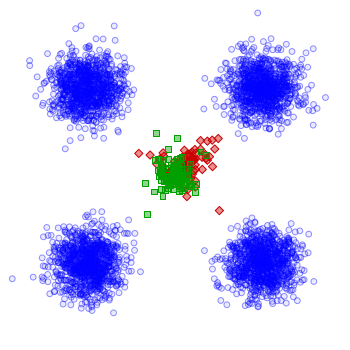}};
		\node[euclid, xscale=-1] at (4*\w,0) {\includegraphics[width=0.16\textwidth]{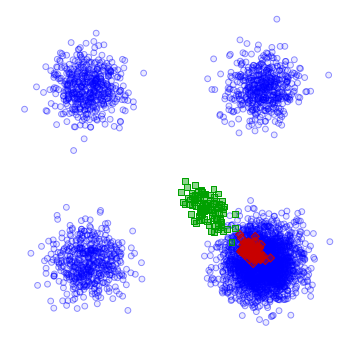}};
	\end{tikzpicture}
	\caption{Euclidean embeddings. For the Kemeny rule, the positions of its outputs are denoted with a red diamond,
		and for Squared Kemeny with a green square.}
	\label{fig:euclidean}
\end{figure}

The second method of visualizing the positions of rankings is specific to profiles drawn from the Euclidean model.
Given a profile specified by voter and alternative locations, and given an output ranking $\rhd$, we try to find a point in the same Euclidean space that would induce the ranking $\rhd$. In general, such a point may not exist, but using an ILP we can find a point that induces a ranking with minimal possible swap distance to $\rhd$. 

In our experiments shown in \Cref{fig:euclidean}, we sample $m = 10$ candidate locations uniformly from the unit square, and $n = 40$ voter locations according to different distributions of interest. We then compute the outputs of the Kemeny and Squared Kemeny rules and embed them as a point in space. For each voter location distribution, we repeat this process 100 times and superimpose the results for the 100 profiles in the same figure, showing voters as blue dots, Kemeny rankings as red diamonds, and Squared Kemeny rankings as green squares.

The first voter distribution samples the voter locations uniformly from the unit disc, and we see that both rules select central rankings, with somewhat more variance in the Kemeny rankings. For the second and third picture, we sample voter locations from two Gaussians centered in the bottom left and top right corners. In the second picture, we sample 20 voters each from the two Gaussians and see that both rules select rankings roughly midway between the center points of the Gaussians. In the third picture, we sample 30 voters from the bottom left and 10 voters from the top right. Kemeny selects rankings located within the bigger cluster, while Squared Kemeny chooses locations that interpolate between the two (while still being closer to the larger cluster). The fourth and fifth picture repeat the same process with four Gaussians with voters uniformly distributed (in the fourth picture) or with 25 voters in the bottom left and 5 voters each in the other three clusters.

\subsection{Worst-Case Average Distance}
\label{sec:experiments:group-distance}
Our final experiment revisits the problem studied in \Cref{sec:proportionality}, where we considered the average distance between the rankings of a group of voters and the output ranking. There, we bounded the distance for the worst-case profile, while here we compute it for randomly sampled profiles.
Consider a profile $R$ and
an output ranking $\rhd$.
For each size $\alpha \in [0,1]$,
we look for the ``unhappiest'' group $S$ (i.e., subprofile of $R$) of size $\alpha$,
in the sense that the average distance between $\rhd$ and the rankings of $S$ is large.
Formally, we define
\(
	\mu_{\alpha}(R, \rhd) = \max_{S \subseteq R: |S| = \alpha} \frac{1}{\alpha} \sum_{{\succ} \in \mathcal{R}} S(\succ) \cdot \swap(\succ, \rhd)  
\)
for this worst average distance, where $S \subseteq R$ denote that $S$ is a subprofile of $R$ with size $|S|= \alpha$.

\begin{wrapfigure}{r}{0.4\textwidth}
	\includegraphics[width=\linewidth]{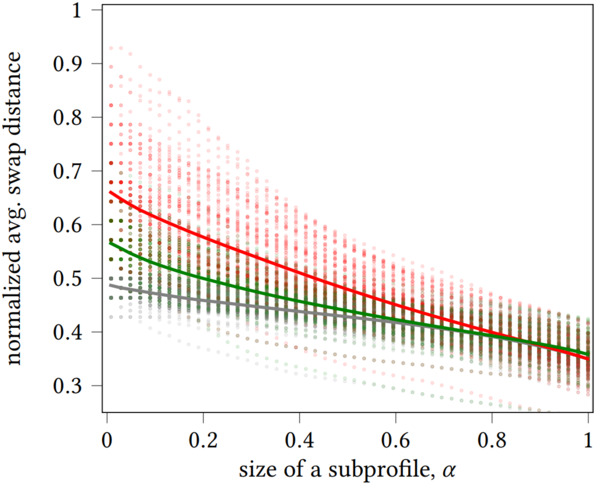}
	\vspace{-23pt}
	\caption{The maximal average distances between the subprofile of size $\alpha$ and the output of the Kemeny (red) and Squared Kemeny (green) rules, plus lower bound (gray).}
	\label{fig:avgdist}
\end{wrapfigure}

For the experiment, we sampled 100 profiles with 8 alternatives and 50 rankings according to various distributions.
The results of our experiment for the Euclidean disc model is presented in \Cref{fig:avgdist}. (See \Cref{app:avgdist} for other distributions.)
For each profile and each size
$\alpha \in \{\nicefrac{1}{50}, \nicefrac{2}{50}, \dots, 1\}$, we put a red dot indicating the value of $\mu_\alpha(R, \rhd)$ for Kemeny, and a green dot for Squared Kemeny. The lines show the average value for each $\alpha$.

\Cref{fig:avgdist} also shows a lower bound. For each $\alpha$, this is computed by finding the ranking $\rhd$ that optimizes $\mu_{\alpha}(R, \rhd)$, and placing a gray dot at that value. Note that different rankings may be optimum for different $\alpha$, and so this lower bound is ``unfair'' to rules like Kemeny or Squared Kemeny, which must choose a single ranking which gets evaluated for all $\alpha$ simultaneously.

We see that when $\alpha$ is close to $1$, Kemeny leads to smaller distances than Squared Kemeny.
This is to be expected, as Kemeny minimizes the overall average distance.
However, for $\alpha$ smaller than $0.8$, it is Squared Kemeny that returns the lower values on average.
Observe also that the difference between the Kemeny and Squared Kemeny for $\alpha$ close to $1$ is negligible,
while the differences for $\alpha$ close to $0$ are significant or even substantial depending on the considered model.

\section{Conclusions and Future Work}

We have studied the Squared Kemeny rule and argued that it behaves more appropriately in contexts where we want to aggregate rankings proportionally, compared to its better-known cousin the Kemeny rule. In particular, we have shown a full characterization of the Squared Kemeny rule based on a proportionality axiom, proved general proportionality guarantees for this rule, and demonstrated in an experimental study that it behaves similar to a mean. Based on these results, we conclude that the Squared Kemeny rule has the potential of providing a consensus ranking in situations where a majoritarian rule such as Kemeny is undesirable.

There are many interesting directions for future work exploring the topic of proportional rank aggregation. In particular, one could study new SPFs with the aim to find more proportional rules. For instance, one could consider rules based on the Spearman footrule distance instead of the swap distance \citep{diaconis1977spearman,viappiani2015characterization}, analogues of Proportional Approval Voting \citep{aziz2017jr}, or the family of ``$p$-Kemeny rules'' that minimize the $p$-th power of the swap distance. 
One could also derive other proportionality axioms that are not defined in terms of swap distance. 
For example, following \citet{skowron2017proportionalrankings} who study proportional rankings based on approval votes, one could phrase proportionality as requiring that every top-initial segment of the output ranking, viewed as a set, should be a proportional committee. 
How to adapt this to ranking input is not clear, though, since it is an open question whether axioms for proportional multi-winner rules (such as Proportionality for Solid Coalitions, PSC, \citealp{aziz2020expanding}) are compatible with \emph{committee monotonicity}, which is necessary to adapt a multi-winner rule to output a ranking.

Finally, we note that the methods that we have introduced may prove useful in other contexts. For example, we considered bounds on the maximum dissatisfaction of a voter, as a function of the voter's weight. Plotting and bounding these functions could provide insights in all kinds of collective decision-making problems. Further, our work can be seen as proportional decision-making on binary issues (``should $a$ be ranked above $b$?'') under constraints (in our case, transitivity). This general topic has just started to be explored by researchers \citep{skowron2022proportionalbinarydecisions,lackner2023proportional,chandak2024proportional,masak2023generalised}.

\subsection*{Acknowledgements}
This work was funded in part by the French government under management of Agence Nationale de la Recherche as part of the ``Investissements d’avenir'' program, reference ANR-19-P3IA-0001 (PRAIRIE 3IA Institute).

\bibliographystyle{ACM-Reference-Format}
\bibliography{references}

\appendix
\newpage

\addtocontents{toc}{\protect\setcounter{tocdepth}{1}}

\section{Proof of Theorem \ref{thm:characterization}}
\label{app:proof-characterization}

In this appendix, we will provide a full proof of \Cref{thm:characterization}. To this end, we first note that the direction from left to right is easy, especially since we have already shown that the Squared Kemeny rule satisfies 2RP. We thus focus on the converse and suppose for this that $f$ is an SPF that satisfies neutrality, continuity, reinforcement, and 2RP. To prove that $f$ is the Squared Kemeny rule, we will use an equivalent formulation of this rule by exchanging the minimum with a maximum in its definition: $\sqK(R)=\arg\max_{\ssucc\in\mathcal{R}} -\sum_{\succ\in\mathcal{R}} R(\succ) \swap(\succ, \ssucc)^2$. Then, our goal is to show that $f$ is the SPF that chooses the rankings that maximize $-\sum_{\succ\in\mathcal{R}} R(\succ) \swap(\succ, \ssucc)^2$. For this, we will use a hyperplane argument as, e.g., showcased by \citet{YouLev-1978-KemenyAxioms}. 

As a first step, we hence change the domain of $f$ from ranking profiles to a numerical space. To this end, let $b:\{1,\dots, |\mathcal{R}|\}\rightarrow \mathcal{R}$ denote an enumeration of all possible input rankings. Moreover, we define $\mathbb{T}=\{v\in \mathbb{Q}^{m!}\colon \sum_{i=1}^{m!} v_i=1 \land v_i\geq 0 \text{ for all } i\in \{1,\dots, m!\}\}$. Using the enumeration $b$, we can represent every profile $R$ as a vector $v\in\mathbb{T}$ by defining $v_i=R(b(i))$ for all $i\in \{1,\dots, m!\}$. For simplicity, we will denote the vector associated with a profile $R$ by $v(R)$. Moreover, for every permutation $\tau:A\rightarrow A$, we define the permutation of a vector $v$ by $\tau(v)_i=v_j$ for all $i,j$ such that $\tau(b(j))=b(i)$. That is, if the ranking $b(j)$ has weight $v_j$ in the profile $R$, then the ranking $\tau(b(j))=b(i)$ has weight $\tau(v)_i=v_j$ in the permuted profile $\tau(R)$. Hence, $\tau(v(R))=v(\tau(R))$.

By the definition of SPFs and profiles, it is straightforward that there is a function $g:\mathbb{T}\rightarrow2^{\mathcal{R}}\setminus \{\emptyset\}$ such that $f(R)=g(v(R))$ for all profiles $R$. Furthermore, $g$ inherits the desirable properties of $f$: 
\begin{itemize}
	\item $g$ satisfies neutrality: it holds  for all $v\in\mathbb{T}$ and all permutations $\tau:A\rightarrow A$ that $g(\tau(v))=\{\tau(\ssucc)\colon \ssucc\in g(v)\}$. 
	\item $g$ satisfies reinforcement: it holds for all $v,v'\in\mathbb{T}$ with $g(v)\cap g(v')\neq\emptyset$ and all $\lambda\in (0,1)\cap \mathbb{Q}$ that $g(\lambda v+(1-\lambda) v')=g(v)\cap g(v')$. 
	\item $g$ satisfies continuity: it holds for all $v,v'\in\mathbb{T}$ that there is a constant $\lambda\in (0,1)\cap \mathbb{Q}$ such that $g(\lambda v+(1-\lambda) v')\subseteq g(v)$. 
\end{itemize}

We next extend $g$ to the domain $\mathbb{Q}^{m!}$ while preserving its desirable properties.

\begin{lemma}\label{lem:domainextension}
	There is a neutral, reinforcing, and continuous function $\hat g:\mathbb{Q}^{m!}\rightarrow 2^{\mathcal{R}}\setminus\{\emptyset\}$ such that $f(R)=\hat g(v(R))$ for all profiles $R\in\mathcal{R}^*$. 
\end{lemma}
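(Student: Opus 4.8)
The plan is to extend $g$ from the rational simplex $\mathbb{T}$ to all of $\mathbb{Q}^{m!}$ by the standard two-step recipe used in reinforcement-based characterizations (as in \citet{YouLev-1978-KemenyAxioms}). The one fact I would establish first is that the vector $u = \tfrac{1}{m!}\mathbf{1}$, where $\mathbf{1} = (1,\dots,1)$, which corresponds to the uniform profile, satisfies $g(u) = \mathcal{R}$: the uniform profile is fixed by every permutation $\tau$ of $A$, so neutrality of $g$ forces $g(u)$ to be a nonempty, $S_m$-invariant subset of $\mathcal{R}$, and since the action of $S_m$ on linear orders is transitive this forces $g(u) = \mathcal{R}$. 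This is the property that makes the extension well-defined.

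\emph{Step 1 (homogenization).} For every $v \in \mathbb{Q}^{m!}$ with all coordinates nonnegative and $s(v) := \sum_i v_i > 0$, set $\hat g(v) := g\big(v/s(v)\big)$, which agrees with $g$ on $\mathbb{T}$; put $\hat g(0) := \mathcal{R}$. \emph{Step 2 (translation).} For arbitrary $v \in \mathbb{Q}^{m!}$, pick a rational $\mu > 0$ with $v + \mu\mathbf{1}$ nonnegative and set $\hat g(v) := \hat g(v + \mu\mathbf{1})$. The key claim is that this does not depend on $\mu$. It suffices to show that for nonnegative $w$ with $s(w) > 0$ and rational $\delta > 0$ one has $\hat g(w + \delta\mathbf{1}) = \hat g(w)$: a short computation shows that the normalization of $w + \delta\mathbf{1}$ equals the convex combination $\lambda\,\overline{w} + (1-\lambda)\,u$ with $\overline{w} = w/s(w)$ and $\lambda = s(w)/(s(w)+\delta\, m!) \in (0,1)\cap\mathbb{Q}$, so reinforcement of $g$, together with $g(\overline{w}) \cap g(u) = g(\overline{w}) \cap \mathcal{R} \neq \emptyset$, gives $g\big(\lambda\,\overline{w} + (1-\lambda)\,u\big) = g(\overline{w}) \cap \mathcal{R} = g(\overline{w})$. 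Applying this with $w = v + \mu_1\mathbf{1}$ and $\delta = \mu_2 - \mu_1$ gives independence of $\mu$; taking $\mu$ (or $\delta$) to be $0$ shows Steps 1 and 2 are consistent, so in particular $\hat g(v(R)) = g(v(R)) = f(R)$ for every profile $R$.

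It then remains to transfer neutrality, reinforcement, and continuity from $g$ to $\hat g$. In each case I would first choose a single rational $\mu$ large enough to move all the (finitely many) vectors under consideration into the nonnegative orthant; this is legitimate because $\tau(\mathbf{1}) = \mathbf{1}$ and because $\lambda(v + \mu\mathbf{1}) + (1-\lambda)(v' + \mu\mathbf{1}) = \lambda v + (1-\lambda)v' + \mu\mathbf{1}$, so the problem reduces to the analogous statement for nonnegative vectors, hence, after normalizing, to the corresponding property of $g$. Neutrality uses only that $s(\cdot)$ is permutation-invariant. For reinforcement and continuity one uses the elementary identity that for nonnegative $w, w'$ with positive sums, the normalization of $\lambda w + (1-\lambda)w'$ equals $\mu'\,\overline{w} + (1-\mu')\,\overline{w'}$ with $\mu' = \lambda s(w)/(\lambda s(w) + (1-\lambda)s(w'))$, and that $\lambda \mapsto \mu'$ is a rational-preserving bijection of $(0,1)$ onto itself; reinforcement (resp. continuity) of $g$ for the normalized vectors then transfers verbatim. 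None of these verifications is difficult; the only point that genuinely requires the hypotheses is the well-definedness in Step 2, which is precisely where $g(u) = \mathcal{R}$ and reinforcement do the work.
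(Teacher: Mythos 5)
Your proposal is correct and follows essentially the same two-step route as the paper: first homogenize by normalization, then translate into the nonnegative orthant along the uniform direction, using neutrality to get $g(\text{uniform})=\mathcal{R}$ and reinforcement to make the translation well-defined, with the three axioms transferred via the same convex-combination/renormalization identities. The only differences are cosmetic (you translate by $\mathbf{1}$ rather than the normalized uniform vector $v^*$, and you verify the axioms once after a common translation rather than separately at each extension stage).
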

\begin{proof}
	As we observed before this lemma, there is a neutral, reinforcing, and continuous  function $g:\mathbb{T}\rightarrow 2^{\mathcal{R}}\setminus \{\emptyset\}$ such that $f(R)=g(v(R))$ for all $R\in\mathcal{R}^*$. We will prove this lemma by extending $g$ to $\mathbb{Q}^{m!}$. To this end, we will first extend the domain of $g$ to $\mathbb{T}^+=\{v\in\mathbb{Q}^{m!}\colon \sum_{i=1}^{m!} v_i>0 \land v_i \geq 0\text{ for all } i\in \{1,\dots, m!\}\}$ and then to $\mathbb{Q}^{m!}$.\medskip
	
	\textbf{Step 1: Extension to $\mathbb{T}^+$}
	
	For extending $g$ to the domain $\mathbb{T}^+$, we let $\bar g(v)=g(\lambda v)$ for all $v\in \mathbb{T}$, where $\lambda=\frac{1}{\sum_{i=1}^{m!} v_i}$ is the unique scalar such that $\sum_{i=1}^{m!} \lambda v_i=1$. First, we note that $\bar g$ is defined for all $v\in \mathbb{T}^+$ as the only difference between $\mathbb{T}$ and $\mathbb{T}^+$ is the assumption that $\sum_{i=1}^{m!} v_i=1$ for all $v\in \mathbb{T}$, whereas $\sum_{i=1}^{m!} v_i>0$ for all $v\in\mathbb{T}^+$. Moreover, we note that $f(R)=g(v(R))=\bar g(v(R))$ for all profiles $R$ as $\bar g(v)=g(v)$ for all profiles $R$. Next, we will show that $\bar g$ is neutral, continuous, and reinforcing. To this end, let $v^1\in\mathbb{T}^+$ denote an arbitrary vector and let $\lambda_1>0$ denote the scalar such that $\bar g(v^1)=g(\lambda_1 v^1)$. 
	
	For neutrality, we note that $\tau(\lambda_1 v^1)=\lambda_1 \tau(v^1)$ for every permutation $\tau:A\rightarrow A$, so it follows that $\bar g(\tau(v^1))= g(\lambda\tau(v^1))=g(\tau(\lambda_1v^1))=\{\tau(\ssucc)\colon \ssucc\in g(\lambda_1 v^1)\}=\{\tau(\ssucc)\colon \ssucc\in \bar g(v^1)\}$. Here, the third equality follows from the neutrality of $g$. This argument shows that $\bar g$ is neutral. 
	
	Next, we turn to reinforcement and hence let $v^2\in\mathbb{T}^+$ denote a second vector with $\bar g(v^1)\cap \bar g(v^2)\neq\emptyset$ and $\lambda_2$ the scalar such that $\bar g(v^2)=g(\lambda_2 v^2)$. 
	We need to show that $\bar g(\kappa v^1+(1-\kappa)v^2)=\bar g(v^1)\cap \bar g(v^2)$ for all $\kappa\in (0,1)\cap \mathbb{Q}$. 
	To this end, we fix such a $\kappa$ and first note that there is by definition a scalar $\lambda_3>0$ such that $\bar g(\kappa v^1+(1-\kappa)v^2)= g(\lambda_3(\kappa v^1+(1-\kappa)v^2))$. 
	Next, we observe that $\bar g(v^1)\cap \bar g(v^2)= g(\lambda_1 v^1)\cap \bar g(\lambda_2 v^2)$, so we can infer from the reinforcement of $g$ that $\bar g(v^1)\cap \bar g(v^2)= g(\kappa'\lambda_1 v^1+(1-\kappa')\lambda_2 v^2)$ for every $\kappa'\in (0,1)\cap \mathbb{Q}$. 
	To show that $\bar g$ is reinforcing, it hence suffices to find a $\kappa'\in (0,1)\cap \mathbb{Q}$ such that $\lambda_3(\kappa v^1+(1-\kappa)v^2)=\kappa'\lambda_1 v^1+(1-\kappa')\lambda_2 v^2$. 
	To this end, we note that $\lambda_3(\kappa v^1+(1-\kappa)v^2)=(\frac{\lambda_3}{\lambda_1} \cdot \kappa)\cdot \lambda_1 v^1 + (\frac{\lambda_3}{\lambda_2} \cdot (1-\kappa))\cdot \lambda_2 v^2$. 
	Now, since $\sum_{i=1}^{m!} \lambda_3(\kappa v^1_i+(1-\kappa)v^2_i)=\sum_{i=1}^{m!} \lambda_1 v^1_i=\sum_{i=1}^{m!} \lambda_1 v^2_i=1$, we can infer that $\frac{\lambda_3}{\lambda_1} \cdot \kappa+\frac{\lambda_3}{\lambda_2} \cdot (1-\kappa)=1$. 
	Moreover, because $\frac{\lambda_3}{\lambda_1}=\frac{\sum_{i=1}^{m!} v_i^1}{\sum_{i=1}^{m!} v_i^1+v_i^2}\in (0,1)\cap \mathbb{Q}$ and $\kappa\in (0,1)\cap \mathbb{Q}$, we can infer that both $\frac{\lambda_3}{\lambda_1} \cdot \kappa\in (0,1)\cap \mathbb{Q}$ and $\frac{\lambda_3}{\lambda_2} \cdot (1-\kappa)\in (0,1)\cap \mathbb{Q}$. 
	Hence, we now define $\kappa'=\frac{\lambda_3}{\lambda_1} \cdot \kappa$ (which implies that $1-\kappa'=\frac{\lambda_3}{\lambda_2} \cdot (1-\kappa)$). It then follows that $\lambda_3(\kappa v^1+(1-\kappa)v^2)=\kappa'\lambda_1 v^1+(1-\kappa')\lambda_2 v^2$, thus proving that $\bar g$ is reinforcing.
	
	Finally, we turn to the continuity of $\bar g$ and again consider a vector $v^2\in\mathbb{T}^+$ with its scalar $\lambda^2$. Our goal is to show that there is $\kappa\in (0,1)\cap\mathbb{Q}$ such that $\bar g(\kappa v^1+ (1-\kappa)v^2)\subseteq \bar g(v^1)$. 
	This is equivalent to finding a $\hat \kappa\in\mathbb{Q}$ such that $\hat\kappa>0$ and $\bar g(\hat\kappa v^1 +v^2)\subseteq \bar g(v^1)$. 
	To see this equivalence, we define $\kappa=\frac{\hat \kappa}{\hat \kappa+1}$, 
	$\hat \lambda_3=\frac{1}{\hat\kappa\sum_{i=1}^{m!} v^1_i + \sum_{i=1}^{m!} v^2_i}$, 
	and $\lambda_3=\frac{1}{\kappa\sum_{i=1}^{m!} v^1_i + (1-\kappa)\sum_{i=1}^{m!} v^2_i}$. 
	It holds that $\hat \lambda_3=\frac{1}{\hat \kappa+1}\lambda_3$, 
	so $\bar g(\hat \kappa v^1+v^2)=g(\hat \lambda_3 (\hat \kappa v^1+ v^2))=g(\lambda_3 (\kappa v^1+ (1-\kappa)v^2))=\bar g(\kappa v^1+ (1-\kappa) v^2)$. 
	Finally, to infer a suitable $\hat \kappa$, we note that $g$ itself is continuous. 
	Hence, there is a $\kappa'\in (0,1)\cap \mathbb{Q}$ such that $g(\kappa' \lambda_1 v^1+(1-\kappa')\lambda_2 v^2)\subseteq g(\lambda v^1)=\bar g(v^1)$. 
	We thus define $\hat \kappa$ by $\hat \kappa=\frac{\kappa'\lambda_1}{(1-\kappa') \lambda_2}$ because then $\bar g(\hat \kappa v^1+ v^2)=g(\kappa' \lambda_1 v^1+(1-\kappa')\lambda_2 v^2)\subseteq \bar g(v^1)$. 
	This implies that $\bar g$ is continuous.\medskip
	
	\textbf{Step 2: Extension to $\mathbb{Q}^{m!}$}
	
	Next, we will extend $\bar g$ to the domain $\mathbb{Q}^{m!}$. To this end, we define $v^*$ as the vector with $v_i^*=\frac{1}{m!}$ for all $i\in \{1,\dots, m!\}$. Since $\tau(v^*)=v^*$ for every permutation $\tau:A\rightarrow A$, it follows from the neutrality of $\bar g$ and $g$ that $\bar g(v^*)= g(v^*)=\mathcal{R}$. Now, to extend $\bar g$ to $\mathbb{Q}^{m!}$, we define $\hat g(v)=\bar g(v+\lambda v^*)$ for all $v\in\mathbb{Q}^{m!}$, where $\lambda\in\mathbb{Q}$ is a positive constant such that $v+\lambda v^*\in\mathbb{T}^+$.
	
	As the first point, we will show that $\hat g$ is well-defined despite the fact that we do not fully specify $\lambda$. To this end, let $v\in\mathbb{Q}^{m!}$ and consider two distinct positive constants $\lambda_1, \lambda_2$ such that $v+\lambda_1 v^*\in\mathbb{T}$ and $v+\lambda_2 v^*\in\mathbb{T}$. We need to prove that $\bar g(v+\lambda_1 v^*)=\bar g(v+\lambda_2 v^*)$. Note for this that $\bar g$ is by definition homogenous, i.e., it holds for every rational constant $\kappa>0$ that $\bar g(v')=\bar g (\kappa v')$ because $\bar g$ will rescale its input vector such that $\alpha \sum_{i=1}^{m!} v'_i=1$ and then apply $g$. Hence, we have that $\bar g(v+\lambda_i v^*)=\bar g(\frac{1}{2} v + \frac{1}{2} \lambda_i v^*)$ for $i\in \{1,2\}$. Now, if $\lambda_1<\lambda_2$, it follows that $\bar g(v+\lambda_1 v^*)=\bar g(v+\lambda_1 v^*)\cap \bar g((\lambda_2-\lambda_1)v^*)=\bar g(\frac{1}{2}v+\frac{1}{2}\lambda_1 v^*)\cap \bar g(\frac{1}{2}(\lambda_2-\lambda_1)v^*)=\bar g(\frac{1}{2}v+\frac{1}{2}\lambda_2 v^*)=\bar g(v+\lambda_2 v^*)$. The first equality here uses that $\bar g((\lambda_2-\lambda_1)v^*)=\mathcal{R}$, the second one that $\bar g$ is homogenous, and the third one that $\bar g$ is reinforcing. Since an analogous argument works if $\lambda_1>\lambda_2$, it now follows that $\hat g$ is well-defined. Moreover, we note that $f(R)=\bar g(v(R)+0v^*)=\hat g(v(R))$ for all profiles $R$.

	It remains to show that $\hat g$ is neutral, reinforcing, and continuous. For this, let $v^1$ denote a vector in $\mathbb{Q}^{m!}$ and let $\lambda_1$ denote a positive scalar in $\mathbb{Q}$ such that $v^1+\lambda_1v^*\in\mathbb{T}^+$. For neutrality, let $\tau$ additionally denote a permutation on $A$. Due to the neutrality of $\bar g$, it holds that $\hat g(\tau(v^1))=\bar g(\tau(v^1)+\lambda_1v^*)=\bar g(\tau(v^1+\lambda_1v^*))=\{\tau(\ssucc)\colon\ssucc\in \bar g(v^1+\lambda_1 v^*)\}=\{\tau(\ssucc)\colon\ssucc\in \hat g(v^1)\}$. This shows that $\bar g$ is neutral. 
	
	For reinforcement, let $v^2$ denote a vector in $\mathbb{Q}^{m!}$ and $\lambda_2$ a scalar such that $v^2+\lambda_2v^*\in\mathbb{T}^+$ and $\hat g(v^1)\cap\hat g(v^2)=\bar g(v^1+\lambda_1 v^*)\cap \bar g(v^2+\lambda_2 v^*)\neq\emptyset$. Since $\bar g$ is reinforcing, it follows for every $\kappa\in (0,1)\cap\mathbb{Q}$ that $\hat g(\kappa v^1+(1-\kappa v^2))=\bar g(\kappa (v^1+\lambda_1 v^*) + (1-\kappa) (v^2+\lambda_2 v^*))=\bar g(v^1+\lambda_1 v^*)\cap \bar g(v^2+\lambda_2 v^*)=\hat g(v^1)\cap \hat g(v^2)$. This proves that $\hat g$ is also reinforcing. 
	
	Finally, for continuity, we let $v^2\in\mathbb{Q}^{m!}$ again denote a second vector and $\lambda_2$ a corresponding scalar. We need to show that there is $\kappa\in (0,1)\cap\mathbb{Q}$ such that $\hat g(\kappa v^1+(1-\kappa)v^2)\subseteq \hat g(v^1)$. To this end, we note that there is $\kappa'\in (0,1)\cap \mathbb{Q}$ such that $\bar g(\kappa' (v^1+\lambda_1 v^*) + (1-\kappa') (v^2+\lambda_2 v^*))\subseteq \bar g(v^1+\lambda_1 v^*)=\hat g(v^1)$. It now follows from the definition of $\hat g$ that $\hat g(\kappa' v^1+(1-\kappa')v^2)=\bar g(\kappa' v^1+(1-\kappa')v^2+(\kappa' \lambda_1+(1-\kappa') \lambda_2) v^*)\subseteq \hat g(v^1)$. This shows that $\hat g$ is also continuous and hence completes the proof of this lemma. 
\end{proof}

We note that our SPF $f$ uniquely entails the function $\hat g$ and that $\hat g$ fully describes $f$. Hence, we aim to describe the function $\hat g$ based on a scoring function $s({\succ}, \ssucc)$. To this end, we first note that $\hat g$ is \emph{homogenous} (i.e., $\hat g(v)=\hat g(\lambda v)$ for every $v\in\mathbb{Q}^{m!}$ and $\lambda\in\mathbb{Q}$ with $\lambda>0$). To see this, we recall that $\bar g$ is by definition homogeneous and let $v\in\mathbb{Q}^{m!}$ denote an arbitrary vector and $\kappa\in\mathbb{Q}$ a positive scalar. Moreover, let $\lambda$ denote a scalar such that $v+\lambda v^*\in\mathbb{T}^+$. Then, the homogeneity of $\hat g$ follows as $\hat g(v)=\bar g(v+\lambda v^*)=\bar g(\kappa(v+\lambda v^*))=\hat g(\kappa v)$.

Next, we further modify the representation of $f$ by considering the sets $R_{\ssucc_i}=\{v\in\mathbb{Q}^{m!}\colon {\ssucc}_i\in \hat g(v)\}$ for all $\ssucc_i\in\mathcal{R}$. 
All sets $R_{\ssucc_i}$ are symmetric to each other (i.e., if $v\in R_{\ssucc_i}$, then $\tau(v)\in R_{\tau(\ssucc_i)}$) and $\mathbb{Q}$-convex (i.e, if $v,v'\in R_{\ssucc_i}$, then $\lambda v+(1-\lambda) v'\in R_{\ssucc_i}$ for all $\lambda\in (0,1)\cap \mathbb{Q}$) because $\hat g$ is neutral and reinforcing. Moreover, since the domain of $\hat g$ is $\mathbb{Q}^{m!}$, it follows that $\bigcup_{\ssucc_i\in\mathcal{R}} R_{\ssucc_i}=\mathbb{Q}^{m!}$. Further, we denote with $\bar R_{\ssucc_i}$ the closure of $R_{\ssucc_i}$ with respect to $\mathbb{R}^{m!}$. In particular, the sets $\bar R_{\ssucc_i}$ are convex and symmetric to each other and $\bigcup_{\ssucc_i\in\mathcal{R}} \bar R_{\ssucc_i}=\mathbb{R}^{m!}$ (see \citet{You-1975-ScoringFunctions}). As the last point, we note that $\hat g(v)=\{\ssucc_i\in\mathcal{R}\colon v\in R_{\ssucc_i}\}\subseteq \{\ssucc_i\in\mathcal{R}\colon v\in \bar R_{\ssucc_i}\}$ for all $v\in\mathbb{Q}^{m!}$. 

We next aim to find a suitable representation for the sets $\bar R_{\ssucc_i}$ and use for this the separating hyperplane theorem for convex sets. In the next lemmas, we write $uv=\sum_{i=1}^{k} u_i v_i$ for the standard scalar product between two vectors $u,v \in\mathbb{R}^k$.

\begin{lemma}\label{lem:hyperplanes}
    For all distinct rankings $\ssucc_i, \ssucc_j\in\mathcal{R}$, there is a non-zero vector $u^{\ssucc_i, \ssucc_j}\in\mathbb{R}^{m!}$ such that $vu^{\ssucc_i,\ssucc_j}\geq 0$ if $v\in \bar R_{\ssucc_i}$ and  $vu^{\ssucc_i,\ssucc_j}\leq 0$ if $v\in \bar R_{\ssucc_j}$.
\end{lemma}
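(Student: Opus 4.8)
The plan is to separate the two closed convex cones $\bar R_{\ssucc_i}$ and $\bar R_{\ssucc_j}$ by a hyperplane through the origin. First I would record that each $R_{\ssucc_i}$ is closed under multiplication by positive rationals, since $\hat g$ is homogeneous, so each $\bar R_{\ssucc_i}$ is a closed convex cone and it is enough to produce a non-zero $u$ with $\bar R_{\ssucc_i}\subseteq\{x:xu\ge 0\}$ and $\bar R_{\ssucc_j}\subseteq\{x:xu\le 0\}$. By the standard separation theorem for convex sets, a separating hyperplane $\{x:xu=c\}$ exists as soon as $\operatorname{int}(\bar R_{\ssucc_i})\cap\bar R_{\ssucc_j}=\emptyset$ and $\bar R_{\ssucc_i}$ has non-empty interior; the cone property then forces $c$ to $0$, because if $xu\ge c$ for all $x$ in a cone then scaling $x$ by $\lambda\downarrow 0$ gives $c\le 0$ while scaling by $\lambda\to\infty$ rules out $xu<0$, so $xu\ge 0$ on $\bar R_{\ssucc_i}$, and symmetrically $xu\le 0$ on $\bar R_{\ssucc_j}$.

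The workhorse is the ``strict win'' set $A_{ij}:=R_{\ssucc_i}\setminus R_{\ssucc_j}$, about which I would check, in order: (i) $A_{ij}\neq\emptyset$ --- by 2RP the profile with weight $1-\epsilon$ on $\ssucc_i$ and $\epsilon$ on any other ranking outputs exactly $\{\ssucc_i\}$ for small rational $\epsilon$, so its vector lies in $A_{ij}$; (ii) $A_{ij}$ is closed under rational convex combinations and positive rational scaling --- if $v,v'\in A_{ij}$ then $\ssucc_i\in\hat g(v)\cap\hat g(v')$, so reinforcement gives $\hat g(\lambda v+(1-\lambda)v')=\hat g(v)\cap\hat g(v')$, which contains $\ssucc_i$ and excludes $\ssucc_j$, and homogeneity handles scaling; (iii) $\overline{A_{ij}}=\bar R_{\ssucc_i}$ --- ``$\subseteq$'' is clear, and for ``$\supseteq$'' one takes $v\in R_{\ssucc_i}$ with $\ssucc_j\in\hat g(v)$ together with a fixed $v_0\in A_{ij}$ and notes via reinforcement that $tv+(1-t)v_0\in A_{ij}$ for all rational $t\in(0,1)$, with $tv+(1-t)v_0\to v$ as $t\uparrow 1$; (iv) $A_{ij}$, hence $\bar R_{\ssucc_i}$, is full-dimensional --- combining via reinforcement the 2RP-profiles that output $\{\ssucc_i\}$ along each edge leaving the $\ssucc_i$-vertex of the simplex produces a relatively open patch of the simplex inside $A_{ij}$.

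The crux is then $\operatorname{int}(\bar R_{\ssucc_i})\cap\bar R_{\ssucc_j}=\emptyset$, which I would get from a small lemma on rationally convex sets: writing $\hat A_{ij}$ for the real convex hull of $A_{ij}$, one has $\hat A_{ij}\cap\mathbb{Q}^{m!}=A_{ij}$. Indeed, by Carath\'eodory a rational point of $\hat A_{ij}$ is an affine combination of affinely independent points of $A_{ij}$, and the coefficients of such a combination solve a rational linear system uniquely, hence are rational, so rational convexity of $A_{ij}$ places the point in $A_{ij}$. Since $\overline{\hat A_{ij}}=\overline{A_{ij}}=\bar R_{\ssucc_i}$ and the interior of a convex set equals that of its closure, $\operatorname{int}(\bar R_{\ssucc_i})=\operatorname{int}(\hat A_{ij})$; every rational point of this interior thus lies in $A_{ij}$, in particular not in $R_{\ssucc_j}$. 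As $A_{ji}\subseteq R_{\ssucc_j}$ this gives $A_{ji}\cap\operatorname{int}(\bar R_{\ssucc_i})=\emptyset$, and taking closures ($\overline{A_{ji}}=\bar R_{\ssucc_j}$) yields $\bar R_{\ssucc_j}\cap\operatorname{int}(\bar R_{\ssucc_i})=\emptyset$. Feeding this into the separation step of the first paragraph and sliding the hyperplane through the origin yields the required $u^{\ssucc_i,\ssucc_j}$.

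I expect the genuine obstacle to be precisely this disjoint-interiors claim. It is false in general that a family of convex sets covering $\mathbb{R}^{m!}$ admits pairwise separating hyperplanes, so the proof must exploit that the overlaps $\bar R_{\ssucc_i}\cap\bar R_{\ssucc_j}$ are ``thin'', and this is exactly where the rational-hull lemma together with the reinforcement-based identity $\overline{A_{ij}}=\bar R_{\ssucc_i}$ carry the argument; everything else is routine manipulation with closures, cones, and the finite-dimensional separation theorem.
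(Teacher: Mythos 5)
Your proof is correct, and while it ends at the same place as the paper's (separate $\operatorname{int}\bar R_{\ssucc_i}$ from $\bar R_{\ssucc_j}$ by a hyperplane, then use the cone structure to push the hyperplane through the origin), it reaches the crucial disjointness claim by a genuinely different route. The paper argues by contradiction: a rational $v$ in $\operatorname{int}\bar R_{\ssucc_i}\cap\operatorname{int}\bar R_{\ssucc_j}$ is asserted to satisfy $\ssucc_i,\ssucc_j\in\hat g(v)$, and mixing $v$ with a 2RP profile that uniquely selects $\ssucc_i$ then contradicts reinforcement plus homogeneity; full-dimensionality of the regions is obtained from neutrality (the sets $\bar R_{\ssucc}$ are permutation images of one another and finitely many of them cover $\mathbb{R}^{m!}$). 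You instead introduce the strict-win set $A_{ij}=R_{\ssucc_i}\setminus R_{\ssucc_j}$, show via reinforcement that it is rationally convex and dense in $\bar R_{\ssucc_i}$, and prove a rational-hull lemma (Carath\'eodory plus the rationality of the unique solution of a rational linear system) guaranteeing that every rational point of $\operatorname{conv}(A_{ij})$, in particular of $\operatorname{int}\bar R_{\ssucc_i}$, already lies in $A_{ij}$ and hence outside $R_{\ssucc_j}$. This buys you two things: it supplies a rigorous justification for precisely the step the paper leaves implicit --- that a rational point in the interior of $\bar R_{\ssucc_i}$ actually selects $\ssucc_i$, a fact the paper only establishes later (in its continuity lemma) using the present lemma --- and it makes the separation argument independent of neutrality, since your full-dimensionality argument uses only the 2RP profiles near the $\ssucc_i$-vertex of the simplex together with reinforcement and homogeneity. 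The cost is the extra machinery of the rational-hull lemma, but the underlying social-choice ingredients (2RP to manufacture profiles uniquely selecting $\ssucc_i$, reinforcement to propagate this along convex combinations) are the same in both proofs.
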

\begin{proof}
    Consider two distinct rankings $\ssucc_i, \ssucc_j\in\mathcal{R}$ and their respective sets $\bar R_{\ssucc_i}$ and $\bar R_{\ssucc_j}$. First, we recall that $\bigcup_{\succ_k\in\mathcal{R}} \bar R_{\ssucc_k}=\mathbb{R}^{m!}$ and that all these sets are symmetric to each other. Since there are only a finitely many such sets, this implies that all $\bar R_{\ssucc_k}$ are fully dimensional. Consequently, $\text{int }\bar R_{\ssucc_i}\neq \emptyset$ and $\text{int }\bar R_{\ssucc_j}\neq \emptyset$. 
    
    We will next show that $\text{int }\bar R_{\ssucc_i}\cap\text{int }\bar R_{\ssucc_j}= \emptyset$. 
    Assume for contradiction that this is not the case. 
    Then, there is a vector $v\in \text{int }\bar R_{\ssucc_i}\cap \text{int }\bar R_{\ssucc_j}\cap \mathbb{Q}^{m!}$. In particular, these conditions entail that $\ssucc_i, \ssucc_j\in \hat g(v)$.
    Next, consider the profile $R$ such that $R(\ssucc_i)=\frac{m(m-1)}{m(m-1)+1}$ and $R(\ssucc_j)=\frac{1)}{m(m-1)+1}$, and let $v'=v(R)$ denote the corresponding vector. 
    By 2RP, we have that $f(R)=\hat g(v')=\{\ssucc_i\}$ because 
    \[\swap(\ssucc_i,\ssucc_j)\cdot\frac{ 1}{m(m-1)+1}\leq \frac{\mchoosetwo}{m(m-1)+1}<\frac{1}{2}.\] 
    The homogeneity of $\hat g$ then shows that $\hat g(\lambda v')=\{\ssucc_i\}$ for every $\lambda\in\mathbb{Q}$ with $\lambda>0$. 
    Moreover, reinforcement shows that $\hat g(v+\lambda v')=\hat g(\frac{1}{2}v+\frac{1}{2}\lambda v')=\hat g(v)\cap \hat g(\lambda v')=\{\ssucc_i\}$ for all $\lambda\in \mathbb{Q}$ with $\lambda>0$. However, $v\in \text{int }\bar R_{\ssucc_i}\cap \text{int }\bar R_{\ssucc_j}\cap \mathbb{Q}^{m!}$ implies that there is $\lambda\in\mathbb{Q}$ with $\lambda>0$ such that $\ssucc_i,\ssucc_j\in \hat g(v+\lambda v')$, contradicting our previous insight. Hence, our initial assumption must have been wrong and the sets $\text{int }\bar R_{\ssucc_i}$ and $\text{int }\bar R_{\ssucc_j}$ are indeed disjoint.

    Now, by the separating hyperplane theorem for convex sets, there is a non-zero vector $u^{\ssucc_i, \ssucc_j}$ such that $v u^{\ssucc_i, \ssucc_j}> 0$ if $v\in \text{int } \bar R_{\ssucc_i}$ and $v u^{\ssucc_i, \ssucc_j}< 0$ if $v\in \text{int } \bar R_{\ssucc_j}$. In particular, note that the constant on the right side of our inequalities must be $0$ as $\bar R_{\ssucc _i}$ and $\bar R_{\ssucc_j}$ are cones. This implies that $v u^{\ssucc_i, \ssucc_j}\geq 0$ for all $v\in \bar R_{\ssucc_i}$ and $v u^{\ssucc_i, \ssucc_j}\leq 0$ for all $v\in \bar R_{\ssucc_j}$. 
\end{proof}

We say that a non-zero vector $u$ \emph{separates} a set $\bar R_{\ssucc_i}$ from another set $\bar R_{\ssucc_j}$ if $vu\geq 0$ for all $v\in \bar R_{\ssucc_i}$ and $vu\leq 0$ for all $v\in \bar R_{\ssucc_j}$.
Our interest in these vectors comes from the next lemma which states that the vectors that separate $\bar R_{\ssucc_i}$ from any other set $\bar R_{\ssucc_j}$ fully describe the set $\bar R_{\ssucc_i}$. 

\begin{lemma}\label{lem:representation}
    Let $\ssucc_i\in\mathcal{R}$ denote an arbitrary ranking and let $u^{\ssucc_i, \ssucc_j}$ denote non-zero vectors that separates $\bar R_{\ssucc_i}$ from $\bar R_{\ssucc_j}$ for every $\ssucc_j\in\mathcal{R}\setminus \{\ssucc_i\}$. It holds that \[\bar R_{\ssucc_i}=\{v\in \mathbb{R}^{m!}\colon\forall \ssucc_j\in \mathcal{R}\setminus \{\ssucc_i\}\colon vu^{\ssucc_i, \ssucc_j}\geq 0\}.\]
\end{lemma}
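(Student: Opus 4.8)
The plan is to prove the two inclusions separately. The inclusion ``$\subseteq$'' is immediate from the definition of the separating vectors: if $v \in \bar R_{\ssucc_i}$, then $v u^{\ssucc_i, \ssucc_j} \geq 0$ for every $\ssucc_j \in \mathcal{R} \setminus \{\ssucc_i\}$, so $v$ belongs to the right-hand side.

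For the reverse inclusion ``$\supseteq$'', I would fix a vector $v \in \mathbb{R}^{m!}$ with $v u^{\ssucc_i, \ssucc_j} \geq 0$ for all $\ssucc_j \neq \ssucc_i$ and show $v \in \bar R_{\ssucc_i}$ by a small-perturbation argument. First recall from the proof of \Cref{lem:hyperplanes} that each $\bar R_{\ssucc_k}$ is full-dimensional (there are finitely many of them, they are pairwise symmetric, and $\bigcup_{\ssucc_k \in \mathcal{R}} \bar R_{\ssucc_k} = \mathbb{R}^{m!}$), so $\text{int } \bar R_{\ssucc_i} \neq \emptyset$; pick any $w \in \text{int } \bar R_{\ssucc_i}$. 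The key observation is that $w u^{\ssucc_i, \ssucc_j} > 0$ for every $\ssucc_j \neq \ssucc_i$: we already know $w u^{\ssucc_i, \ssucc_j} \geq 0$, and if equality held, then $w$ would lie on the hyperplane $\{z : z u^{\ssucc_i, \ssucc_j} = 0\}$, but a full neighborhood of $w$ is contained in $\bar R_{\ssucc_i} \subseteq \{z : z u^{\ssucc_i, \ssucc_j} \geq 0\}$, while any neighborhood of a point on this (non-degenerate) hyperplane contains points $z$ with $z u^{\ssucc_i, \ssucc_j} < 0$ — a contradiction.

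With this in hand, for each $\epsilon > 0$ consider $v_\epsilon = v + \epsilon w$. Then $v_\epsilon u^{\ssucc_i, \ssucc_j} = v u^{\ssucc_i, \ssucc_j} + \epsilon\, w u^{\ssucc_i, \ssucc_j} > 0$ for every $\ssucc_j \neq \ssucc_i$. Since the sets $\bar R_{\ssucc_k}$ cover $\mathbb{R}^{m!}$, there is some $\ssucc_k$ with $v_\epsilon \in \bar R_{\ssucc_k}$; if $\ssucc_k \neq \ssucc_i$, then the separating property of $u^{\ssucc_i, \ssucc_k}$ forces $v_\epsilon u^{\ssucc_i, \ssucc_k} \leq 0$, contradicting the strict inequality just derived. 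Hence $v_\epsilon \in \bar R_{\ssucc_i}$ for all $\epsilon > 0$, and letting $\epsilon \to 0$ together with closedness of $\bar R_{\ssucc_i}$ gives $v \in \bar R_{\ssucc_i}$. The only mildly delicate point is the strictness claim $w u^{\ssucc_i, \ssucc_j} > 0$ for interior $w$, which is exactly where full-dimensionality of $\bar R_{\ssucc_i}$ (inherited from \Cref{lem:hyperplanes}) is used; the rest is a routine covering-and-closedness argument, so I do not anticipate a genuine obstacle.
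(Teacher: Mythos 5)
Your proof is correct and takes essentially the same route as the paper's: both directions rest on the definition of the separating vectors, the full-dimensionality of $\bar R_{\ssucc_i}$ (so that interior points satisfy the strict inequalities), the covering $\bigcup_{\ssucc_k\in\mathcal{R}} \bar R_{\ssucc_k}=\mathbb{R}^{m!}$, and closedness. Your explicit perturbation $v+\epsilon w$ is in fact a slightly more careful rendering of the paper's final step from $\text{int } S_{\ssucc_i}\subseteq \bar R_{\ssucc_i}$ to $S_{\ssucc_i}\subseteq \bar R_{\ssucc_i}$, which implicitly relies on a convex set with non-empty interior being contained in the closure of its interior.
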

\begin{proof}
    Fix a ranking $\ssucc_i$ and the vectors $u^{\ssucc_i, \ssucc_j}$ for all $\ssucc_j\in\mathcal{R}\setminus \{\ssucc_i\}$. For a simpler notation, we define $S_{\ssucc_i}=\{v\in \mathbb{R}^{m!}\colon\forall \ssucc_j\in \mathcal{R}\setminus \{\ssucc_i\}\colon vu^{\ssucc_i, \ssucc_j}\geq 0\}$, and we will show that $S_{\ssucc_i}\subseteq \bar R_{\ssucc_i}$ and $\bar R_{\ssucc_i}\subseteq S_{\ssucc_i}$. The second subset relation is straightforward by the definition of the vectors $u^{\ssucc_i, \ssucc_j}$: if $v\in \bar R_{\ssucc_i}$, then $vu^{\ssucc_i, \ssucc_j}\geq 0$ for all $\ssucc_j\neq \ssucc_i$ and therefore $v\in S_{\ssucc_i}$.

    For the converse direction, we first note that $\text{int } S_{\ssucc_i}$ is non-empty as $\bar R_{\ssucc_i}\subseteq S_{\ssucc_i}$ and $\text{int } \bar R_{\ssucc_i}\neq\emptyset$. Now, let $v\in \text{int } S_{\ssucc_i}$, which means that $vu^{\ssucc_i, \ssucc_j}>0$ for all $\ssucc_j\in \mathcal{R}\setminus \{\ssucc_i\}$. By the definition of the vectors $u^{\ssucc_i, \ssucc_j}$, it thus follows that $v\not\in \bar R_{\ssucc_j}$ for all $\ssucc_j\in\mathcal{R}\setminus \{\ssucc_i\}$. Since $\bigcup_{\ssucc_i\in\mathcal{R}} \bar R_{\ssucc_i}=\mathbb{R}^{m!}$, we derive that $v\in\bar R_{\ssucc_i}$, so $\text{int } S_{\ssucc_i}\subseteq \bar R_{\ssucc_i}$. Because $\bar R_{\ssucc_i}$ is a closed set, we finally conclude that $S_{\ssucc_i}\subseteq \bar R_{\ssucc_i}$, which completes the proof of this lemma. 
\end{proof}
 
\Cref{lem:representation} states that the vectors $u^{\ssucc_i, \ssucc_j}$ fully describe the sets $\bar R_{\ssucc_i}$ which, in turn, describe our function $\hat g$ because $\hat g(v)\subseteq \{\ssucc_i\in \mathcal{R}\colon v\in\bar R_{\ssucc_i}\}$ for all $v\in\mathbb{Q}^{m!}$ as $R_{\ssucc_i}\subseteq \bar R_{\ssucc_i}$. We show next that the subset relation between $\hat g(v)$ and $\{\ssucc_i\in \mathcal{R}\colon v\in\bar R_{\ssucc_i}\}$ is an equality for vectors in $\mathbb{Q}^{m!}$. 

\begin{lemma}\label{lem:continuity}
  It holds for all $v\in\mathbb{Q}^{m!}$ that $v\in \bar R_{\ssucc_i}$ if and only if $\ssucc_i\in \hat g(v)$.
\end{lemma}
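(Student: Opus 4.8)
The plan is to prove the two implications separately. The direction ``$\ssucc_i \in \hat g(v) \Rightarrow v \in \bar R_{\ssucc_i}$'' is immediate: if $\ssucc_i \in \hat g(v)$ then $v \in R_{\ssucc_i}$ by the definition of $R_{\ssucc_i}$, and $R_{\ssucc_i} \subseteq \bar R_{\ssucc_i}$ since $\bar R_{\ssucc_i}$ is the closure of $R_{\ssucc_i}$. All the work is in the converse, and this is precisely where the continuity axiom enters (hence the name of the lemma).

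For the converse, fix $v \in \bar R_{\ssucc_i} \cap \mathbb{Q}^{m!}$; the goal is $\ssucc_i \in \hat g(v)$. I will use two facts already available from the earlier lemmas: (i) each $\bar R_{\ssucc_k}$ is full-dimensional (established in the proof of \Cref{lem:hyperplanes}), so $\mathrm{int}\,\bar R_{\ssucc_i}$ is a nonempty open set and hence meets the dense set $\mathbb{Q}^{m!}$; and (ii) a rational point $x \in \mathrm{int}\,\bar R_{\ssucc_i}$ satisfies $\hat g(x) = \{\ssucc_i\}$ --- indeed, as noted in the proof of \Cref{lem:representation}, such an $x$ lies in none of the sets $\bar R_{\ssucc_j}$ with $j \neq i$, hence in no $R_{\ssucc_j}$ with $j \neq i$, and since $\hat g(x) = \{\ssucc_k : x \in R_{\ssucc_k}\}$ is nonempty it must equal $\{\ssucc_i\}$. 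Now pick $w \in \mathrm{int}\,\bar R_{\ssucc_i} \cap \mathbb{Q}^{m!}$, which exists by (i). Applying continuity of $\hat g$ to the pair $(v, w)$ gives some $\lambda \in (0,1) \cap \mathbb{Q}$ with $\hat g(\lambda v + (1-\lambda) w) \subseteq \hat g(v)$. Since $w \in \mathrm{int}\,\bar R_{\ssucc_i}$, $v \in \bar R_{\ssucc_i}$, and $\bar R_{\ssucc_i}$ is closed and convex, the standard convexity fact that $(1-t) a + t b \in \mathrm{int}\,C$ whenever $a \in \mathrm{int}\,C$, $b \in C$, and $t \in [0,1)$ shows that the rational point $\lambda v + (1-\lambda) w$ lies in $\mathrm{int}\,\bar R_{\ssucc_i}$. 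By (ii), $\hat g(\lambda v + (1-\lambda) w) = \{\ssucc_i\}$, so $\{\ssucc_i\} \subseteq \hat g(v)$, i.e., $\ssucc_i \in \hat g(v)$.

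The only delicate point is fact (ii) --- the claim that a rational interior point of $\bar R_{\ssucc_i}$ has image exactly $\{\ssucc_i\}$. This is the same ingredient already used (implicitly) in the proofs of \Cref{lem:hyperplanes} and \Cref{lem:representation}: at an interior point all the separating inequalities $x u^{\ssucc_i,\ssucc_j} \geq 0$ are strict, so $x$ is excluded from every $\bar R_{\ssucc_j}$ with $j \neq i$; combined with the cover property $\bigcup_k \bar R_{\ssucc_k} = \mathbb{R}^{m!}$ and the fact that $\hat g$ is nonempty-valued on $\mathbb{Q}^{m!}$, this pins down $\hat g(x)$. Everything else --- density of $\mathbb{Q}^{m!}$, the convex-segment fact, and the single application of continuity --- is routine, so I do not anticipate any real obstacle. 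The role of the lemma is simply to upgrade the inclusion $\hat g(v) \subseteq \{\ssucc_i : v \in \bar R_{\ssucc_i}\}$ to an equality on the rational domain, which is what makes the hyperplane description of the regions $\bar R_{\ssucc_i}$ usable for recovering $f$.
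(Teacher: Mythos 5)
Your proof is correct, and it shares the paper's basic skeleton (produce a rational point whose image under $\hat g$ is exactly $\{\ssucc_i\}$ and which lies robustly inside $\bar R_{\ssucc_i}$, then mix $v$ with it via a single application of continuity), but it differs in both halves of that plan. For the interior point, the paper invokes 2RP to build an explicit profile $R$ with $f(R)=\{\ssucc_i\}$ and then argues via continuity that $v(R)\in\operatorname{int}\bar R_{\ssucc_i}$, whereas you get a rational interior point abstractly from full-dimensionality of $\bar R_{\ssucc_i}$ (from the proof of \Cref{lem:hyperplanes}) plus density of $\mathbb{Q}^{m!}$; this makes your version of the lemma independent of 2RP. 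For the conclusion, the paper argues by contradiction, computing $(\lambda v+(1-\lambda)v(R))\,u^{\ssucc_i,\ssucc_j}=(1-\lambda)v(R)\,u^{\ssucc_i,\ssucc_j}>0$ for each $\ssucc_j\in\hat g(v)$ and contradicting the separating inequalities, while you argue directly: the standard convexity fact that the segment from an interior point to any point of a convex set stays in the interior puts the (rational) mixture in $\operatorname{int}\bar R_{\ssucc_i}$, where $\hat g$ is the singleton $\{\ssucc_i\}$, so $\ssucc_i\in\hat g(v)$ follows from the continuity inclusion. Your route is shorter and avoids the inner-product bookkeeping at the cost of importing the (routine) topological facts about convex interiors; both arguments rest on the same two supporting observations, namely that interior points satisfy all separating inequalities strictly and that $\bigcup_k\bar R_{\ssucc_k}=\mathbb{R}^{m!}$ with $\hat g$ nonempty-valued.
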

\begin{proof}
    Consider an arbitrary vector $v\in\mathbb{Q}^{m!}$. 
    First, it immediately follows from the definition of the sets $\bar R_{\ssucc_i}$ that $v\in \bar R_{\ssucc_i}$ if $\ssucc_i\in \hat g(v)$. 
    For the other direction, suppose for contradiction that there is $v\in\mathbb{Q}^{m!}$ and $\ssucc_i\in\mathcal{R}$ such that $\ssucc_i\not\in \hat g(v)$ but $v\in\bar R_{\ssucc_i}$. 
    In this case, we consider the vectors $u^{\ssucc_i,\ssucc_j}$ that separate $\bar R_{\ssucc_i}$ from $\bar R_{\ssucc_j}$ for all $\ssucc_j\in\mathcal{R}$. 
    Due to the definitions of these vectors, we derive that $vu^{\ssucc_i,\ssucc_j}=0$ for all $\ssucc_j\in \hat g(v)$ because $v\in \bar R_{\ssucc_j}$ if $\ssucc_j\in \hat g(v)$. 
    Next, we note that there is a profile $R$ such that $\hat g(v(R))=f(R)=\{\ssucc_i\}$ because of 2RP. 
    By the continuity of $\hat g$, there is for every vector $v'\in\mathbb{Q}^{m!}$ a $\lambda\in(0,1)\cap\mathbb{Q}$ such that $\hat g(\lambda v(R)+(1-\lambda)v')=\{\ssucc_i\}$. 
    This shows that $v(R)\in \text{int } \bar R_{\ssucc_i}$. 
    By \Cref{lem:representation}, it thus follows that $v(R)u^{\ssucc_i,\ssucc_j}>0$ for all $\ssucc_j\in\mathcal{R}\setminus \{\ssucc_i\}$. 
    Now, using again continuity, there is $\lambda\in (0,1)\cap\mathbb{Q}$ such that $\hat g(\lambda v+(1-\lambda)v(R))\subseteq \hat g(v)$. However, if we consider the vectors $u^{\ssucc_i, \ssucc_j}$, we infer for all $\ssucc_j\in \hat g(v)$ that $v(\lambda v+(1-\lambda )v(R)) u^{\ssucc_i,\ssucc_j}=(1-\lambda)v(R) u^{\ssucc_i,\ssucc_j}>0$, which contradicts that $v\in \bar R_{\ssucc_j}$ for $\ssucc_j\in \hat g(v)$. Hence, our assumption that $v\in\bar R_{\ssucc_i}$ and $\ssucc_i\not\in g(v)$ is wrong.
\end{proof}

\Cref{lem:representation,lem:continuity} show that it suffices to determine the vectors $u^{\ssucc_i,\ssucc_j}$ to characterize $\hat g$ and thus $f$. We start the analysis of these vectors for rankings that only differ in a single swap. Recall for the next lemma that $b$ is the function that enumerates all possible rankings. We moreover note that the subsequent lemma is very similar to Step 1 in the proof of \Cref{thm:welfarist}.

\begin{lemma}\label{lem:swap_hyperplanes}
    Let $\ssucc_i,\ssucc_j\in\mathcal{R}$ denote two rankings such that $\swap(\ssucc_i,\ssucc_j)=1$. The vector $u$ defined by $u_k=-\swap(\ssucc_i, b(k))^2+\swap(\ssucc_j, b(k))^2$ for all $k\in \{1,\dots, m!\}$ separates $\bar R_{\ssucc_i}$ from $\bar R_{\ssucc_j}$.
\end{lemma}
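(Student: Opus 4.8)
The goal is to show that the vector $u$ with coordinates $u_k = -\swap(\ssucc_i, b(k))^2 + \swap(\ssucc_j, b(k))^2$ is a valid separating vector for $\bar R_{\ssucc_i}$ and $\bar R_{\ssucc_j}$; equivalently, writing $s(\succ) = \swap(\succ, \cdot)^2$ for the Squared Kemeny cost function, we want $u$ to encode the difference in scores so that the sign of $v \cdot u$ tells us whether the rule prefers $\ssucc_i$ or $\ssucc_j$. Since $\bar R_{\ssucc_i}$ is the closure of $R_{\ssucc_i} = \{v : \ssucc_i \in \hat g(v)\}$, it suffices to show $v \cdot u \ge 0$ whenever $\ssucc_i \in \hat g(v)$ and $v \cdot u \le 0$ whenever $\ssucc_j \in \hat g(v)$, for $v \in \mathbb{Q}^{m!}$; the inequalities then extend to the closures by continuity of the scalar product. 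By symmetry it is enough to prove the first claim.

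First I would set up the geometry as in Step 1 of the proof of \Cref{thm:welfarist}. Let $a, b'$ be the unique pair of alternatives on which $\ssucc_i$ and $\ssucc_j$ disagree, say $a \mathrel{\ssucc_i} b'$ and $b' \mathrel{\ssucc_j} a$. For a fixed ranking $\succ$ with $d = \swap(\succ, \ssucc_i)$, we have $\swap(\succ, \ssucc_j) = d \pm 1$ (it is $d-1$ if $b' \succ a$ and $d+1$ if $a \succ b'$), so $u_k = \swap(b(k), \ssucc_j)^2 - \swap(b(k), \ssucc_i)^2 = \pm(2d \mp 1)$, and the crucial sign property is that $v \cdot u = C_\sqK(v, \ssucc_j) - C_\sqK(v, \ssucc_i)$ where I abuse notation to write $C_\sqK(v, \ssucc)$ for the Squared Kemeny cost of the profile represented by $v$. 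Here I would invoke homogeneity of $\hat g$: scaling $v$ by a positive rational scales $v \cdot u$ by the same factor without changing $\hat g(v)$, so it is enough to handle $v = v(R)$ for genuine profiles $R$.

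The heart of the argument is the following claim: if $\ssucc_i \in \hat g(v(R)) = f(R)$, then $C_\sqK(R, \ssucc_i) \le C_\sqK(R, \ssucc_j)$, i.e. $v(R) \cdot u \ge 0$. The natural route is a reinforcement argument of exactly the flavour used in Step 1 of \Cref{thm:welfarist}. Given such an $R$, consider an auxiliary profile $R'$ supported on two rankings chosen so that 2RP forces both $\ssucc_i$ and $\ssucc_j$ into $f(R')$ — concretely, using a ranking $\succ$ with $b' \succ a$ and $\swap(\succ, \ssucc_i) = d$, take $R'(\succ) = \tfrac{1}{2d}$ and $R'(\ssucc_i) = 1 - \tfrac{1}{2d}$, so that $\tfrac{1}{2d}\cdot d = \tfrac12$ lands exactly on a half-integer and 2RP yields $\{\ssucc_i, \ssucc_j\} \subseteq f(R')$ (this also needs $m \ge 2$; for $m = 2$ the single-swap pair is the whole of $\mathcal R$ and the statement is immediate). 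Then $\ssucc_i \in f(R) \cap f(R')$, so by reinforcement $\ssucc_i \in f(\lambda R + (1-\lambda) R')$ for all rational $\lambda \in (0,1)$, and in fact $f(\lambda R + (1-\lambda)R') = f(R) \cap f(R')$, which in particular still contains $\ssucc_j$ or at least does not — here I must be careful. The cleaner formulation: from $f(\lambda R + (1-\lambda) R') \ni \ssucc_i$ and the fact that $\hat g$ is captured by the hyperplanes $u^{\ssucc_i, \ssucc_j}$ of \Cref{lem:hyperplanes,lem:representation}, it is better to argue directly in terms of $C_\sqK$. The slickest approach is to note that for \emph{any} profile $R$ with $\ssucc_i \in f(R)$, reinforcement against the two-point profile $R'$ above gives $\ssucc_i \in f(\lambda R + (1-\lambda) R')$; I can vary $\lambda$ and the supporting rankings $\succ$ (choosing one that separates $\ssucc_i$ from $\ssucc_j$ in every relevant coordinate) to squeeze out the inequality $C_\sqK(R, \ssucc_i) \le C_\sqK(R, \ssucc_j)$. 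The main obstacle is precisely making this squeezing argument airtight: translating "$\ssucc_i$ is chosen, $\ssucc_j$ need not be" into the quantitative inequality $v(R)\cdot u \ge 0$, while correctly handling the two cases $b' \succ a$ versus $a \succ b'$ and the boundary (half-integer) behaviour of $\round$ in 2RP. I expect the bookkeeping here — ensuring the auxiliary profiles have rational weights in $(0,1)$ and that 2RP applies with the intended rounding — to be the only real work; once a single auxiliary profile of the right shape is in hand, combining it with reinforcement and homogeneity to force the sign of $v(R) \cdot u$ on all of $R_{\ssucc_i}$, and then passing to the closure $\bar R_{\ssucc_i}$ by continuity, is routine.
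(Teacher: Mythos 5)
There is a genuine gap, and it sits exactly where you locate "the only real work." Your reduction is fine: since $v\cdot u = C_{\sqK}(v,\ssucc_j)-C_{\sqK}(v,\ssucc_i)$, the lemma amounts to showing that $\ssucc_i\in\hat g(v)$ implies $C_{\sqK}(v,\ssucc_i)\le C_{\sqK}(v,\ssucc_j)$. But your proposed mechanism for proving this --- reinforcing $R$ against two-ranking profiles on which 2RP forces a tie between $\ssucc_i$ and $\ssucc_j$, and then "squeezing" --- cannot produce that inequality. Reinforcement, continuity, and 2RP only ever yield \emph{set-valued} conclusions about which rankings lie in $f$ of various mixtures; at this stage of the argument nothing connects the abstract $f$ to the numerical function $C_{\sqK}$, so no amount of mixing turns "$\ssucc_i$ is chosen" into a quantitative statement about squared swap distances. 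The step is not bookkeeping; it is the entire content of the lemma, and the route you sketch gives no handle on it.

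The paper closes this gap by inverting the logic. It does not verify that the candidate vector $u$ separates the two regions; it takes the separating vector $u^{\ssucc_i,\ssucc_j}$ whose \emph{existence} is already guaranteed by \Cref{lem:hyperplanes} (via the separating hyperplane theorem applied to the convex cones $\bar R_{\ssucc_i}$, $\bar R_{\ssucc_j}$) and \emph{computes its coordinates}. The profiles you construct are exactly the right ones, but their role is different: the profile with weights $\nicefrac{1}{2}$, $\nicefrac{1}{2}$ on $\{\ssucc_i,\ssucc_j\}$ forces a tie and hence $u^{\ssucc_i,\ssucc_j}_i=-u^{\ssucc_i,\ssucc_j}_j$; the profile with weights $\nicefrac{2}{3}$, $\nicefrac{1}{3}$ forces $f=\{\ssucc_i\}$ and pins the sign, allowing the normalization $u^{\ssucc_i,\ssucc_j}_i=1$; and for each other ranking $b(k)$ the profile supported on $\{b(k),\ssucc_i\}$ with weights $\nicefrac{1}{2d}$, $\nicefrac{2d-1}{2d}$ (where $d=\swap(b(k),\ssucc_i)$) lands on the half-integer boundary of $\round$, so 2RP puts both $\ssucc_i$ and $\ssucc_j$ in the output, the profile lies on the hyperplane, and the resulting equation $v(R)u^{\ssucc_i,\ssucc_j}=0$ involves only the two coordinates $k$ and $i$ --- which determines $u^{\ssucc_i,\ssucc_j}_k=-(2d-1)=-\swap(b(k),\ssucc_i)^2+\swap(b(k),\ssucc_j)^2$. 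The conclusion that $u$ separates the regions then follows because $u$ \emph{is} (a positive multiple of) the abstract separating vector, not because of any direct cost comparison. To repair your proof you would need to reinstate \Cref{lem:hyperplanes} as the starting point and recast your auxiliary profiles as points on the separating hyperplane rather than as inputs to a reinforcement squeeze.
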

\begin{proof}
    Consider two arbitrary rankings $\ssucc_i,\ssucc_j\in\mathcal{R}$ such that $\swap(\ssucc_i,\ssucc_j)=1$ and let $a,b$ denote the alternatives with $a\mathrel{\ssucc_i} b$ and $b\mathrel{\ssucc_j} a$. Moreover, let $u^{\ssucc_i,\ssucc_j}$ denote the non-zero vector that separates $\bar R_{\ssucc_i}$ from $\bar R_{\ssucc_j}$ given by \Cref{lem:hyperplanes}. Based on $u^{\ssucc_i,\ssucc_j}$, we will show that the vector $u$ defined in this lemma also separates $\bar R_{\ssucc_i}$ from $\bar R_{\ssucc_j}$.
    
    As a first step, we consider the profiles $R$ and $R'$ with $R(\ssucc_i)=R(\ssucc_j)=\frac{1}{2}$, 
    and $R'(\ssucc_1)=\frac{2}{3}$ and $R'(\ssucc_2)=\frac{1}{3}$. 
    2RP implies for the profile $R$ that $f(R)=\hat g(v(R))=\{\ssucc_i, \ssucc_j\}$, 
    so we conclude that $v(R)\in\bar R_{\ssucc_i}\cap \bar R_{\ssucc_j}$. 
    This means that $v(R) u^{\ssucc_i, \ssucc_j}=0$ and hence $u^{\ssucc_i, \ssucc_j}_i=-u^{\ssucc_i, \ssucc_j}_j$ 
    (we assume here that $b(i)=\ssucc_i$, and $b(j)=\ssucc_j$ for simplicity). 
    By contrast, 2RP requires for $R'$ that $f(R')=\hat g(v(R'))=\{\ssucc_i\}$. 
    Thus, \Cref{lem:continuity} entails that $v(R)$ is only contained in the set $\bar R_{\ssucc_i}$. 
    Moreover, based on continuity, it is easy to show that $v(R)$ is even in $\text{int }\bar R_{\ssucc_i}$. 
    Consequently, \Cref{lem:representation} implies that $v(R')u^{\ssucc_i,\ssucc_j}>0$. 
    Since $v(R')u^{\ssucc_i,\ssucc_j}=\frac{2}{3} u^{\ssucc_i, \ssucc_j}_i+\frac{1}{3} u^{\ssucc_i,\ssucc_j}_j$ and $u^{\ssucc_i, \ssucc_j}_i=-u^{\ssucc_i,\succ_j}_j$, 
    we can now conclude that $u^{\ssucc_i, \ssucc_j}_i>0$ and $u^{\ssucc_i, \ssucc_j}_j<0$. Because every rescaling of $u^{\ssucc_i, \ssucc_j}$ also separates $\bar R_{\ssucc_i}$ from $\bar R_{\ssucc_j}$, 
    we assume from now on that 
    
    \begin{align*}
    	u^{\ssucc_i, \ssucc_j}_i&=1=-\swap(\ssucc_i,\ssucc_i)^2+\swap(\ssucc_i,\ssucc_j)\qquad\qquad \text{and} \\
    	u^{\ssucc_i, \ssucc_j}_j&=-1=-\swap(\ssucc_j,\ssucc_i)^2+\swap(\ssucc_j,\ssucc_j).
   \end{align*} 
    
    Next, we consider an arbitrary index $k$ such that $b\succ a$ for the ranking ${\succ}=b(k)$. Moreover, we define $d=\swap(\succ,\ssucc_i)$ and consider the profile $R$ with $R({\succ})=\frac{1}{2d}$ and $R(\ssucc_i)=\frac{2d-1}{2d}$. 
    For this profile, it holds that $d\cdot (1-R(\succ))=d-\frac{1}{2}$ and $d\cdot (1-R(\ssucc_i))=\frac{1}{2}$. Hence, 2RP requires that 
    \begin{align*}
    	f(R) = \{ \ssucc\in\mathcal{R} : (\swap(\succ,\ssucc)=d \qquad\text{ and } \quad &\swap(\ssucc_i, \ssucc)=0) \\
    	\text{or } (\swap(\succ, \ssucc)=d-1 \:\text{ and } \quad &\swap(\ssucc_i, \ssucc)=1)\}.
    \end{align*}
    Thus, ${\ssucc_i} \in f(R)=\hat g(v(R))$ because $\swap(\ssucc_i, \succ)=d$ and $\swap(\ssucc_i, \ssucc_i)=0$, and ${\ssucc_j} \in f(R)$ because $\swap(\ssucc_j, \succ)=d - 1$ and $\swap(\ssucc_i, \ssucc_j)=1$. This means that $v(R)\in \bar R_{\ssucc_i}\cap \bar R_{\ssucc_j}$ and the definition of the vector $u^{\ssucc_i,\ssucc_j}$ therefore implies that $v(R) u^{\ssucc_i, \ssucc_j}=0$. We thus infer that
    \begin{equation}
	    \label{eq:entry-i-vs-k}
	    u^{\ssucc_i, \ssucc_j}_k=-(2d-1) u^{\ssucc_i,\ssucc_j}_i=-2(d-1)=-d^2+(d-1)^2=\swap(b(k), \ssucc_i)^2-\swap(b(k), \ssucc_j)^2. 
    \end{equation}

    Finally, we can infer the entries $u^{\ssucc_i, \ssucc_j}_\ell$ for every input ranking $ {\succ}=b(\ell)$ with $a \succ b$ based on a symmetric argument by exchanging the roles of $\ssucc_i$ and $\ssucc_j$. In more detail, we first compute $d=\swap(\succ, \ssucc_j)$, then consider the profile $R$ with $R(\succ)=\frac{1}{2d}$ and $R(\ssucc_j)=\frac{2d-1}{2d}$, and lastly determine the output for this profile based on 2RP. This approach yields that 
    \[u^{\ssucc_i, \ssucc_j}_\ell=-\swap(b(\ell), \ssucc_i)^2+\swap(b(\ell), \ssucc_j)^2.\] 

	This completes the proof as we have shown that the vector $u$ with $u_k=-\swap(b(k), \ssucc_i)^2+\swap(b(k), \ssucc_j))^2$ indeed separates $\bar R_{\ssucc_i}$ from $\bar R_{\ssucc_j}$. 
\end{proof}

For the sake of readability, we assume from now on that \[u^{\ssucc_i,\ssucc_j}_k=-\swap(b(k), \ssucc_i)^2+\swap(b(k), \ssucc_j)^2\] 
for all output rankings $\ssucc_i,\ssucc_j$ with $\swap(\ssucc_i,\ssucc_j)=1$ and all $k\in \{1,\dots, m!\}$. Put differently, this means that the vectors that separate adjacent rankings are described by the score function of the Squared Kemeny rule. 

We next aim to extend this insight to pairs of rankings $\ssucc_i,\ssucc_j$ with larger swap distance. To this end, we proceed as follows: first, we investigate some general consequences of 2RP for the vector $u^{\ssucc_i,\ssucc_j}$. Next, we start to investigate single-crossing swap sequences. Formally, a \emph{swap sequence} $\ssucc_1,\dots,\ssucc_k$ (of length $k$) is a sequence of rankings such that $\swap(\ssucc_i, \ssucc_{i+1})=1$ for all $i\in \{1,\dots,k-1\}$. Moreover, this sequence is \emph{single-crossing} if every pair of alternatives is swapped at most once. The motivation for these single-crossing swap sequences is \Cref{lem:swap_sequence}: we can find a single-crossing swap sequence from $\ssucc_i$ to $\ssucc_j$ of length $\swap(\ssucc_i,\ssucc_j)+1$ for any pair of $\ssucc_i$, $\ssucc_j$. Finally, we show in \Cref{lem:Kemenysquared,lem:linearIndependence,lem:SCNI} that $f$ omits for every single-crossing swap sequence a profile such that $f(R)$ contains precisely the rankings in our sequence. Based on this insight, we then show that $u^{\ssucc_i,\ssucc_j}$ can be represented as $\sum_{k=0}^{t-1} u^{\ssucc_{i_k},\ssucc_{i_{k+1}}}$ for a single-crossing swap sequence $\ssucc_{i_1},\dots,\ssucc_{i_t}$ from $\ssucc_i$ to $\ssucc_j$. 

We start this analysis by proving some consequences of 2RP for arbitrary pairs of rankings.

\begin{lemma}\label{lem:2RP_general}
    Consider two arbitrary rankings $\ssucc_i,\ssucc_j\in\mathcal{R}$ such that $\swap(\ssucc_i,\ssucc_j)\geq 1$. For all $b(k)\in\mathcal{R}$, it holds that 
    \begin{enumerate}
        \item $u^{\ssucc_i,\ssucc_j}_k=0$ if $\swap(b(k),\ssucc_i)=\swap(b(k),\ssucc_j)$,
        \item $u^{\ssucc_i,\ssucc_j}_k\geq 0$ if $\swap(b(k),\ssucc_i)<\swap(b(k),\ssucc_j)$,
        \item $u^{\ssucc_i,\ssucc_j}_k>0$ if $b(k)=\ssucc_i$. 
    \end{enumerate}
\end{lemma}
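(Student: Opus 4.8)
\textbf{Proof plan for \Cref{lem:2RP_general}.}

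The plan is to establish each of the three claims by constructing a specific profile that forces the geometric containment relations, and then reading off the sign of $u^{\ssucc_i,\ssucc_j}_k$ from \Cref{lem:representation} and \Cref{lem:continuity}. Throughout, write $\rhd = b(k)$ and abbreviate $d_i = \swap(\rhd, \ssucc_i)$ and $d_j = \swap(\rhd, \ssucc_j)$. Recall that by \Cref{lem:hyperplanes} the vector $u^{\ssucc_i,\ssucc_j}$ satisfies $vu^{\ssucc_i,\ssucc_j}\ge 0$ for $v\in\bar R_{\ssucc_i}$ and $\le 0$ for $v\in\bar R_{\ssucc_j}$, and that entry $u^{\ssucc_i,\ssucc_j}_k$ is precisely the coefficient paired with the weight placed on $\rhd$.

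For claim (1), suppose $d_i = d_j =: d$. The idea is to use a profile supported on $\{\rhd, \ssucc_i\}$ (if $d\ge 1$) that places weight on $\rhd$ small enough that 2RP selects exactly the set of rankings at swap distance $0$ from $\ssucc_i$ and swap distance $d$ from $\rhd$, together with a symmetric profile on $\{\rhd,\ssucc_j\}$; if $d=0$ then $\rhd=\ssucc_i=\ssucc_j$ which is excluded since $\swap(\ssucc_i,\ssucc_j)\ge1$, so we may assume $d\ge 1$. Concretely, take $R$ with $R(\rhd) = \frac{1}{2d}$ and $R(\ssucc_i) = 1-\frac{1}{2d}$: since $d\cdot R(\rhd) = \frac12$, 2RP selects both $\ssucc_i$ (at distance $0$ from $\ssucc_i$, distance $d$ from $\rhd$) and every ranking at distance $1$ from $\ssucc_i$ and $d-1$ from $\rhd$. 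In particular we need $\ssucc_j$ itself to be selected; this holds provided $\swap(\ssucc_i,\ssucc_j)$ and the distances line up, which they need not in general, so instead the cleaner route is: observe that both $\ssucc_i$ and $\ssucc_j$ lie in $\bar R_{\ssucc_i}\cap\bar R_{\ssucc_j}$ is not automatic — rather, directly construct a profile $R$ with $\ssucc_i,\ssucc_j\in f(R)$ and $\rhd$ receiving positive weight, forcing $v(R)u^{\ssucc_i,\ssucc_j}=0$ and isolating the $\rhd$-entry. The most robust construction: put weight $\tfrac12$ on $\ssucc_i$ and $\tfrac12$ on $\ssucc_j$, plus a vanishing perturbation towards $\rhd$, and invoke reinforcement and the homogeneity of $\hat g$ together with \Cref{lem:swap_hyperplanes}-style bookkeeping; since $d_i=d_j$, the contributions of $\rhd$ to $\swap(\cdot,\ssucc_i)^2$ and $\swap(\cdot,\ssucc_j)^2$ cancel, which by the Squared-Kemeny structure already established for adjacent pairs (and telescoped along a swap path from $\ssucc_i$ to $\ssucc_j$) forces $u^{\ssucc_i,\ssucc_j}_k = d_j^2 - d_i^2 = 0$.

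For claim (2), when $d_i < d_j$, the plan is to place $\rhd$ together with $\ssucc_i$ in a profile where 2RP selects $\ssucc_i$ uniquely (so $v(R)\in\mathrm{int}\,\bar R_{\ssucc_i}$ by the continuity argument used in \Cref{lem:swap_hyperplanes}), giving $v(R)u^{\ssucc_i,\ssucc_j} \ge 0$; combined with the known vanishing contributions of the other coordinates this yields $u^{\ssucc_i,\ssucc_j}_k \ge 0$. For the strict version in claim (3), take $\rhd = \ssucc_i$: then a profile of the form $R(\ssucc_i) = 1-\tfrac{1}{2\cdot d_{\max}}$, $R(\blacktriangleleft_i) = \tfrac{1}{2d_{\max}}$ (with $\blacktriangleleft_i$ the reverse of $\ssucc_i$ and $d_{\max}=\binom m2$) has $f(R) = \{\ssucc_i\}$ by 2RP, hence $v(R)\in\mathrm{int}\,\bar R_{\ssucc_i}$, so by \Cref{lem:representation} $v(R)u^{\ssucc_i,\ssucc_j} > 0$; since all other coordinates of $v(R)$ are zero except the $\ssucc_i$-coordinate and the $\blacktriangleleft_i$-coordinate, and the latter's coefficient we already control (it equals $d_{\max}^2$ in the appropriate sign convention, which is finite), taking $d_{\max}$ large enough — or rather, using the exact ratio — forces the $\ssucc_i$-coordinate of $u^{\ssucc_i,\ssucc_j}$, which is $u^{\ssucc_i,\ssucc_j}_k$, to be strictly positive. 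Alternatively and more directly: $b(k)=\ssucc_i$ gives $d_i=0 < d_j = \swap(\ssucc_i,\ssucc_j)$, so claim (3) is the strict case of claim (2) and the strictness comes from $v(R)$ lying in the open region.

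The main obstacle I anticipate is claim (2) in full generality: unlike \Cref{lem:swap_hyperplanes}, where $\swap(\ssucc_i,\ssucc_j)=1$ makes the 2RP-profile bookkeeping transparent, here $\ssucc_i$ and $\ssucc_j$ are arbitrarily far apart, so a single two-ranking profile supported on $\{\rhd,\ssucc_i\}$ does not obviously let us conclude anything about the $\bar R_{\ssucc_j}$ side. The resolution will be to lean on the already-established fact (from \Cref{lem:swap_hyperplanes} and the remark following it) that the adjacent separating vectors are exactly the Squared-Kemeny score vectors, and to express the geometric information via profiles where $\ssucc_i$ is the unique winner, so that only the $\ge 0$ side of the inequality in \Cref{lem:hyperplanes} is needed — the sign of each coordinate then follows from $d_j^2 - d_i^2 \ge 0$ being the "expected" value. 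One must be careful that at this point in the argument we have \emph{not} yet proven $u^{\ssucc_i,\ssucc_j}_k = d_j^2 - d_i^2$ for non-adjacent pairs (that is the goal of the later lemmas), so claims (1)–(3) must be derived purely from 2RP-induced membership facts, not from the closed-form expression; this is the delicate point to get right.
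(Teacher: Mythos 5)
Your high-level strategy (derive the coordinate signs purely from 2RP-induced membership facts about the sets $\bar R_{\ssucc_i}$ and $\bar R_{\ssucc_j}$) is the right one, and you correctly flag the danger of circularity with the not-yet-proven closed form for non-adjacent pairs. But the concrete constructions do not go through, and the circularity you warn against is exactly what your argument for Claim (1) falls back on. The profile with weight $\tfrac12$ on each of $\ssucc_i$ and $\ssucc_j$ does not select $\ssucc_i$ and $\ssucc_j$ when $\swap(\ssucc_i,\ssucc_j)\ge 2$ --- by 2RP it selects the ``midpoint'' rankings --- so it gives no information about $\bar R_{\ssucc_i}\cap\bar R_{\ssucc_j}$; and the ``telescoped along a swap path'' conclusion $u^{\ssucc_i,\ssucc_j}_k=d_j^2-d_i^2$ presupposes that $u^{\ssucc_i,\ssucc_j}$ is a positive combination of the adjacent separating vectors, which is precisely what \Cref{lem:linearDependence} proves later \emph{using} this lemma (via \Cref{lem:SCNI}). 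For Claim (2), a single profile supported on $\{b(k),\ssucc_i\}$ in which $\ssucc_i$ wins only yields $R(b(k))\,u_k+R(\ssucc_i)\,u_{k_i}\ge 0$, and the $\ssucc_i$-coordinate $u_{k_i}$ is not ``vanishing'' --- by Claim (3) it is strictly positive --- so this inequality places no lower bound on $u_k$ at all. The same problem recurs in your Claim (3): you assert control over the coefficient of the reverse ranking (``it equals $d_{\max}^2$''), but that again invokes the unproven closed form for a non-adjacent pair.

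The missing device is to work with profiles supported on the pair $\{b(k),\overline{b(k)}\}$, where $\overline{b(k)}$ is the \emph{reverse} of $b(k)$. By 2RP such a profile selects exactly the rankings at a prescribed swap distance from $b(k)$ (and the complementary distance from $\overline{b(k)}$), so by tuning the weights one obtains, for Claim (1), two non-proportional profiles in which both $\ssucc_i$ and $\ssucc_j$ win (possible because they are equidistant from $b(k)$, hence also from $\overline{b(k)}$), and for Claim (2), one profile where $\ssucc_i$ wins and another where $\ssucc_j$ wins. Each such profile has only two nonzero coordinates, and subtracting a suitable scalar multiple of one membership (in)equality from the other cancels the $\overline{b(k)}$-coordinate and isolates the sign of $u^{\ssucc_i,\ssucc_j}_k$ alone. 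Claim (3) then still requires an extra contradiction argument (using $\bigcup_{\ssucc}\bar R_{\ssucc}=\mathbb{R}^{m!}$) to upgrade $v(R)u^{\ssucc_i,\ssucc_j}\ge0$ to $>0$ and to rule out the case where only the $\overline{b(k)}$-coordinate carries the positive contribution. Without this two-profile cancellation, none of the three claims is established.
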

\begin{proof}
    Consider two arbitrary rankings $\ssucc_i,\ssucc_j\in\mathcal{R}$ with $\swap(\ssucc_i,\ssucc_j)\geq 1$ and let ${\succ}=b(k)$ denote an arbitrary input ranking. Moreover, we define $\bar \succ$ as the ranking that is completely inverse to $\succ$. We prove each of our three claims separately.\medskip

    \textbf{Claim 1)}: Assume that $\swap(\succ,\ssucc_i)=\swap(\succ,\ssucc_j)$. This implies also that \begin{align*}
    	\swap(\bar\succ,\ssucc_i)
    	&=\mchoosetwo-\swap(\succ,\ssucc_i)\\
    	&=\mchoosetwo-\swap(\succ,\ssucc_j)\\
    	&=\swap(\bar\succ,\ssucc_j).
    \end{align*} 
    
    Moreover, since $\swap(\succ,\ssucc_i)=\swap(\succ,\ssucc_j)$, we can conclude that ${\succ}\neq\ssucc_i$ and ${\succ}\neq\ssucc_j$. 
    Now, consider the profiles $R$ and $R'$ defined by 
    
\begin{center}
    \begin{tabular}{ll}
    	$R(\bar \succ)=\frac{ \swap(\succ, \ssucc_i)}{ \mchoosetwo} \qquad \qquad$ &  $R(\succ)=\frac{ \mchoosetwo- \swap(\succ,\ssucc_i)}{ \mchoosetwo}$\\
    	$R'(\bar\succ)=\frac{ \swap(\succ, \ssucc_i)}{ \mchoosetwo}-\epsilon \qquad \qquad$ & $R'(\succ)=\frac{ \mchoosetwo -  \swap(\succ, \ssucc_i) }{ \mchoosetwo} + \epsilon;$
    \end{tabular}
\end{center}
    
    $\epsilon>0$ denotes here a rational constant such that $\round(\mchoosetwo(1-R(\bar\succ)))=\{\mchoosetwo-\swap(\succ, \ssucc_i))\}$ and $\round(\mchoosetwo(1-R(\succ)))=\{\swap(\succ, \ssucc_i))\}$. 2RP thus implies that 
    \begin{align*}
    	f(R)=f(R')=\{\ssucc\in\mathcal{R}\colon \swap(\succ, \ssucc)=\swap(\succ, \ssucc_i)\text{  and  } \swap(\bar\succ, \ssucc)=\swap(\bar \succ, \ssucc_i)\}.
    \end{align*}
    
    Consequently, $\ssucc_i$ and $\ssucc_j$ are both chosen for $R$ and $R'$. Hence, $v(R), v(R')\in\bar R_{\ssucc_i}\cap\bar R_{\ssucc_j}$ and $v(R)u^{\ssucc_i, \ssucc_j}=v(R')u^{\ssucc_i, \ssucc_j}=0$ by the definitions of the sets $\bar R_{\ssucc}$ and the vector $u^{\ssucc_i, \ssucc_j}$. Finally, let $\lambda=\frac{R'(\bar\succ)}{R(\bar\succ)}$. It is now easy to verify that 
   \begin{align*}
   	0=(v(R')-\lambda v(R))u^{\ssucc_i, \ssucc_j}=(R'(\succ)-\lambda R(\succ)) u^{\ssucc_i, \ssucc_j}_k.
   \end{align*}
   
   Since $R'(\succ)>R(\succ)$ and $0<\lambda<1$, this implies that $u^{\ssucc_i, \ssucc_j}_k=0$, which completes the proof of this claim.
    \medskip

    \textbf{Claim 2)}: Next, assume that $\swap(\succ,\ssucc_i)<\swap(\succ,\ssucc_j)$. In this case, we consider the following two profiles $R$ and $R'$: 
    \begin{center}
    	\begin{tabular}{ll}
    		$R(\bar \succ)=\frac{ \swap(\succ, \ssucc_i)}{ \mchoosetwo} \qquad \qquad$ &  $R(\succ)=\frac{ \mchoosetwo- \swap(\succ,\ssucc_i)}{ \mchoosetwo}$\\
    		$R'(\bar\succ)=\frac{ \swap(\succ, \ssucc_j)}{ \mchoosetwo}\qquad \qquad$ & $R'(\succ)=\frac{ \mchoosetwo -  \swap(\succ, \ssucc_j) }{ \mchoosetwo}.$
    	\end{tabular}
    \end{center}
    
    Using 2RP, it can be verified that $\ssucc_i\in f(R)$ and $\ssucc_j\not\in f(R)$, and $\ssucc_i\in f(R')$ and $\ssucc_j\not\in f(R')$. Consequently, $v(R)\in \bar R_{\ssucc_i}$ and $v(R')\in\bar R_{\ssucc_j}$, which implies that $v(R)u^{\ssucc_i,\ssucc_j}\geq 0$ and $v(R') u^{\ssucc_i, \ssucc_j}\leq 0$. Finally, we define $\lambda=\frac{R(\bar \succ)}{R'(\bar\succ)}$ and note that $0\leq\lambda<1$. Since $-\lambda v(R')u^{\ssucc_i,\ssucc_j}\geq 0$, we derive that 
    \begin{align*}
    	0\leq (v(R)-\lambda v(R')) u^{\ssucc_i, \ssucc_j} = (R(\succ)-\lambda R'(\succ))u^{\ssucc_i, \ssucc_j}_k. 
    \end{align*}
    
   Since $R(\succ)>R'(\succ)$ (as $\swap(\succ, \ssucc_i)<\swap(\succ, \ssucc_j)$) and $0\leq\lambda<1$, this inequality shows that $u^{\ssucc_i, \ssucc_j}_k\geq 0$.\medskip

    \textbf{Claim 3)}: For our last claim, we assume that ${\succ}=\ssucc_i$, which means that $\swap(\succ,\ssucc_i)=0$ and $\swap(\succ,\ssucc_j)>0$. By Claim 2), we know that $u^{\ssucc_i,\ssucc_j}_k\geq 0$ and we only need to show that this inequality is strict. For doing so, consider the profile $R$ with $R(\succ)=1-\epsilon$ and $R(\bar\succ)=\epsilon$, where $\epsilon>0$ is so small that $\round(\mchoosetwo(1-R(\succ)))=\{0\}$. As a consequence, 2RP implies that $f(R)=\{\ssucc_i\}$. By \Cref{lem:continuity}, this means that $v(R)\in\bar R_{\ssucc_i}$ and $v(R)\not\in\bar R_{\ssucc'}$ for all $\ssucc'\in \mathcal{R}\setminus \{\ssucc_i\}$. \Cref{lem:representation}, in turn, implies that $v(R) u^{\ssucc_i, \ssucc_j}\geq 0$ and that there is for every $\ssucc_k\in\mathcal{R}\setminus \{\ssucc_i\}$ another ranking $\phi(\ssucc_k)$ such that $v(R)u^{\ssucc_k, \phi(\ssucc_k)}<0$. 
    
    We will next show that $u^{\ssucc_i, \ssucc_j}>0$ and assume thus for contradiction that $u^{\ssucc_i, \ssucc_j}=0$. Moreover, let $v'$ denote a vector such that $v'u^{\ssucc_i, \ssucc_j}<0$; such a vector exists as $u^{\ssucc_i, \ssucc_j}$ is non-zero. We then define $v^*=v(R)+\delta v'$, where $\delta>0$ is so small that $v(R) u^{\ssucc_1,\ssucc_2}<0$ implies that $v^* u^{\ssucc_1, \ssucc_2}<0$ for all $\ssucc_1,\ssucc_2\in\mathcal{R}$. In particular, this means that $v^*u^{\ssucc_k, \phi(\ssucc_k)}<0$ for all $\ssucc_k\in\mathcal{R}\setminus\{\ssucc_i\}$, so $v^*\not\in\bar R_{\ssucc_k}$ for these rankings. Moreover, $v^*u^{\ssucc_i,\ssucc_j}=\delta v'u^{\ssucc_i,\ssucc_j}<0$, so $v'\not\in\bar R_{\ssucc_i}$. However, this contradicts that $\bigcup_{\ssucc_k\in\mathcal{R}}\bar R_{\ssucc_k}=\mathbb{R}^{m!}$, so it must hold that $v(R)u^{\ssucc_i,\ssucc_j}>0$.

    As the last point, we observe that $v(R)u^{\ssucc_i,\ssucc_j}>0$ is only possible if $u^{\ssucc_i,\ssucc_j}_k>0$ or $u^{\ssucc_i,\ssucc_j}_{k'}>0$ ($k'$ denotes here the index such that $\bar\succ=b(k')$). Now, assume for contradiction that $u^{\ssucc_i,\ssucc_j}_k\leq 0$, so $u^{\ssucc_i,\ssucc_j}_{k'}>0$. By Claim 2), we know that $u^{\ssucc_i,\ssucc_j}_k\geq 0$ and thus $u^{\ssucc_i,\ssucc_j}_k=0$. Finally, consider the profile $R'$ in the proof of Claim 2) with $\ssucc_j\in f(R')$. In particular, this means that $v(R')\in\bar R_{\ssucc_j}$. However, if $u^{\ssucc_i,\ssucc_j}_k=0$ and $u^{\ssucc_i,\ssucc_j}_{k'}>0$, then $v(R')u^{\ssucc_i,\ssucc_j}>0$ which contradicts that $v(R')\in \bar R_{\ssucc_j}$. Hence, the assumption that $u^{\ssucc_i,\ssucc_j}_k\leq 0$ is wrong and our third claim follows.
\end{proof}

As explained before, we now turn our focus to swap sequences. For the sake of completeness, we will next show that for every pair of rankings $\ssucc_i,\ssucc_j$, there is a swap sequence from $\ssucc_i$ to $\ssucc_j$ of length $\swap(\ssucc_i,\ssucc_j)+1$. 

\begin{lemma}\label{lem:swap_sequence}
    Let $\ssucc_i,\ssucc_j\in\mathcal{R}$ denote two rankings such that $\swap(\ssucc_i,\ssucc_j)=k$ for some $k\in\mathbb{N}$. There is a swap sequence $\hat \ssucc_1,\dots,\hat \ssucc_{k+1}$ of length $k+1$ such that $\hat \ssucc_1=\ssucc_i$ and $\hat \ssucc_{k+1}=\ssucc_j$.
\end{lemma}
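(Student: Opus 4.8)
The plan is to induct on $k = \swap(\ssucc_i, \ssucc_j)$. The base case $k = 0$ is immediate: then $\ssucc_i = \ssucc_j$ and the one-term sequence $\hat\ssucc_1 = \ssucc_i$ works. For the inductive step, assume $k \geq 1$, so $\ssucc_i \neq \ssucc_j$. The key is to peel off a single swap: I will find alternatives $a, b \in A$ that are \emph{consecutive} in $\ssucc_i$ (nothing lies strictly between them) with $a \ssucc_i b$ but $b \ssucc_j a$, produce the ranking $\ssucc_i'$ obtained from $\ssucc_i$ by transposing $a$ and $b$, observe that $\swap(\ssucc_i, \ssucc_i') = 1$ and $\swap(\ssucc_i', \ssucc_j) = k - 1$, and then apply the induction hypothesis to $\ssucc_i'$ and $\ssucc_j$ to get a swap sequence $\ssucc_i' = \hat\ssucc_1', \dots, \hat\ssucc_k' = \ssucc_j$ of length $k$; prepending $\ssucc_i$ gives the desired sequence of length $k+1$.

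The only step with any content is the existence of such a consecutive inverted pair. I would argue as follows. Since $\ssucc_i \neq \ssucc_j$, the set of pairs $\{a,b\}$ with $a \ssucc_i b$ and $b \ssucc_j a$ is nonempty; among all such pairs choose one, say with $a \ssucc_i b$, minimizing the number of alternatives lying between $a$ and $b$ in $\ssucc_i$. I claim this pair is consecutive in $\ssucc_i$. If not, pick $c$ with $a \ssucc_i c \ssucc_i b$. Using completeness of $\ssucc_j$, either $c \ssucc_j a$, in which case $\{a,c\}$ (with $a \ssucc_i c$) is an inverted pair strictly closer together in $\ssucc_i$; or $a \ssucc_j c$, and then transitivity of $\ssucc_j$ with $b \ssucc_j a \ssucc_j c$ gives $b \ssucc_j c$, so $\{c,b\}$ (with $c \ssucc_i b$) is an inverted pair strictly closer together. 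Either way we contradict minimality, so $a$ and $b$ are consecutive in $\ssucc_i$.

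Finally I would verify the two swap-distance claims. Transposing the consecutive pair $a, b$ in $\ssucc_i$ leaves the relative order of every other pair of alternatives unchanged (this is exactly why consecutiveness matters), so $\ssucc_i$ and $\ssucc_i'$ differ on precisely the pair $\{a,b\}$, giving $\swap(\ssucc_i, \ssucc_i') = 1$. Likewise, comparing $\ssucc_i'$ and $\ssucc_j$ to $\ssucc_i$ and $\ssucc_j$: the only pair whose $\ssucc_i$-vs-$\ssucc_j$ disagreement status changes is $\{a,b\}$, which was a disagreement for $\ssucc_i$ and is an agreement for $\ssucc_i'$ (since $b \ssucc_i' a$ and $b \ssucc_j a$); hence $\swap(\ssucc_i', \ssucc_j) = \swap(\ssucc_i, \ssucc_j) - 1 = k - 1$, as needed to invoke the inductive hypothesis. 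I do not expect any real obstacle here; the combinatorial fact about the minimal inverted pair is the crux, and it is routine.
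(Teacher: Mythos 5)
Your proof is correct and follows essentially the same strategy as the paper's: induction on the swap distance, peeling off one adjacent transposition at a time. The only difference is how the adjacent inverted pair is located (you take a gap-minimal inverted pair and use transitivity, while the paper traces the element that belongs at the last position of disagreement); both sub-arguments are routine and establish the same fact.
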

\begin{proof}
    We prove the claim by induction over the swap distance $\swap(\ssucc_i,\ssucc_j)=k$ for our two considered rankings $\ssucc_i,\ssucc_j$. For the induction basis, assume that $k=1$. This means that there is a single pair of alternatives $a,b$ such that $a\mathrel{\ssucc_i} b$ and $b\mathrel{\ssucc_j} a$. This is only possible if we can transform $\ssucc_i$ to $\ssucc_j$ by simply swapping $a$ and $b$ and thus, $\ssucc_i,\ssucc_j$ forms our swap sequence of length $k+1$. 
    
    Now, suppose that we can construct a swap sequence of length $k$ between any two rankings with $\ssucc_i,\ssucc_j$ with $\swap(\ssucc_i,\ssucc_j)=k-1$. We will prove that the same holds for rankings $\ssucc_i,\ssucc_j$ with $\swap(\ssucc_i,\ssucc_j)=k$. To this end, let $D=\{(a,b)\in {\ssucc_i}\setminus {\ssucc_j}\}$ and note that $|D|=k$ by definition. For proving the lemma, let $x_k$ denote the $k$-th best alternative in $\ssucc_i$ and $y_k$ the $k$-th best alternative in $\ssucc_j$ for every $k\in \{1,\dots,m\}$. Since $\ssucc_i\neq\ssucc_j$, there is an integer $k$ such that $x_k\neq y_k$ and we let $k^*$ denote the largest such integer. Put differently, this means that $x_k=y_k$ for all $k>k^*$. Consequently, there is an integer $\ell<k^*$ such that $x_\ell=y_{k^*}$. We claim that $(x_\ell, x_{\ell+1})\in D$. If this was not the case, then $x_\ell\mathrel{\ssucc_j} x_{\ell+1}$, so there is an index $k'>k^*$ such that $x_{\ell+1}=y_{k'}$. However, this contradicts that $x_{k'}=y_{k'}$, so $(x_\ell, x_{\ell+1})\in D$ is true. 

    Next, let $\ssucc_i'$ denote the ranking derived from $\ssucc_i$ by swapping $x_{\ell}$ and $x_{\ell+1}$. It holds that $\swap(\ssucc_i',\ssucc_j)=k-1$, so there is a swap sequence $\hat \ssucc_1,\dots, \hat \ssucc_k$ of length $k$ from $\ssucc_i'$ to $\ssucc_j$ by the induction hypothesis. Finally, $\ssucc_i, \hat \ssucc_1,\dots, \hat \ssucc_k$ is then a swap sequence of length $k+1$ connecting $\ssucc_i$ and $\ssucc_j$.
\end{proof}

We note that the swap sequences constructed in \Cref{lem:swap_sequence} are minimal: for any pair of rankings $\ssucc_i,\ssucc_j$ with $\swap(\ssucc_i,\ssucc_j)=k$, there cannot be a swap sequence of length less than $k+1$ that connects these two rankings. In particular, this means that the constructed sequences are single-crossing.

We next aim to show that $f$ admits for every single-crossing swap sequence $\ssucc_{i_0},\dots, \ssucc_{i_t}$ a profile $R$ such that $f(R)=\{\ssucc_{i_0},\dots, \ssucc_{i_t}\}$. To this end, we analyze the linear independence of the vectors $u^{\ssucc_{i_{k}}, \ssucc_{i_{k+1}}}$ for $k\in \{0,\dots, t-1\}$. Since all these vectors separate rankings that differ only in a single swap, \Cref{lem:swap_hyperplanes} applies and shows that they can be described by the scores assigned by the Squared Kemeny rule. Hence, we will first analyze the Squared Kemeny rule in more detail. In particular, we will show that the Squared Kemeny rule admits for every single-crossing swap sequence $\ssucc_0,\dots,\ssucc_t$ a profile $R$ such that $\sqK(R)=\{\ssucc_0,\dots,\ssucc_t\}$.

\begin{lemma}\label{lem:Kemenysquared}
	Let $\ssucc_0, \dots, \ssucc_t$ denote a single-crossing swap sequence. There is a profile $R$ such that $\sqK(R)=\{\ssucc_0,\dots,\ssucc_t\}$.
\end{lemma}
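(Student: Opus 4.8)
The plan is to prove this as a statement purely about $\sqK$, using Single-Crossing-Proportionality (\Cref{thm:sqk-single-crossing}) and direct manipulation of the cost $C_\sqK(R,\rhd)=\sum_{\succ}R(\succ)\swap(\succ,\rhd)^2$. First I would extend the given swap sequence to a \emph{maximal} single-crossing sequence $\sigma_0,\dots,\sigma_N$ with $N=\binom m2$ and $\ssucc_i=\sigma_{a+i}$; this is possible because a minimal swap sequence is single-crossing (as noted after \Cref{lem:swap_sequence}) and every single-crossing sequence embeds in a maximal one. After relabelling I assume $a=0$. Throughout I will use that along any single-crossing sequence $\swap(\sigma_i,\sigma_j)=|i-j|$, and that for the reversal $\bar\sigma_j$ of $\sigma_j$ one has $\swap(\sigma_i,\bar\sigma_j)=N-|i-j|$. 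One should note that the witness profile cannot be single-crossing: on a single-crossing profile, Single-Crossing-Proportionality forces $\sqK(R)$ to consist of the rankings at one (or two adjacent) rounded mean positions along compatible maximal sequences, a set that is never a geodesic path of length $\ge 2$ (its members have three distinct distances from $\ssucc_0$). So a genuinely non-single-crossing construction is needed.

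Next I would construct an explicit witness $R$ supported on $\ssucc_0,\dots,\ssucc_t$ together with a set of ``counterweight'' rankings, with the weights chosen so that $C_\sqK(R,\ssucc_i)$ is independent of $i\in\{0,\dots,t\}$. The first attempt is weight $c$ on each $\ssucc_j$ and $c'$ on each reversal $\bar\ssucc_j$; using the identities $\sum_{j=0}^t(i-j)^2=(t+1)(i^2-ti)+\mathrm{const}$ and $\sum_{j=0}^t|i-j|=(i^2-ti)+\mathrm{const}$ (valid for integer $i\in[0,t]$), the $i$-dependence of $C_\sqK(R,\ssucc_i)$ collapses to a single multiple of $i^2-ti$ whose coefficient vanishes exactly when $c'=\tfrac1{2N}$, and then $c=\tfrac1{t+1}-\tfrac1{2N}\ge 0$ because $t\le N$. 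This already yields $\{\ssucc_0,\dots,\ssucc_t\}\subseteq\sqK(R)$. But this fully symmetric choice is not the final construction: it also makes many off-sequence rankings tie with the sequence (for $m=3$, $t=2$ it is literally the uniform profile), so the counterweights have to be placed and weighted more carefully, and pinning this down is the bulk of the work.

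The main obstacle is thus proving $C_\sqK(R,\rhd)>C_\sqK(R,\ssucc_0)$ for every $\rhd\notin\{\ssucc_0,\dots,\ssucc_t\}$ while keeping flatness on the sequence. The key tool is a structural description of $(x_j)_{j=0}^t$ with $x_j=\swap(\ssucc_j,\rhd)$: adjacency of consecutive $\ssucc_j$ forces $|x_j-x_{j+1}|=1$, the step at $j$ being $-1$ precisely when $\rhd$ agrees with $\ssucc_t$ on the pair swapped there, so that $\rhd\in\{\ssucc_0,\dots,\ssucc_t\}$ if and only if $(x_j)$ is ``V-shaped with minimum value $0$''; further constraints come from transitivity of $\rhd$, and these are essential — the bare $\pm1$-walk constraint is too weak and admits numerically tied off-sequence walks, which is exactly why the symmetric counterweights fail. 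Writing $C_\sqK(R,\rhd)$ as a separable convex sum $\sum_j\psi(x_j)$ plus a constant, I would argue via convexity of $\psi$ and the transitivity constraints that the V-shaped walks with minimum $0$ are the unique minimizers; choosing the counterweights as the reversals of a suitable \emph{sub-segment} of the sequence (rather than the whole sequence) and tuning their weights should render the comparison strict. An alternative I would pursue in parallel is an induction on $t$: from witnesses for $\ssucc_0,\dots,\ssucc_{t-1}$ and for $\ssucc_1,\dots,\ssucc_t$, combine them with a third, carefully engineered profile so that the costs add to a function that is flat on all of $\ssucc_0,\dots,\ssucc_t$ and strictly larger elsewhere; here the delicate point is again excluding off-sequence ties, and reinforcement cannot be invoked directly since it only produces intersections rather than the union we need.
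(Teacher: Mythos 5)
There is a genuine gap here, and one of your intermediate claims is false. You assert that putting weight $c$ on each $\ssucc_j$ and $c'=\tfrac{1}{2N}$ on each reversal $\bar\ssucc_j$ ``already yields $\{\ssucc_0,\dots,\ssucc_t\}\subseteq\sqK(R)$''. Flatness of $C_{\sqK}(R,\ssucc_i)$ in $i$ (which you do compute correctly) only shows that the sequence rankings are either all selected or all rejected; it does not put them in $\sqK(R)$. For the maximal sequence with $m=3$ they are all rejected: take $\sigma_0=abc$, $\sigma_1=bac$, $\sigma_2=bca$, $\sigma_3=cba$, so $t=N=3$, $c=\tfrac1{12}$, $c'=\tfrac16$, giving $R(abc)=R(cba)=\tfrac14$, $R(bac)=R(bca)=\tfrac1{12}$, $R(cab)=R(acb)=\tfrac16$. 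Then $C_{\sqK}(R,\sigma_i)=\tfrac{42}{12}$ for every $i$, while $C_{\sqK}(R,cab)=C_{\sqK}(R,acb)=\tfrac{30}{12}$, so $\sqK(R)$ is disjoint from the sequence. The failure mode is therefore worse than the ties you acknowledge: because your counterweights $\bar\ssucc_j$ are genuine rankings scattered through the permutohedron, a ranking close to several of them can be strictly cheaper than everything on the sequence, so no tie-breaking refinement of this construction can work and the strictness argument cannot be deferred as ``the bulk of the work''. Relatedly, when $t<N$ the ranking $\sigma_{t+1}$ realizes the walk $x_j=t+1-j$ and ties exactly with the sequence under your weights; since it is an actual ranking, no appeal to transitivity constraints on walks can exclude it --- the profile itself must change, and neither of your proposed modifications (reversals of a sub-segment with tuned weights; induction on $t$) is shown to preserve flatness while making all off-sequence comparisons strict.

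You also discard the tool that the paper actually uses. You write that reinforcement ``only produces intersections rather than the union we need'', but intersections are precisely how the paper handles non-maximal sequences: it constructs a witness only for \emph{maximal} single-crossing sequences and then obtains $\{\ssucc_0,\dots,\ssucc_{t'}\}$ as $\sqK(R_1)\cap\sqK(R_2)=\sqK(\tfrac12 R_1+\tfrac12 R_2)$ for two maximal-sequence witnesses (the original sequence, and the maximal sequence $\bar\ssucc_{t-t'},\dots,\bar\ssucc_{t-1},\ssucc_0,\ssucc_1,\dots,\ssucc_{t'},\dots$ obtained by rotating through the reverses). For the maximal case the paper's ``counterweights'' are not reversals of interior rankings but the complement profiles $R^{-\ssucc_0}$ and $R^{-\ssucc_t}$ (uniform on all rankings except one endpoint), i.e., a constant-cost background minus mass at the two endpoints $\ssucc_0$ and $\ssucc_t=\bar\ssucc_0$. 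This performs the same parabola-flattening as your cancellation, but with the decisive extra property that the counterweights' contribution to an arbitrary ranking $\rhd$ depends only on $\swap(\ssucc_0,\rhd)$ and $\swap(\ssucc_t,\rhd)$, which coincide with those of the on-sequence ranking $\ssucc_d$ at the same distance $d$ from $\ssucc_0$; strictness for off-sequence $\rhd$ then follows from the triangle inequality applied to the positive weights on the interior rankings (with strict inequality at $k=d$). Your construction lacks exactly this neutrality of the counterweights, which is why it fails, and the reduction of sub-segments to the maximal case via reinforcement is what lets the paper avoid the off-sequence tie analysis you identify as the main obstacle.
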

\begin{proof}
	We will first show the lemma for a single-crossing swap sequence $\ssucc_0,\dots,\ssucc_t$ of length $t+1=\mchoosetwo+1$. To this end, we introduce some auxiliary notation: for every ranking $\succ$, we define $R^{\succ}$ as the profile with $R^{\succ}(\succ)=1$ and $R^{-\succ}$ as the profile with $R^{-\succ}(\succ')=\frac{1}{m!-1}$ for all $\succ'\in\mathcal{R}\setminus \{\succ\}$. Moreover, we recall that $C_{\sqK}(R, \ssucc)=\sum_{{\succ}\in\mathcal{R}} R(\succ)\swap(\succ, \ssucc)^2$ denotes the Squared Kemeny cost of a ranking $\ssucc$ in the profile $R$. It immediately follows that $C_{\sqK}(R^{\succ}, \ssucc)=\swap(\succ, \ssucc)^2$ for all $\succ,\ssucc\in\mathcal{R}$. Next, for determining $C_{\sqK}(R^{-\succ}, \ssucc)$, we first consider the profile $R$ with $R(\succ)=\frac{1}{m!}$ for all ${\succ}\in\mathcal{R}$. Due to the symmetry of this profile, it holds that $C_{\sqK}(R,\ssucc)=C_{\sqK}(R, \ssucc')$ for all $\ssucc \in\mathcal{R}$ and we thus define $c=m!\cdot C_{\sqK}(R,\ssucc)=\sum_{{\succ}\in\mathcal{R}} \swap(\succ, \ssucc)^2$. We then compute that $(m!-1) C_{\sqK}(R^{-\succ}, \ssucc)=\sum_{{\succ'}\in\mathcal{R}} \swap(\succ', \ssucc)^2 - \swap(\succ, \ssucc)^2= c - \swap(\succ, \ssucc)^2$ for all $\succ, \ssucc\in\mathcal{R}$. 
	
	Based on these insights, we now define the profile $R^*$ as convex combination of profiles $R^{\succ}$ and $R^{-\succ}$ for ${\succ}\in \{\ssucc_0, \dots, \ssucc_t\}$. In more detail, if $t$ is odd, then \[R^*=\frac{1}{Z} \big((m!-1)\frac{t-1}{2} (R^{-\ssucc_0}+R^{-\ssucc_t})+\sum_{k=1}^{t-1} R^{\ssucc_k}\big),\]
	and if $t$ is even, then 
	\[R^*=\frac{1}{Z} \big((m!-1)\frac{t}{2} (R^{-\ssucc_0}+R^{-\ssucc_t})+R^{\ssucc_{t/2}}+\sum_{k=1}^{t-1} R^{\ssucc_k}\big).\]
	
	For both cases, $Z$ denotes a normalization constant such that $\sum_{\succ\in\mathcal{R}} R^*(\succ)=1$. Based on our previous insights, we can compute for all rankings $\ssucc$ that  
	\begin{align*}
		Z\cdot C_{\sqK}(R^*, \ssucc) =
		(t-1)c - \frac{t-1}{2}(\swap(\ssucc_0, \ssucc)^2+\swap(\ssucc_t, \ssucc)^2) + \sum_{k=1}^{t-1} \swap(\ssucc_k, \ssucc)^2
	\end{align*}
	if $t$ is odd, and 
	\begin{align*}
		Z\cdot C_{\sqK}(R^*, \ssucc) =
		tc - \frac{t}{2} (\swap(\ssucc_0, \ssucc)^2+\swap(\ssucc_t, \ssucc)^2) + \swap(\ssucc_{t/2}, \ssucc)^2+ \sum_{k=1}^{t-1} \swap(\ssucc_k, \ssucc)^2
	\end{align*}
	if $t$ is even. 
	
	We will next show that $\sqK(R^*)=\{\ssucc_0, \dots, \ssucc_t\}$. As a first point, we will show that $f(R^*)\subseteq \{\ssucc_0, \dots, \ssucc_t \}$. For this, let $\ssucc$ denote a ranking that is not on our swap sequence. In particular, this means that $d=\swap(\ssucc_0, \ssucc)>0$ and $\swap(\ssucc_t,\ssucc)>0$. Moreover, since our swap sequence $\ssucc_0,\dots, \ssucc_t$ has maximal length, we know that $\ssucc_0$ and $\ssucc_t$ are completely inverse. This implies that $\swap(\ssucc_t, \ssucc)=\mchoosetwo-d$, so $d<\mchoosetwo$. Now, let $\ssucc_d$ denote the ranking in our swap sequence such that $\swap(\ssucc_0,\ssucc_d)=d$ and $\swap(\ssucc_t, \ssucc_d)=\mchoosetwo-d$; we will show that $Z\cdot C_{\sqK}(R^*, \ssucc_d)<Z\cdot C_{\sqK}(R^*, \ssucc)$. We hence note that $\swap(\ssucc_0, \ssucc_d)=\swap(\ssucc_0, \ssucc)$ and $\swap(\ssucc_t, \ssucc_d)=\swap(\ssucc_t, \ssucc)$, so the cost caused by $\ssucc_0$ and $\ssucc_t$ does not matter for comparing these rankings. Moreover, it is obvious that $\swap(\ssucc_d,\ssucc_d)=0<\swap(\ssucc_d,\ssucc)$. Further, consider an arbitrary ranking $\ssucc_k$ in our swap sequence with $k\in \{1,\dots, d-1\}$. It holds that $\swap(\ssucc_k, \ssucc)+\swap(\ssucc_k, \ssucc_0)\geq \swap(\ssucc_0, \ssucc)$, so we can infer that $\swap(\ssucc_k, \ssucc)\geq d-k=\swap(\ssucc_k, \ssucc_d)$. Since a symmetric argument holds for all $\ssucc_k$ with $k\in \{d+1, \dots, t-1\}$, we can now conclude that $\sum_{k=1}^{t-1} \swap(\ssucc_k, \ssucc)^2>\sum_{k=1}^{t-1} \swap(\ssucc_k, \ssucc_d)^2$, which implies that $\ssucc\not\in \sqK(R)$. 

    Next, we need to show that $C_{\sqK}(R, \ssucc)=C_{\sqK}(R, \ssucc')$ for all $\ssucc,\ssucc'\in\{\ssucc_0,\dots,\ssucc_t\}$. For this, we define $\ell=\frac{t}{2}-1$ if $t$ is even and $\ell=\frac{t-3}{2}$ if $t$ is odd. Moreover, we consider the auxiliary profiles $R^i$ (for $i\in \{0, \dots, \ell\}$) defined by 
    
    \[R^i=\frac{1}{Z^i} \big((m!-1)(R^{-\ssucc_{i}} + R^{-\ssucc_{t-i}})+R^{\ssucc_{i+1}} + R^{\ssucc_{t-(i+1)}}\big),
    \]
    
    where $Z^i$ is again a normalization constant. Now, we recall that $(m!-1)R^{-\ssucc}+R^{\ssucc}=m!R$ for all $\ssucc$, where $R$ is the profile in which every ranking has weight $\frac{1}{m!}$. Hence, we infer that
    \begin{align*}
    \sum_{i=0}^{\ell} (\ell+1-i)\cdot Z^i\cdot R^{i}
    &=\sum_{i=0}^{\ell} (\ell+1-i) \cdot \big((m!-1)(R^{-\ssucc_{i}} + R^{-\ssucc_{t-i}})+R^{\ssucc_{i+1}} + R^{\ssucc_{t-(i+1)}}\big)\\
    &=\sum_{i=1}^{\ell} (\ell+1-i)(m!-1)(R^{-i})(R^{-\ssucc_i}+R^{-\ssucc_{t-i}})+(\ell+2-i)(R^{\ssucc_i}+R^{\ssucc_{t-i}})\\
    &\quad+(\ell+1)(m!-1)(R^{-\ssucc_0} + R^{-\ssucc_t})\\
    &=\sum_{i=1}^{\ell} R^{\ssucc_i}+R^{\ssucc_{t-i}} + 2(\ell+1-i)m! R +(\ell+1)(m!-1)(R^{-\ssucc_0} + R^{-\ssucc_t})\\
    &=Z R^* + \ell' m! R.
    \end{align*}
    
    Here, we define $\ell'=2\sum_{i=1}^\ell (\ell+1-i)$ for simplicity.  Since $C_{\sqK}(R, \ssucc)=C_{\sqK}(R,\ssucc')$ for all rankings $\ssucc, \ssucc'$, we infer from this equation that $C_{\sqK}(R^*,\ssucc)=C_{\sqK}(R^*,\ssucc')$ if and only if $C_{\sqK}(\sum_{i=0}^{\ell} (\ell+1-i)\cdot Z^i\cdot R^{i},\ssucc)=C_{\sqK}(\sum_{i=0}^{\ell} (\ell+1-i)\cdot Z^i\cdot R^{i},\ssucc')$ for all rankings $\ssucc, \ssucc'$. In turn, the latter equality holds if $C_{\sqK}(R^i,\ssucc)=C_{\sqK}(R^i,\ssucc')$ for all $i\in \{0,\dots, \ell\}$. 
    We thus consider now an arbitrary ranking $\ssucc_d\in \{\ssucc_0,\dots, \ssucc_t\}$ and such a profile $R^i$. First, we note that $\swap(\ssucc_j,\ssucc_d)^2=(d-j)^2$ for every ranking $\ssucc_j$ in our swap sequence. Hence, we compute that    
    \begin{align*}
    	&(m-1)!C_{\sqK}(R^{-\ssucc_i}, \ssucc_d)=c-\swap(\ssucc_i, \ssucc_d)^2=c-(d-i)^2\\
    	&(m-1)!C_{\sqK}(R^{-\ssucc_{t-i}}, \ssucc_d)=c-\swap(\ssucc_{t-i}, \ssucc_d)^2=c-(d-(t-i))^2\\
    	&C_{\sqK}(R^{\ssucc_{i+1}}, \ssucc_d)=\swap(\ssucc_{i+1}, \ssucc_d)=(d-(i+1))^2\\
    	&C_{\sqK}(R^{\ssucc_{t-(i+1)}}, \ssucc_d)=\swap(\ssucc_{t-(i+1)}, \ssucc_d)=(d-(t-(i+1)))^2
    \end{align*}
    
    Our central observation is now that 
        \begin{align*}
            &-(d-i)^2-(d-(t-i))^2+(d-(i+1))^2+(d-(t-(i+1)))^2\\
            &=-(d-i)^2-(d-(t-i))^2+(d-i-1)^2+(d-(t-i)+1)^2\\
            &=-(d-i)^2-(d-(t-i))^2+(d-i)^2-2(d-i)+1+(d-(t-i))^2+2(d-(t-i))+1\\
            &=2+4i-2t.
        \end{align*}
        
     Hence, $C_{\sqK}(R^i, \ssucc^d)=\frac{1}{Z^i}(2c+2+4i-2t)$ for all $\ssucc_d\in\{\ssucc_0, \dots, \ssucc_t\}$ and all $i\in \{0,\dots, \ell\}$, so we can conclude that $\sqK(R^*)=\{\ssucc_0, \dots, \ssucc_t\}$. 
    
    As the last point, we need to extend our argument to swap sequence of length $t'+1<t+1=\mchoosetwo+1$. Hence, consider a single-crossing swap sequence $\ssucc_0,\dots,\ssucc_{t'}$ for $t'<t$. First, if $t'=0$, then there is clearly a profile $R$ such that $\sqK(R)=\{\ssucc_0\}$ due to 2RP. We hence suppose that $t'\geq 1$. We can extend this sequence to a single-crossing swap sequence $\ssucc_0',\dots, \ssucc'_t$ of length $t+1$ using \Cref{lem:swap_sequence}. This means that $\ssucc_i'=\ssucc_i$ for all $i\in \{0,\dots,t'\}$ and the remaining rankings form a path to $\ssucc'_t$ (which is completely inverse to $\ssucc_0$).
Now, let $\bar \ssucc$ denote the completely inverse ranking of every $\ssucc$ (i.e., $\ssucc'_t=\bar \ssucc_0$). It is easy to see that 
    \[\bar \ssucc_{t-i}',\bar \ssucc_{t-i+1}', \dots,\bar \ssucc_{t-1}', \bar \ssucc_{t}' = \ssucc'_0, \ssucc'_1, \dots, \ssucc'_{i}\]
    is a single-crossing swap sequence for every $i$. Moreover, our construction yields for every $i$ a profile $\hat R^i$ such that the Squared Kemeny rule chooses exactly the given swap sequence. Now, consider the profile $\hat R^0$ (for which $\sqK(\hat R^0)=\{\ssucc_0',\dots,\ssucc_t'\}$) and the profile $\hat R^{t-t'}$ (for which $\sqK(\hat R^{t-t'})=\{\bar \ssucc_{t-t'}', \dots \bar \ssucc_t', \ssucc_1', \dots, \ssucc_{t'}'\}$). It can be verified that $\sqK(\hat R^0)\cap \sqK(\hat R^{t-t'})=\{\ssucc_0',\dots, \ssucc_{t'}'\}$, so $\sqK(\frac{1}{2} \hat R^0+\frac{1}{2} \hat R^{t-t'})=\{\ssucc_0',\dots, \ssucc_{t'}'\}$ as the Squared Kemeny rule is reinforcing.
\end{proof}

Based on \Cref{lem:Kemenysquared}, we next show that the vectors $u^{\ssucc_0,\ssucc_1},\dots,u^{\ssucc_{t-1},\ssucc_t}$ are linearly independent for every swap sequence $\ssucc_0,\dots,\ssucc_t$.

\begin{lemma}\label{lem:linearIndependence}
    Consider a single-crossing swap sequence $\ssucc_0,\dots,\ssucc_t$. The vectors $u^{\ssucc_0,\ssucc_1},\dots,u^{\ssucc_{t-1},\ssucc_t}$ are linearly independent.
\end{lemma}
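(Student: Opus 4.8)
The plan is to establish linear independence of $u^{\ssucc_0,\ssucc_1},\dots,u^{\ssucc_{t-1},\ssucc_t}$ by exhibiting, for each index $k \in \{0,\dots,t-1\}$, a vector $v^{(k)} \in \mathbb{R}^{m!}$ that the $k$-th vector ``sees'' strictly but all the others do not see at all. Concretely, I would look for vectors $v^{(0)},\dots,v^{(t-1)}$ such that $v^{(k)} u^{\ssucc_k,\ssucc_{k+1}} \ne 0$ while $v^{(k)} u^{\ssucc_\ell,\ssucc_{\ell+1}} = 0$ for all $\ell \ne k$. If such vectors exist, then from any linear relation $\sum_k \lambda_k u^{\ssucc_k,\ssucc_{k+1}} = 0$ we take the scalar product with $v^{(k)}$ to get $\lambda_k \cdot (v^{(k)} u^{\ssucc_k,\ssucc_{k+1}}) = 0$, hence $\lambda_k = 0$ for every $k$, which is exactly linear independence.

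The source of these separating vectors should be the profiles produced by \Cref{lem:Kemenysquared}. Fix $k$, and consider the two sub-sequences $\ssucc_0,\dots,\ssucc_k$ and $\ssucc_{k+1},\dots,\ssucc_t$, each of which is again a single-crossing swap sequence. By \Cref{lem:Kemenysquared}, there is a profile $R^{(k)}_{\le}$ with $\sqK(R^{(k)}_{\le}) = \{\ssucc_0,\dots,\ssucc_k\}$ and a profile $R^{(k)}_{\ge}$ with $\sqK(R^{(k)}_{\ge}) = \{\ssucc_{k+1},\dots,\ssucc_t\}$. Using reinforcement on a convex combination $R^{(k)}_{\text{mid}}$ of these two (or more carefully, using the structure of the proof of \Cref{lem:Kemenysquared}), I can obtain a profile for which exactly $\{\ssucc_k,\ssucc_{k+1}\}$ is chosen — these two rankings are exactly the ``boundary'' between the two halves, and a well-chosen mixture puts the collective optimum precisely on the edge between $\bar R_{\ssucc_k}$ and $\bar R_{\ssucc_{k+1}}$ but interior to neither of the others. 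For the vector $v^{(k)} = v(R^{(k)}_{\text{mid}})$ we then have $v^{(k)} \in \bar R_{\ssucc_k} \cap \bar R_{\ssucc_{k+1}}$ and $v^{(k)} \notin \bar R_{\ssucc_\ell}$ for $\ell \notin \{k,k+1\}$, and moreover $v^{(k)} \in \mathrm{int}\,\bar R_{\ssucc_k} \cup \{\text{relative boundary}\}$ in the right way. From $v^{(k)} \in \bar R_{\ssucc_\ell} \cap \bar R_{\ssucc_{\ell+1}}$ being false for $\ell \ne k$ — more precisely from $v^{(k)}$ lying strictly inside one of those two cones — \Cref{lem:hyperplanes} or \Cref{lem:representation} forces $v^{(k)} u^{\ssucc_\ell,\ssucc_{\ell+1}} \ne 0$ with a definite sign; but I actually want $=0$, so the correct construction is the opposite: I want $v^{(k)}$ to lie on the hyperplane of $\ell \ne k$ and off the hyperplane of $k$. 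The cleanest route is: pick $v^{(k)}$ so that $\ssucc_k,\ssucc_{k+1} \in \hat g(v^{(k)})$ while no other $\ssucc_\ell$ with $\ell < k$ or $\ell+1 > k+1$ is — then by \Cref{lem:2RP_general}(1)-type reasoning applied to the separating vectors for adjacent pairs, together with \Cref{lem:swap_hyperplanes} giving the explicit Squared-Kemeny form $u^{\ssucc_\ell,\ssucc_{\ell+1}}_j = \swap(b(j),\ssucc_\ell)^2 - \swap(b(j),\ssucc_{\ell+1})^2$, and using that along a single-crossing sequence $\swap(b(j),\ssucc_\ell)$ is a ``V-shaped'' function of $\ell$, I can compute $v^{(k)} u^{\ssucc_\ell,\ssucc_{\ell+1}}$ directly on the support of $v^{(k)}$ and check it vanishes for $\ell \ne k$ but not for $\ell = k$.

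Let me restate the argument in the form I would actually write. Since $\ssucc_0,\dots,\ssucc_t$ is a single-crossing swap sequence, \Cref{lem:swap_hyperplanes} gives $u^{\ssucc_k,\ssucc_{k+1}}_j = \swap(b(j),\ssucc_{k+1})^2 - \swap(b(j),\ssucc_k)^2$ for every $k$ and every $j$. For a ranking $\succ = b(j)$ that lies on a maximal single-crossing sequence extending $\ssucc_0,\dots,\ssucc_t$, write $p(\succ)$ for its position; then $\swap(b(j),\ssucc_\ell) = |p(\succ) - \ell|$ for all relevant $\ell$, and hence $u^{\ssucc_\ell,\ssucc_{\ell+1}}_j = (\,p(\succ)-\ell-1\,)^2 - (\,p(\succ)-\ell\,)^2 = -2(p(\succ)-\ell) + 1 = 2\ell + 1 - 2p(\succ)$, a linear (affine) function of $\ell$. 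Now for index $k$ I build $v^{(k)}$ supported on three rankings on this maximal sequence at positions $a \le k$, $k+1$ appears between, and $b \ge k+1$, with weights chosen so that $v^{(k)} u^{\ssucc_\ell,\ssucc_{\ell+1}}$ — which is an affine function of $\ell$ with coefficients built from the three weights and positions — vanishes at all $\ell \ne k$ in $\{0,\dots,t-1\}$ but not at $\ell = k$. The natural choice is a ``second-difference'' type combination: take $v^{(k)} = v(R')$ where $R'$ places weight so that the optimum of Squared Kemeny sits exactly at the edge $\{\ssucc_k,\ssucc_{k+1}\}$; by \Cref{lem:Kemenysquared} applied to the length-$2$ sequence $\ssucc_k,\ssucc_{k+1}$ (or directly by 2RP with two rankings), such $R'$ exists, and then $v^{(k)} \in \bar R_{\ssucc_k} \cap \bar R_{\ssucc_{k+1}}$ forces $v^{(k)} u^{\ssucc_k,\ssucc_{k+1}} = 0$ — which is the wrong sign of what I want! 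So the \textbf{main obstacle}, and the thing I would be most careful about, is getting the roles right: I need a vector on which $u^{\ssucc_k,\ssucc_{k+1}}$ is \emph{nonzero} while all the others are \emph{zero}, i.e.\ a profile whose chosen set meets $\{\ssucc_k,\ssucc_{k+1}\}$ in a single ranking (so $v$ is strictly inside $\bar R_{\ssucc_k}$, making $v u^{\ssucc_k,\ssucc_{k+1}} > 0$) yet lies on the common hyperplane of every other adjacent pair. This is achieved by mixing the \Cref{lem:Kemenysquared}-profile for $\{\ssucc_0,\dots,\ssucc_k\}$ (pure on that half, interior witness for $\ssucc_k$ versus $\ssucc_{k+1}$) with a perturbation keeping the optimum within $\{\ssucc_0,\dots,\ssucc_k\}$ — and symmetrically using the explicit affine formula above to verify the vanishing for $\ell \ge k+1$, with the mirrored construction handling $\ell \le k-1$. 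Once the $v^{(k)}$ are in hand, the dot-product argument of the first paragraph finishes the proof immediately.
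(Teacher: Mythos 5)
Your overall strategy---manufacture test vectors from the profiles of \Cref{lem:Kemenysquared} and finish with dot products---is the right one and is close to what the paper does, but you have set yourself a strictly harder target than necessary, and the construction you sketch for that harder target does not work. You ask for a full dual basis: vectors $v^{(k)}$ with $v^{(k)}u^{\ssucc_k,\ssucc_{k+1}}\neq 0$ and $v^{(k)}u^{\ssucc_\ell,\ssucc_{\ell+1}}=0$ for \emph{all} $\ell\neq k$. Your own affine computation shows why your concrete proposal cannot deliver this: for a ranking $b(j)$ at position $p$ on the maximal single-crossing sequence, $u^{\ssucc_\ell,\ssucc_{\ell+1}}_j=2\ell-2p+1$ is affine in $\ell$, so for any $v^{(k)}$ supported on such rankings the map $\ell\mapsto v^{(k)}u^{\ssucc_\ell,\ssucc_{\ell+1}}$ is affine in $\ell$; an affine function vanishing at the $t-1\geq 2$ points $\ell\neq k$ is identically zero, so it also vanishes at $\ell=k$. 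The fallback you offer---take the profile $R$ from \Cref{lem:Kemenysquared} with $\sqK(R)=\{\ssucc_0,\dots,\ssucc_k\}$---gives $v(R)\,u^{\ssucc_\ell,\ssucc_{\ell+1}}=0$ only for $\ell<k$ and $v(R)\,u^{\ssucc_k,\ssucc_{k+1}}>0$; for $\ell>k$ these products are generically nonzero (the Squared Kemeny costs keep changing beyond the chosen set), and you never actually produce a profile that makes them vanish.

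The repair is to notice that you do not need a dual basis, only a triangular system, and the profiles you already invoke supply it. Taking $R^j$ with $\sqK(R^j)=\{\ssucc_0,\dots,\ssucc_j\}$ for each $j$, \Cref{lem:swap_hyperplanes} gives $v(R^j)\,u^{\ssucc_i,\ssucc_{i+1}}=0$ for all $i<j$ and $v(R^j)\,u^{\ssucc_j,\ssucc_{j+1}}>0$; hence $u^{\ssucc_j,\ssucc_{j+1}}$ cannot lie in the span of $u^{\ssucc_0,\ssucc_1},\dots,u^{\ssucc_{j-1},\ssucc_j}$, and linear independence follows by induction on $j$. This is exactly the paper's proof; your write-up stops one observation short of it while chasing a stronger biorthogonality property that your construction cannot achieve.
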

\begin{proof}
    Let $\ssucc_0,\dots,\ssucc_t$ denote a single-crossing swap sequence and let $u^{\ssucc_0,\ssucc_1},\dots,u^{\ssucc_{t-1},\ssucc_t}$ denote the vectors that separate $\bar R_{\ssucc_i}$ from $\bar R_{\ssucc_{i+1}}$. Since any two consecutive rankings in our sequence only differ in a swap, we know that 
    $u^{\ssucc_i,\ssucc_{i+1}}_k=-\swap(\ssucc_i, b(k))^2+\swap(\ssucc_{i+1}, b(k))^2$
    for all $i\in \{0,\dots, t-1\}$ and $k\in \{1,\dots, m!\}$ (see \Cref{lem:swap_hyperplanes}). In particular, this means that $v(R) u^{\ssucc_i,\ssucc_{i+1}}=0$ for every profile $R$ in which $\ssucc_i$ and $\ssucc_{i+1}$ have the same Squared Kemeny score. Now, by \Cref{lem:Kemenysquared}, there are profiles $R^j$ such that $\sqK(R^j)=\{\ssucc_0,\dots,\ssucc_j\}$ for all $j\in \{1,\dots,t\}$. By the definition of the Squared Kemeny rule, this means that $v(R^j) u^{\ssucc_i,\ssucc_{i+1}}=0$ for all $i<j$ and $v(R^j) u^{\ssucc_j, \ssucc_{j+1}}>0$.

    We will now use these profiles to inductively show that the considered vectors are linearly independent. For the induction basis, let $j=1$ and note that the set $\{u^{\ssucc_0,\ssucc_1}\}$ is trivially linearly independent. Now, assume that the set $\{u^{\ssucc_0,\ssucc_1}, \dots, u^{\ssucc_{j-1},\ssucc_j}\}$ is linearly independent for some $j\leq t-1$. For the set $\{u^{\ssucc_0,\ssucc_1}, \dots, u^{\ssucc_{j},\ssucc_{j+1}}\}$, the linear independence follows by considering the vector $v(R^j)$ because $v(R^j) u^{\ssucc_i,\ssucc_{i+1}}=0$ for all $i<j$ and $v(R^j) u^{\ssucc_j,\ssucc_{j+1}}>0$.
    This is only possible if $u^{\ssucc_j,\ssucc_{j+1}}$ is linearly independent of the remaining vectors in our set. Moreover, since the set $\{u^{\ssucc_0,\ssucc_1}, \dots, u^{\ssucc_{j-1},\ssucc_j}\}$ is linearly independent by the induction hypothesis, the full set $\{u^{\ssucc_0,\ssucc_1}, \dots, u^{\ssucc_{j},\ssucc_{j+1}}\}$ is linearly independent and the lemma follows. 
\end{proof}

As the next step, we show that \Cref{lem:Kemenysquared} also holds for our SPF $f$: for every single-crossing swap sequence $\ssucc_0,\dots,\ssucc_t$, there is a profile $R$ such that $f(R)=\{\ssucc_0,\dots,\ssucc_t\}$. As it will turn out, the same profiles as for the Squared Kemeny rule show this claim. 

\begin{lemma}\label{lem:SCNI}
    Let $\ssucc_0, \dots, \ssucc_t$ denote a single-crossing swap sequence. There is a profile $R$ such that $f(R)=\{\ssucc_0,\dots,\ssucc_t\}$.
\end{lemma}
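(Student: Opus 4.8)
The plan is to take the very same profiles constructed in the proof of \Cref{lem:Kemenysquared} and to verify that $f$ — not just $\sqK$ — selects exactly the rankings of the sequence on them, using the geometric description of $f$ from \Cref{lem:hyperplanes,lem:representation,lem:continuity}. As in \Cref{lem:Kemenysquared} it is enough to treat \emph{maximal} single-crossing swap sequences ($t=\mchoosetwo$): once the claim holds for those, the reinforcement argument ending the proof of \Cref{lem:Kemenysquared} — writing the profile for a shorter sequence as $\frac12\hat R^{0}+\frac12\hat R^{t-t'}$ and intersecting outputs — applies verbatim, since $f$ is reinforcing.

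Fix a maximal single-crossing swap sequence $\ssucc_0,\dots,\ssucc_t$ and let $R^*$ be the profile of \Cref{lem:Kemenysquared}. A preliminary step is to normalise the separating vectors of \Cref{lem:hyperplanes} so that $u^{\ssucc_i,\ssucc_j}=-u^{\ssucc_j,\ssucc_i}$ for every pair (legitimate, since $-u^{\ssucc_i,\ssucc_j}$ separates $\bar R_{\ssucc_j}$ from $\bar R_{\ssucc_i}$). Applying \Cref{lem:2RP_general} to both orderings of a pair, together with the fact that the uniform profile lies in every region $\bar R_\ssucc$ (so $\mathbf 1\cdot u^{\ssucc_i,\ssucc_j}=0$), then shows that each coordinate $u^{\ssucc_i,\ssucc_j}_k$ has the sign of $\swap(b(k),\ssucc_j)^2-\swap(b(k),\ssucc_i)^2$ (and is $0$ when these agree), that $u^{\ssucc_i,\ssucc_j}_{\ssucc_i}>0$, and that $\sum_k u^{\ssucc_i,\ssucc_j}_k=0$; for adjacent pairs \Cref{lem:swap_hyperplanes} gives $u^{\ssucc_i,\ssucc_j}$ explicitly.

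With this in hand I would prove $f(R^*)=\{\ssucc_0,\dots,\ssucc_t\}$ in two steps. For $f(R^*)\subseteq\{\ssucc_0,\dots,\ssucc_t\}$: given $\ssucc\notin\{\ssucc_0,\dots,\ssucc_t\}$, set $d=\swap(\ssucc_0,\ssucc)\in\{1,\dots,t-1\}$ and let $\ssucc_d$ be the corresponding sequence member. Using the explicit weights of $R^*$ recorded in \Cref{lem:Kemenysquared} together with $\mathbf 1\cdot u^{\ssucc,\ssucc_d}=0$, the quantity $v(R^*)\cdot u^{\ssucc,\ssucc_d}$ reduces to $\frac1Z\sum_{k=1}^{t-1}u^{\ssucc,\ssucc_d}_{\ssucc_k}$ — the $\ssucc_0$- and $\ssucc_t$-coordinates of $u^{\ssucc,\ssucc_d}$ vanish because maximality gives $\swap(\ssucc_0,\ssucc_d)=\swap(\ssucc_0,\ssucc)=d$ and $\swap(\ssucc_t,\ssucc_d)=\swap(\ssucc_t,\ssucc)=t-d$ — and the triangle inequality $\swap(\ssucc_k,\ssucc)\ge|k-d|=\swap(\ssucc_k,\ssucc_d)$ makes every summand $\le 0$, with the $k=d$ term strictly negative; so $v(R^*)\cdot u^{\ssucc,\ssucc_d}<0$, contradicting $v(R^*)\in\bar R_\ssucc$, hence $\ssucc\notin f(R^*)$ by \Cref{lem:continuity}. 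For $\{\ssucc_0,\dots,\ssucc_t\}\subseteq f(R^*)$: since $f(R^*)$ is non-empty it contains some $\ssucc_i$; taking the $2$-ranking profile $S$ with $f(S)=\{\ssucc_i,\ssucc_{i+1}\}$ (which by continuity lies in the relative interior of the common facet of $\bar R_{\ssucc_i}$ and $\bar R_{\ssucc_{i+1}}$) and noting $v(R^*)\cdot u^{\ssucc_i,\ssucc_{i+1}}=0$ because $\ssucc_i,\ssucc_{i+1}\in\sqK(R^*)$, one checks that $\lambda v(R^*)+(1-\lambda)v(S)\in\bar R_{\ssucc_{i+1}}$ for small rational $\lambda>0$, so $\ssucc_{i+1}\in f(\lambda R^*+(1-\lambda)S)=f(R^*)\cap f(S)$ by reinforcement, hence $\ssucc_{i+1}\in f(R^*)$; iterating in both directions gives all of $\ssucc_0,\dots,\ssucc_t$.

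The delicate point is the preliminary step: pinning down the non-adjacent separating vectors $u^{\ssucc_i,\ssucc_j}$ just far enough to run the computation in the $\subseteq$-direction. Only the \emph{sign pattern} of these vectors is available — their magnitudes are not — so the argument must be arranged so that the explicit shape of $R^*$, in particular the cancellation of the $\ssucc_0$- and $\ssucc_t$-coordinates and the uniform ``background'' weight (which $\hat g$ ignores by \Cref{lem:domainextension}), leaves a sum all of whose terms have a determined sign; the single-crossing structure, via the triangle inequality, is exactly what guarantees this.
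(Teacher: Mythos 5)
Your proof of the inclusion $f(R^*)\subseteq\{\ssucc_0,\dots,\ssucc_t\}$ is essentially the paper's: same profile $R^*$, same use of Claim~1 of \Cref{lem:2RP_general} to kill the $\ssucc_0$- and $\ssucc_t$-coordinates, same triangle-inequality argument making the remaining coordinates of the separating vector have a uniform sign with strictness at $\ssucc_d$ (your omission of the duplicated $R^{\ssucc_{t/2}}$ term for even $t$ is harmless, as that term has the same sign). Where you genuinely depart from the paper is the reverse inclusion. The paper argues by contradiction: if some $\ssucc_i$ of the sequence were missing from $f(R^*)$, the linear independence of $u^{\ssucc_0,\ssucc_1},\dots,u^{\ssucc_{t-1},\ssucc_t}$ (\Cref{lem:linearIndependence}) yields a perturbation direction $v'$ such that $v(R^*)+\epsilon v'$ lies in no region $\bar R_{\ssucc}$ at all, contradicting $\bigcup_{\ssucc}\bar R_{\ssucc}=\mathbb{R}^{m!}$. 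You instead propagate membership along the sequence: starting from one $\ssucc_i\in f(R^*)$ (guaranteed by non-emptiness and the already-proved inclusion), you mix $R^*$ with the uniform two-ranking 2RP profile $S$ on $\{\ssucc_i,\ssucc_{i+1}\}$ and use reinforcement, $f(\lambda R^*+(1-\lambda)S)=f(R^*)\cap f(S)$, to pull $\ssucc_{i+1}$ into $f(R^*)$. This is a correct and arguably more elementary route: it is constructive rather than by contradiction and does not need \Cref{lem:linearIndependence} for this lemma (that lemma is of course still needed later for \Cref{lem:linearDependence}).

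One justification in your propagation step is too quick, though it is repairable from the lemmas you already invoke. You assert that $v(S)$ ``lies in the relative interior of the common facet of $\bar R_{\ssucc_i}$ and $\bar R_{\ssucc_{i+1}}$'' and attribute this to continuity. Continuity (via \Cref{lem:continuity}) only gives $v(S)\notin\bar R_{\ssucc}$ for $\ssucc\notin\{\ssucc_i,\ssucc_{i+1}\}$, i.e.\ \emph{some} separating inequality against $\ssucc$ is violated at $v(S)$ — not that the specific inequality $v(S)\,u^{\ssucc_{i+1},\ssucc}>0$ holds, which is what your small-$\lambda$ argument actually needs whenever $v(R^*)\,u^{\ssucc_{i+1},\ssucc}$ is not under control. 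The needed strict inequality does hold, but it should be derived from \Cref{lem:2RP_general}: since $S(\ssucc_i)=S(\ssucc_{i+1})=\nicefrac12$, one has $v(S)\,u^{\ssucc_{i+1},\ssucc}=\frac12\bigl(u^{\ssucc_{i+1},\ssucc}_{\ssucc_i}+u^{\ssucc_{i+1},\ssucc}_{\ssucc_{i+1}}\bigr)$, where the second summand is strictly positive by Claim~3 and the first is non-negative by Claims~1 and~2 (as $\swap(\ssucc_i,\ssucc_{i+1})=1\le\swap(\ssucc_i,\ssucc)$ for every $\ssucc\neq\ssucc_i$). With that fix, and the observation that $v(R^*)\,u^{\ssucc_{i+1},\ssucc_i}=0$ follows from \Cref{lem:swap_hyperplanes} together with \Cref{lem:Kemenysquared}, your argument is complete.
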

\begin{proof}
    We will prove the lemma only for single-crossing swap sequences $\ssucc_0,\dots,\ssucc_t$ with $t=\mchoosetwo$; since $f$ satisfies 2RP and reinforcement, we can shorten the sequence as demonstrated in \Cref{lem:Kemenysquared}. Hence, consider such a sequence, and let $R^*$ denote the corresponding profile defined in \Cref{lem:Kemenysquared}. 
    
    We will first show that $f(R^*)\subseteq \{\ssucc_0,\dots, \ssucc_t\}$. To this end, consider an arbitrary ranking $\ssucc$ not in our sequence. Moreover, we define $d=\swap(\ssucc,\ssucc_0)$ and note that $0<d<\mchoosetwo$ as $d\not\in \{\ssucc_0,\ssucc_t\}$. Now, let $\ssucc_d$ denote the $d$-th ranking in our sequence. We will show that $v u^{\ssucc_d,\ssucc}>0$ for the vector $v=v(R^*)$. This shows that $v\not\in\bar R_{\ssucc}$ which, in turn, implies that $\ssucc\not\in \hat g(v)=f(R^*)$ due to \Cref{lem:continuity}. 
    
    To this end, we first note that 
    $\swap(\ssucc_0,\ssucc_d)=d=\swap(\ssucc_0, \ssucc)$ and  $\swap(\ssucc_t,\ssucc_d)=t-d=\swap(\ssucc_t,\ssucc)$.
    Hence, it holds by Claim 1) of \Cref{lem:2RP_general} that  $u^{\ssucc_d,\ssucc}_k=u^{\ssucc_d,\ssucc}_{k'}=0$ for the indices $k,k'$ with $b(k)=\ssucc_0$ and $b(k')=\ssucc_t$. Next, let $v'$ denote the vector with $v'_i=\frac{1}{m!-1}$ for all $i\in \{1,\dots, m!\}$ and note that $v'\in\bar R_{\ssucc'}$ for all $\ssucc'\in\mathcal{R}$ due to the symmetry of this vector. Hence, 
    $v'u^{\ssucc_d,\ssucc}=0$. We can now compute that $v(R^{-\ssucc_0}) u^{\ssucc_d,\ssucc}=v'u^{\ssucc_d, \ssucc}-\frac{1}{m!-1}u^{\ssucc_d, \ssucc}_k=0$ and $v(R^{-\ssucc_t}) u^{\ssucc_d,\ssucc}=v'u^{\ssucc_d, \ssucc}-\frac{1}{m!-1}u^{\ssucc_d, \ssucc}_{k'}=0$.
    
    Moreover, an analogous analysis as in the proof of \Cref{lem:Kemenysquared} shows that $\swap(\ssucc_i,\ssucc_d)\leq \swap(\ssucc_i,\ssucc)$
    for every ranking $\ssucc_i$ in our swap sequence. By Claims 1) and 2) in \Cref{lem:2RP_general}, this means that $u^{\ssucc_d,\ssucc}_k\geq 0$ for the index $k$ with $b(k)=\ssucc_i$. Finally, $R^*(\ssucc_d)>0$ and we know by Claim 3) in \Cref{lem:2RP_general} that $u^{\ssucc_d,\ssucc}_k>0$ for the corresponding index. Since $v(R^{\ssucc_i}) u^{\ssucc_d,\ssucc}=u^{\ssucc_d,\ssucc_k}_k$ for all $\ssucc_i$ in our swap sequence and the corresponding index $k=b(\ssucc_i)$, we can now infer that $v u^{\ssucc_d,\ssucc}>0$ as $R^*$ is a convex combination of $R^{-\ssucc_0}$, $R^{-\ssucc_t}$, and all $R^{\ssucc_i}$ for $i\in \{1,\dots, t-1\}$. This proves that $f(R^{\ssucc_0,\dots, \ssucc_t})\subseteq \{\ssucc_0,\dots, \ssucc_t\}$.

    Next, we need to show that $f(R^*)= \{\ssucc_0,\dots, \ssucc_t\}$. Assume for a contradiction that this is not the case, i.e., there is a ranking $\ssucc_i$ in our sequence that is not chosen. By \Cref{lem:continuity}, this means that $v\not\in \bar R_{\ssucc_i}$. Moreover, by combining \Cref{lem:representation,lem:continuity}, we know that, for every $\ssucc_j\not\in f(R^*)$, there is another ranking $\phi(\ssucc_j)$ such that $v u^{\ssucc_j,\phi(\ssucc_j)}<0$. Next, we note that $\sqK(R^*)=\{\ssucc_0,\dots, \ssucc_t\}$ because of \Cref{lem:Kemenysquared}. Combined with \Cref{lem:swap_hyperplanes}, this implies that $vu^{\ssucc_j,\ssucc_{j+1}}=0$ for all $j\in \{0,\dots, t-1\}$. Finally, all these vectors $u^{\ssucc_j,\ssucc_{j+1}}$ are linearly independent of each other (see \Cref{lem:linearIndependence}). Hence, the matrix $M$ that contains these vectors as rows has full (row) rank, which equivalently means that its image has full dimension. As a consequence, there is a vector $v'\in\mathbb{R}^{m!}$ such that 
    \begin{align*}
    	v' u^{\ssucc_{j},\ssucc_{j+1}} &<0 \qquad \text{for all $j\in \{0,\dots, i-1\}$, and} \\
    	v' u^{\ssucc_{j},\ssucc_{j+1}} &>0 \qquad \text{for all $j\in \{i,\dots, t-1\}$.}
    \end{align*}
    
    Finally, we consider the vector $v+\epsilon v'$, where $\epsilon>0$ is so small that $(v+\epsilon v')u^{\ssucc_j,\phi(\ssucc_j)}<0$ still holds for all $\ssucc_j\not\in f(R^*)$. This shows that $v+\epsilon v'\not\in \bar R_{\ssucc_j}$ for any $\ssucc_j\not\in f(R^*)$. Next, it holds that 
    \begin{align*}
    (v+\epsilon v')u^{\ssucc_j,\ssucc_{j+1}}&=\epsilon v'u^{\ssucc_j,\ssucc_{j+1}}<0\qquad \text{for all $j\in \{0,\dots, i-1\}$, and}\\
    (v+\epsilon v')u^{\ssucc_j,\ssucc_{j+1}}&=\epsilon v'u^{\ssucc_{j},\ssucc_{j+1}}>0\qquad \text{for all $j\in \{i,\dots, t-1\}$.}
    \end{align*}
    Hence, we also have that $v+\epsilon v'\not\in \bar R_{\ssucc_j}$ for all $\ssucc_j\in \{\ssucc_0,\dots, \ssucc_t\}\setminus \{\ssucc_i\}$. However, since $\ssucc_i\not\in f(R^*)$ by assumption, this means that $v+\epsilon v'\not\in \bar R_{\ssucc_j}$ for every $\ssucc_j\in \mathcal{R}$. This contradicts that $\bigcup_{\ssucc_j\in\mathcal{R}}\bar R_{\ssucc_j}=\mathbb{R}^{m!}$, so our initial assumption that $f(R^*)\subsetneq \{\ssucc_0,\dots, \ssucc_t\}$ must have been wrong and $f(R^*)=\{\ssucc_1,\dots,\ssucc_t\}$.
\end{proof}

We are now ready to fully generalize \Cref{lem:swap_hyperplanes} to all vectors $u^{\ssucc_i,\ssucc_j}$. We note that the subsequent lemma is the equivalent of Step 3 in the proof of \Cref{thm:welfarist}.

\begin{lemma}\label{lem:linearDependence}
    Consider a single-crossing swap sequence $\ssucc_0,\dots, \ssucc_t$ for some $t\geq 1$. There is $\lambda>0$ such that  $u^{\ssucc_0,\ssucc_t}_k=\lambda (-\swap(\ssucc_0, b(k))^2+\swap(\ssucc_k, b(k))^2)$ for all $k\in \{1,\dots, m!\}$.
\end{lemma}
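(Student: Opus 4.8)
Write $w\in\mathbb{R}^{m!}$ for the candidate vector, $w_k=-\swap(\ssucc_0,b(k))^2+\swap(\ssucc_t,b(k))^2$; the goal is to show $u^{\ssucc_0,\ssucc_t}=\lambda w$ for some $\lambda>0$. The plan is to proceed by induction on $t$, the base case $t=1$ being exactly \Cref{lem:swap_hyperplanes}. First I would record the telescoping identity: since $\ssucc_i,\ssucc_{i+1}$ are adjacent, \Cref{lem:swap_hyperplanes} lets us take $u^{\ssucc_i,\ssucc_{i+1}}_k=-\swap(\ssucc_i,b(k))^2+\swap(\ssucc_{i+1},b(k))^2$, and summing over $i$ gives $w=\sum_{i=0}^{t-1}u^{\ssucc_i,\ssucc_{i+1}}$. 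In particular $w\neq 0$ (its $\ssucc_0$-entry is $\swap(\ssucc_0,\ssucc_t)^2>0$), and $w$ lies in $V:=\operatorname{span}\{u^{\ssucc_i,\ssucc_{i+1}}:0\le i<t\}$, which has dimension $t$ by \Cref{lem:linearIndependence}.

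The first substantive step is to show $u^{\ssucc_0,\ssucc_t}\in V$. Here I would invoke the profile $R^*$ produced by \Cref{lem:SCNI}, for which $f(R^*)=\{\ssucc_0,\dots,\ssucc_t\}$ and (as extracted from its proof) $v(R^*)\cdot u^{\ssucc,\phi(\ssucc)}<0$ for every ranking $\ssucc$ off the sequence. Consequently, sufficiently small perturbations $v(R^*)+\varepsilon v'$ lie outside every $\bar R_{\ssucc}$ with $\ssucc\notin\{\ssucc_0,\dots,\ssucc_t\}$, hence (since $\bigcup_{\ssucc}\bar R_{\ssucc}=\mathbb{R}^{m!}$) inside $\bigcup_{i=0}^{t}\bar R_{\ssucc_i}$. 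Moreover, by the induction hypothesis each $u^{\ssucc_0,\ssucc_j}$ and $u^{\ssucc_j,\ssucc_t}$ lies in $V$, so along any perturbation direction $v'\in V^\perp$ all the halfspace constraints that cut out $\bar R_{\ssucc_0}$ or $\bar R_{\ssucc_t}$ near $v(R^*)$ and come from sequence rankings are unchanged; combining this with \Cref{lem:2RP_general} to control which constraints are active at $v(R^*)$, one concludes that $v(R^*)+\varepsilon v'\in\bar R_{\ssucc_0}\cap\bar R_{\ssucc_t}$ for all small $v'\in V^\perp$. Since $u^{\ssucc_0,\ssucc_t}$ vanishes on $\bar R_{\ssucc_0}\cap\bar R_{\ssucc_t}$ (being $\ge 0$ on the first and $\le 0$ on the second) and $v(R^*)\cdot u^{\ssucc_0,\ssucc_t}=0$, the functional $v\mapsto v\cdot u^{\ssucc_0,\ssucc_t}$ is then identically zero on the affine subspace $v(R^*)+V^\perp$, which forces $u^{\ssucc_0,\ssucc_t}\perp V^\perp$, i.e. $u^{\ssucc_0,\ssucc_t}\in V$.

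It then remains to pin $u^{\ssucc_0,\ssucc_t}$ down inside the $t$-dimensional space $V$. I would write $u^{\ssucc_0,\ssucc_t}=\sum_{i=0}^{t-1}c_i\,u^{\ssucc_i,\ssucc_{i+1}}$ and show $c_0=\dots=c_{t-1}=:\lambda>0$. Evaluating against the profiles realizing the sub-sequences $\ssucc_0,\dots,\ssucc_j$ and $\ssucc_j,\dots,\ssucc_t$ (these exist by \Cref{lem:SCNI} and \Cref{lem:Kemenysquared}, with $f$ and $\sqK$ both equal to the corresponding block) and rewriting each $v(R)\cdot u^{\ssucc_i,\ssucc_{i+1}}$ via \Cref{lem:swap_hyperplanes} as a difference of Squared Kemeny costs — which along a single-crossing sequence is a convex quadratic in position, vanishing on the chosen block and strictly positive outside it, exactly as in the proof of \Cref{thm:sqk-single-crossing} — one gets a system of sign conditions on the $c_i$ from the separation property of \Cref{lem:hyperplanes} ($v\cdot u^{\ssucc_0,\ssucc_t}\ge 0$ on the $\ssucc_0$-block profiles, $\le 0$ on the $\ssucc_t$-block profiles), which together with the support/sign data of \Cref{lem:2RP_general} leaves only $c_0=\dots=c_{t-1}=\lambda>0$. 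Then $u^{\ssucc_0,\ssucc_t}=\lambda w$, which is the claim.

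I expect the main obstacle to be the step in the second paragraph: verifying rigorously that $V^\perp$-perturbations of $v(R^*)$ really do remain in both $\bar R_{\ssucc_0}$ and $\bar R_{\ssucc_t}$. This requires a careful account — in the spirit of the proof of \Cref{lem:SCNI} — of exactly which constraints $v\cdot u^{\ssucc_0,\ssucc_j}\ge 0$ are tight at $v(R^*)$, and of the fact that the tight ones arising from rankings off the sequence do not restrict $V^\perp$-directions; this is the point that needs all three claims of \Cref{lem:2RP_general}, the interiority of $v(R^*)$ provided by \Cref{lem:SCNI} (via \Cref{lem:continuity}), and the linear independence of \Cref{lem:linearIndependence} simultaneously. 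Once that is settled, the coefficient analysis in the third paragraph is routine bookkeeping with Squared Kemeny costs along single-crossing sequences.
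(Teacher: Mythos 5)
Your high-level architecture (first show $u^{\ssucc_0,\ssucc_t}$ lies in the span $V$ of the adjacent-pair vectors $u^{\ssucc_i,\ssucc_{i+1}}$, then pin down the coefficients) matches the paper's, and the telescoping identity in your first paragraph is correct. However, the step you yourself flag as the main obstacle is a genuine gap, not a deferrable technicality. To conclude that $v(R^*)+\varepsilon v'$ remains in $\bar R_{\ssucc_0}\cap\bar R_{\ssucc_t}$ for $v'\in V^\perp$, you must control \emph{all} of the constraints $v\,u^{\ssucc_0,\ssucc}\ge 0$ that cut out $\bar R_{\ssucc_0}$ by \Cref{lem:representation}, including those indexed by rankings $\ssucc$ \emph{off} the sequence. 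What \Cref{lem:SCNI} provides is $v(R^*)u^{\ssucc,\phi(\ssucc)}<0$ for off-sequence $\ssucc$, which only certifies $v(R^*)\notin\bar R_{\ssucc}$; it says nothing about whether the constraint $v(R^*)u^{\ssucc_0,\ssucc}\ge 0$ is tight, nor about whether $u^{\ssucc_0,\ssucc}$ lies in $V$ --- and the latter is essentially the statement of the lemma for the pair $(\ssucc_0,\ssucc)$, so assuming it is circular. The paper avoids this entirely by arguing by contradiction with the covering $\bigcup_{\ssucc}\bar R_{\ssucc}=\mathbb{R}^{m!}$: if $u^{\ssucc_0,\ssucc_t}$ were independent of the $u^{\ssucc_i,\ssucc_{i+1}}$, one could choose $v'$ with $v'u^{\ssucc_0,\ssucc_t}<0$ and $v'u^{\ssucc_i,\ssucc_{i+1}}>0$ for all $i$, and then $v(R^*)+\varepsilon v'$ would lie in \emph{no} region $\bar R_{\ssucc}$ (off-sequence regions are excluded because the strict inequalities $v(R^*)u^{\ssucc,\phi(\ssucc)}<0$ persist, sequence regions because the relevant products become $\varepsilon v'u^{\ssucc_i,\ssucc_{i+1}}>0$ resp.\ $\varepsilon v'u^{\ssucc_0,\ssucc_t}<0$). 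This sidesteps any need to certify membership in a region after perturbation.

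The coefficient analysis in your third paragraph is also not routine. Writing $u^{\ssucc_0,\ssucc_t}=\sum_i c_i u^{\ssucc_i,\ssucc_{i+1}}$ and evaluating at a profile $R$ realizing a prefix block $\{\ssucc_0,\dots,\ssucc_j\}$ gives $v(R)u^{\ssucc_i,\ssucc_{i+1}}=0$ for $i<j$ and $>0$ for $i\ge j$ (the Squared Kemeny cost is a convex quadratic in position, flat on the block and increasing outside), so the available separation condition $v(R)u^{\ssucc_0,\ssucc_t}\ge 0$ is automatically satisfied once $c_i\ge 0$ and produces no equation; suffix blocks behave symmetrically, and for interior blocks $v(R)$ lies in neither $\bar R_{\ssucc_0}$ nor $\bar R_{\ssucc_t}$, so no sign condition on $v(R)u^{\ssucc_0,\ssucc_t}$ is available at all. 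These evaluations bound the $c_i$ from below but cannot force $c_0=\dots=c_{t-1}$. The paper needs two further ingredients you do not supply: an explicit treatment of $t=2$ with a case distinction on whether $\ssucc_0$ and $\ssucc_2$ differ by two disjoint swaps or by shifting one alternative two positions, using hand-built profiles (e.g.\ $R(\ssucc_0)=R(\ssucc_2)=R(\succ)=\nicefrac{1}{3}$ for a suitable ``rotated'' $\succ$) on which 2RP and neutrality determine $f$ exactly and yield the equation $\lambda_0=\lambda_1$; and, for $t>2$, a separate contradiction argument that uses \Cref{lem:linearIndependence} to build a direction witnessing $\lambda_i\neq\lambda_{i+1}$ as incompatible with the covering property, which itself relies on the $t=2$ identity $u^{\ssucc_i,\ssucc_{i+2}}=\alpha(u^{\ssucc_i,\ssucc_{i+1}}+u^{\ssucc_{i+1},\ssucc_{i+2}})$. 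Neither is recoverable from the block profiles alone.
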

\begin{proof}
    First, we note that the lemma follows immediately from \Cref{lem:swap_hyperplanes} if $t=1$, so we focus subsequently on the case that $t\geq 2$. We denote by $\ssucc_0,\dots, \ssucc_t$ a given single-crossing swap sequence and prove the lemma in multiple steps. In particular, we first show the vector $u^{\ssucc_0,\ssucc_t}$ is linearly dependent on the vectors $u^{\ssucc_0,\ssucc_1},\dots, u^{\ssucc_{t-1},\ssucc_t}$, which means that there are scalars $\lambda_{i}$, not all of which are $0$, such that $u^{\ssucc_0,\ssucc_t}=\sum_{i=0}^{t-1} \lambda_i u^{\ssucc_i, \ssucc_{i+1}}$. The lemma now follows by showing that all scalars $\lambda_i$ are non-negative and equal since not all of them are $0$. We thus prove in the second step all $\lambda_i$ are non-negative. In the third step, we then prove the lemma for the case that $t=2$ and finally generalize the lemma to arbitrary $t$ in the last step.\medskip

    \textbf{Step 1:} As first step, we show that $u^{\ssucc_0,\ssucc_t}$ is linearly dependent on $u^{\ssucc_0,\ssucc_1}, \dots, u^{\ssucc_{t-1}, \ssucc_t}$. Assume for contradiction that this is not the case, which means that the set $\{u^{\ssucc_0,\ssucc_1}, \dots, u^{\ssucc_{t-1}, \ssucc_t},u^{\ssucc_0,\ssucc_t}\}$ is linearly independent. We consider now the matrix $M$ that contains all these vectors as rows. By basic linear algebra, this matrix has full (row) rank, so its image has full dimension. This implies that there is a vector $v'$ such that $v'u^{\ssucc_0,\ssucc_t}<0$ and $vu^{\ssucc_i,\ssucc_{i+1}}>0$ for all $i\in \{0,\dots, t-1\}$. By the definition of these vectors, this means that $v'\not\in\bar R_{\ssucc_i}$ for any $\ssucc_i\in \{\ssucc_0,\dots, \ssucc_t\}$. 
    
    Moreover, by \Cref{lem:SCNI}, there is a profile $R$ such that $f(R)=\{\ssucc_0,\dots,\ssucc_t\}$. Next, by \Cref{lem:continuity}, it follows for $v=v(R)$ that $v\in\bar R_{\ssucc_i}$ if and only if $\ssucc_i\in \{\ssucc_0,\dots,\ssucc_t\}$. By \Cref{lem:representation}, this means that there is a mapping $\phi$ from $\mathcal{R}\setminus f(R)$ to $\mathcal{R}$ such that $vu^{\ssucc,\phi(\ssucc)}<0$ for all $\ssucc\in\mathcal{R}\setminus f(R)$. Moreover, it holds that $v\in \bar R_{\ssucc_i}$ for all $\ssucc_i\in f(R)$. By the definition of the vectors $u^{\ssucc_i,\ssucc_j}$, it hence follows that $vu^{\ssucc_i,\ssucc_j}=0$ for all $\ssucc_i,\ssucc_j\in \{\ssucc_0,\dots,\ssucc_t\}$.
    
    Finally, we can find as sufficiently small $\epsilon>0$ such that $(v+\epsilon v')u^{\ssucc,\phi(\ssucc)}<0$ still holds for every $\ssucc\in\mathcal{R}\setminus f(R)$. It is also straightforward to verify that $(v+\epsilon v')u^{\ssucc_i,\ssucc_j}=\epsilon v' u^{\ssucc_i,\ssucc_j}$ for all $\ssucc_i,\ssucc_j\in \{\ssucc_0,\dots, \ssucc_t\}$. By the definition of $v'$, this means that $(v+\epsilon v')\not\in\bar R_{\ssucc_i}$ for any $\ssucc_i\in\mathcal{R}$, which contradicts that $\bigcup_{\ssucc_i\in \mathcal{R}} \bar R_{\ssucc_i}=\mathbb{R}^{m!}$. Hence, the initial assumption is wrong and $u^{\ssucc_0,\ssucc_t}$ is linearly depending on $u^{\ssucc_0,\ssucc_1},\dots, u^{\ssucc_{t-1},\ssucc_t}$.\medskip 

    \textbf{Step 2:}
We next show that all $\lambda_i$ are non-negative. Assume for contradiction that this is not true, i.e., there is an index $i$ with $\lambda_i<0$. Now, recall that the vectors $u^{\ssucc_i,\ssucc_{i+1}}$ for $i\in \{0,\dots, t-1\}$ are linearly independent, so there is a vector $v'$ such that \[
    v' u^{\ssucc_j,\ssucc_{j+1}} =
    \begin{cases}
    	\epsilon & \text{for all } j \in \{0,\dots, t-1\} \setminus \{ i \}, \\
    	1 & \text{for } j = i.
    \end{cases}
    \]
   In particular, we can choose $\epsilon>0$ so small that 
    \[\textstyle -\lambda_i v' u^{\ssucc_i,\ssucc_{i+1}}>\sum_{j=0, j\neq i}^{t-1} \lambda_j v'u^{\ssucc_j,\ssucc_{j+1}}.\]
    This means that $vu^{\ssucc_0,\ssucc_t}<0$. However, it follows now that $v'\not\in \bar R_{\ssucc_i}$ for all $\ssucc_i\in \{\ssucc_0,\dots, \ssucc_t\}$. Analogous to the last step, we can combine $v'$ again with the vector $v\in\mathbb{Q}^{m!}$ with $\hat g(v)=\{\ssucc_0, \dots,\ssucc_t\}$ to derive a vector $v+\epsilon' v'$ such that $v+\epsilon v'\not\in\bar R_{\ssucc_i}$ for all $\ssucc_i\in\mathcal{R}$. This gives the same contradiction as in the last step, so $\lambda_i\geq 0$ for all $i\in \{0,\dots, t-1\}$.\medskip

    \textbf{Step 3:} In our third step, we prove the lemma for the case that $t=2$ and hence let $\ssucc_0,\ssucc_1,\ssucc_2$ denote the considered sequence. By the first two steps, we know that $u^{\ssucc_0,\ssucc_2}=\lambda_0u^{\ssucc_0,\ssucc_1}+\lambda_1u^{\ssucc_1,\succ_2}$ for some values $\lambda_0,\lambda_1$ such that both are non-negative and at least one is strictly positive. We hence only need to show that $\lambda_0=\lambda_1$. To this end, we note that $\ssucc_0$ differs from $\ssucc_2$ either in two disjoint swaps, or we shift an alternative $x$ by two positions. 
    
    We continue with a case distinction with respect to these two options and first assume that $\ssucc_0$ differs from $\ssucc_2$ in two disjoint swaps. In this case, let $\succ_1$ and $\succ_2$ denote the rankings where we have swapped precisely one of these two pairs. In more detail, we assume subsequently that 
    \begin{align*}
        \ssucc_0&=\dots,a,b,\dots,c,d,\dots\\
        \ssucc_2&=\dots,b,a,\dots,d,c,\dots\\
        {\succ_1}&=\dots,b,a,\dots,c,d\dots\\
        {\succ_2}&=\dots,a,b,\dots,d,c,\dots
    \end{align*} 
    for some alternatives $a,b,c,d$. It is easy to check that $\swap(\ssucc_0,{\succ_1})=\swap(\ssucc_0,{\succ_2})=1$, $\swap(\ssucc_2,{\succ_1})=\swap(\ssucc_2,{\succ_2})=1$, and either $\swap(\ssucc_1,{\succ_1})=0$ and $\swap(\ssucc_1,{\succ_2})=2$, or $\swap(\ssucc_1,{\succ_1})=2$ and $\swap(\ssucc_1,{\succ_2})=0$. Next, it follows from 2RP that $f(R)=\{\ssucc_0,\ssucc_2\}$ for the profile $R$ with  $R(\succ_1)=\frac{1}{2}$ and $R(\succ_2)=\frac{1}{2}$. Moreover, using \Cref{lem:swap_hyperplanes}, we derive that
    \begin{align*}
    	v(R) u^{\ssucc_0,\ssucc_1} =& \frac{1}{2}(-\swap(\ssucc_0,\succ_1)^2 + \swap(\ssucc_1,\succ_1)^2-\swap(\ssucc_0,\succ_2)^2 + \swap(\ssucc_1,\succ_2)^2)\\
    	=& \frac{1}{2} (-2 + 4) \\
    	=&\, 1
    \end{align*}
    and
    \begin{align*}
    	v(R) u^{\ssucc_1,\ssucc_2} &= \frac{1}{2}(-\swap(\ssucc_1,\succ_1)^2 + \swap(\ssucc_2,\succ_1)^2 -\swap(\ssucc_1,\succ_2)^2 + \swap(\ssucc_2,\succ_2)^2) \\
    	&= \frac{1}{2}(-4 + 2) \\
    	&= -1.
    \end{align*}
    Finally, since $\ssucc_0,\ssucc_2\in f(R)=\hat g(v)$, we have that $v(R)u^{\ssucc_0, \ssucc_2}=\lambda_0 vu^{\ssucc_0,\ssucc_1}+\lambda_1 vu^{\ssucc_1,\ssucc_2}=\lambda_0-\lambda_1=0$. This is only true if $\lambda_0=\lambda_1$ which proves our third step in this case. 

    Next, suppose that we derive $\ssucc_2$ from $\ssucc_0$ by shifting an alternative $a$ by two positions. We assume that $\ssucc_2$ is derived from $\ssucc_0$ by shifting an alternative $a$ down as the case of shifting an alternative up is symmetric. Hence, let \begin{align*}
    	\ssucc_0 &= \dots, a, b, c, \dots \\
    	\ssucc_1 &= \dots, b, a, c, \dots \\
    	\ssucc_2 &= \dots, b, c, a, \dots.
    \end{align*} 
   Now, let ${\succ}=\dots, c,a,b,\dots$ and consider the profile $R$ such that $R(\ssucc_0)=R(\ssucc_2)=R(\succ)=\frac{1}{3}$. First, we note that $v(R)\not\in \bar R_{\ssucc}$ for any $\ssucc$ that differs from $\ssucc_0$ in any other pair as $(a,b)$, $(b,c)$, and $(a,c)$. Indeed, if such a pair exists in $\ssucc$, there is also a pair $(x',y')$ such that $x'\mathrel{\ssucc} y'$, $y'\mathrel{\ssucc_0} x'$, and $x'$ and $y'$ are adjacent in $\ssucc$. Using \Cref{lem:swap_hyperplanes}, we can then infer that $v(R) u^{\ssucc,\ssucc'}<0$ for the ranking $\ssucc'$ in which we swapped $x'$ and $y'$, which implies that $v(R)\not\in \bar R_{\ssucc}$. This is true because the rankings $\ssucc_0$, $\ssucc_2$, and $\succ$ agree on the order of all alternatives except $a,b,c$ and thus $\swap(\ssucc',\succ')<\swap(\ssucc,\succ')$ for all ${\succ'}\in\{\ssucc_0,\ssucc_2,\succ\}$. 
    
    Furthermore, it holds for the ranking $\ssucc_{1}$ that $v(R) u^{\ssucc_0,\ssucc_{1}}>0$ as we can simply compare the Squared Kemeny costs of both rankings ($\ssucc_0$ has a cost of $\frac{8}{3}$ in $R$, $\ssucc_{i_1}$ of $\frac{11}{3}$). Analogously, we can infer that $v(R) u^{\ssucc_2,\ssucc_{2}'}>0$ for the ranking $\ssucc_2'=\dots, c,b,a,\dots$, and that $v(R) u^{\succ,\succ'}>0$ for the ranking ${\succ'}=\dots, a,c,b,\dots$. Due to \Cref{lem:representation,lem:continuity} and neutrality, we can now infer that $f(R)=g(v(R))=\{\ssucc_0,\ssucc_2,\succ\}$. This means that $v(R)u^{\ssucc_0,\ssucc_2}=0$. Furthermore, we can compute that 
    \begin{align*}
        v(R) u^{\ssucc_0,\ssucc_1}=&\frac{1}{3}(-\swap(\ssucc_0,\ssucc_0)^2-\swap(\ssucc_0,\succ)^2-\swap(\ssucc_0,\ssucc_2)^2)\\
        &+\frac{1}{3}(\swap(\ssucc_1,\ssucc_0)^2+\swap(\ssucc_1,\succ)^2+\swap(\ssucc_1,\ssucc_2)^2)\\
        =&\frac{1}{3}(-0-4-4+1+1+9)=1,
    \end{align*} 
    and an analogous calculation shows that $v(R) u^{\ssucc_1,\ssucc_2}=-1$. Since \[v(R)u^{\ssucc_0,\ssucc_2}=\lambda_0 v(R)u^{\ssucc_0,\ssucc_1}+\lambda_1 v(R)u^{\ssucc_1,\ssucc_2},\] we can now infer that $\lambda_0=\lambda_1$. This completes the case that $t=2$.\medskip

    \textbf{Step 4:} Finally, we consider the case that $t>2$. First, we assume for contradiction that there is $i\in \{0,\dots, t-2\}$ such that $\lambda_i<\lambda_{i+1}$. Clearly, this means that there is $k$ such that $k(\lambda_{i+1}-\lambda_i)>\sum_{j\in \{0,\dots, t-1\}\setminus \{i+1\}} \lambda_i$. Since the vectors $u^{\ssucc_0, \ssucc_{1}},\dots, u^{\ssucc_{t-1},\ssucc_t}$ are linearly independent, we can find a vector $v'$ such that $v' u^{\ssucc_i,\ssucc_{i+1}}=k+1$, $v'u^{\ssucc_{i+1}, \ssucc_{i+2}}=-k$, and $v'u^{\ssucc_j, \ssucc_{j+1}}=1$ for all other vectors. By the linear dependence of $u^{\ssucc_0,\ssucc_t}$, we derive that \begin{align*}
    	v'u^{\ssucc_0,\ssucc_t} &= \textstyle\sum_{j\in \{0,\dots,t-1\}} \lambda_{j} v'u^{\ssucc_j,\ssucc_{j+1}} \\
    	&= -k\lambda_{i+1} + k\lambda_i + \textstyle\sum_{j\in \{0,\dots, t-1\}\setminus \{i+1\}} \lambda_{j} \\
    	&< 0.
    \end{align*} This means that $v\not\in \bar R_{\ssucc_0}$. Furthermore, it holds for all $\ssucc_j$ with $j\neq i+2$ that $v'u^{\ssucc_{j-1},\ssucc_j}>0$, which implies that $v'\not\in \bar R_{\ssucc_j}$ either. Finally, for $\ssucc_{i+2}$, we use the fact that 
    $u^{\ssucc_i,\ssucc_{i+2}}=\alpha(u^{\ssucc_i,\ssucc_{i+1}}+u^{\ssucc_{i+1},\ssucc_{i+2}})$ for some $\alpha>0$ (see Step 3) to derive that 
\[ v'u^{\ssucc_i,\ssucc_{i+2}} =\alpha(v'u^{\ssucc_i,\ssucc_{i+1}} +v'u^{\ssucc_{i+1},\ssucc_{i+2}}) =\alpha(k+1-k)>0,\]
so $v'\not\in \bar R_{\ssucc_{i+2}}$ either. Just as in Steps 2 and 3, we can now infer a contradiction by combining $v'$ with the vector $v\in\mathbb{Q}^{m!}$ which guarantees that $\hat g(v)=\{\ssucc_0,\dots, \ssucc_t\}$. In particular, there is a sufficiently small $\epsilon>0$ such that $v+\epsilon v'\not\in \bar R_{\ssucc_j}$ for all $\ssucc_j\in\mathcal{R}$. This contradicts that $\bigcup_{\ssucc_j\in \mathcal{R}} \bar R_{\ssucc_j}=\mathbb{R}^{m!}$, so we infer that hold that $\lambda_0 \geq \lambda_1\geq \dots\geq \lambda_{t-1}$. Finally, we note that the case $\lambda_i>\lambda_{i+1}$ follows symmetrically by simply changing the ``direction'' of our construction: we now choose $v'$ such that \begin{align*}
    v'u^{\ssucc_i,\ssucc_{i+1}} &= k, \\
    v'u^{\ssucc_{i+1}, \ssucc_{i+2}} &= -(k+1), \\
    v'u^{\ssucc_j,\ssucc_{j+1}} &= -1 \quad \text{for all other } j.
\end{align*} An analogous analysis as in the last case leads to a contradiction and the lemma follows. 
\end{proof}

Finally, we are ready to show our main result. 

\characterization*
\begin{proof}
    Consider an SPF $f$ that satisfies all our requirements. By \Cref{lem:domainextension}, there is a function $\hat g:\mathbb{Q}^{m!}\rightarrow 2^{\mathcal{R}}\setminus\{\emptyset\}$ such that $f(R)=g(v(R))$ for all profiles $R\in\mathcal{R}^*$. Now, let $R_{\ssucc_i}=\{v\in\mathbb{Q}^{m!}\colon \ssucc_i\in \hat g(v)\}$ for every $\ssucc_i\in\mathcal{R}$ and let $\bar R_{\ssucc_i}$ denote the closure of $R_{\ssucc_i}$ with respect to $\mathbb{R}^{m!}$. By \Cref{lem:hyperplanes}, we know that there are non-zero vectors $u^{\ssucc_i,\ssucc_j}$ for all $\ssucc_i,\ssucc_j\in\mathcal{R}$ such that $vu^{\ssucc_i,\ssucc_j}\geq 0$ if $v\in\bar R_{\ssucc_i}$ and $vu^{\ssucc_i,\ssucc_j}\leq 0$ if $v\in\bar R_{\ssucc_j}$. Moreover, by \Cref{lem:swap_hyperplanes} and \Cref{lem:linearDependence}, we also know these vectors can be represented as follows: there is $\lambda>0$ such that \[u^{\ssucc_i,\ssucc_j}_k=\lambda (-\swap(\ssucc_i,b(k))^2+\swap(\ssucc_j,b(k))^2)\] for all $k\in \{1,\dots, m!\}.$ In turn, \Cref{lem:representation} shows that $\bar R_{\ssucc_i}=\{v\in\mathbb{R}^{m!}\colon \forall \ssucc_j\in\mathcal{R}\setminus \{\ssucc_i\}\colon vu^{\ssucc_i,\ssucc_j}\geq 0\}$. 
    
    Now, let $s(v,\ssucc_i)=\sum_{k=1}^{m!} -v_k \swap(\ssucc_i,b(k))^2$ for every vector $v\in\mathbb{R}^{m!}$ and every $\ssucc_i\in\mathcal{R}$.
     Since our normal vectors are invariant under scaling, it is easy to see that \[\bar R_{\ssucc_i}=\{v\in\mathbb{R}^{m!}\colon \forall \ssucc_j\in\mathcal{R}\setminus \{\ssucc_i\}\colon  s(v,\ssucc_i)\geq s(v,\ssucc_j)\}.\] 
     Finally, \Cref{lem:continuity} shows for all $v\in \mathbb{Q}^{m!}$ that \[\hat g(v)=\{\ssucc_i\in\mathcal{R}\colon v\in\bar R_{\ssucc_i}\}=\{\ssucc_i\in\mathcal{R}\colon \forall \ssucc_j\in\mathcal{R}\setminus \{\ssucc_i\}\colon  s(v,\ssucc_i)\geq s(v,\ssucc_j)\}.\] 
     Hence, $f(R)=g(v(R))=\arg\max_{\ssucc\in\mathcal{R}} s(v(R), \ssucc)=\arg\min_{\ssucc\in\mathcal{R}} C_{\sqK}(R,\ssucc)$ for all profiles $R\in\mathcal{R}^{*}$. This proves that $f$ is the Squared Kemeny rule. 
\end{proof}

\section{Additional Material for Section \ref{sec:proportionality}}

\subsection{Linear Programs for Computing Worst-Case Profiles}
\label{app:proportionality-linear-programs}

Here, we describe on a high level some linear programs (LPs) that we used to compute the curves shown in plots in \ref{sec:proportionality}. Throughout this section, we consider a fixed number $m$ of alternatives, where $m \in \{4,5,6\}$ tend to lead to feasible programs (and sometimes also $m = 7$).

\subsubsection*{Maximum distance between Squared Kemeny and an input ranking}

In \Cref{fig:intro-alpha-curve}, we show for each $\alpha \in [0,1]$ the maximum swap distance between the Squared Kemeny ranking and an input ranking of weight $\alpha$. The way we compute this is to fix (wlog) some focal input ranking ${\succ^*} \in \mathcal{R}$, and iterate through all possible output rankings ${\rhd^*} \in \mathcal{R}$ and use an LP to ask for the maximum $\alpha$ such that there exists a profile $R$ where $R(\succ^*) = \alpha$ and ${\rhd^*}\in \sqK(R)$, which can be done using the following program (where $\weight_{\succ}$ is a variable encoding the weight $R(\succ)$ of the ranking $\succ$):
\begin{alignat*}{2}
	\text{maximize}\quad & \alpha \\
	\text{subject to}\quad 
	& \weight_{\succ^*} = \alpha \\
	& \textstyle \sum_{{\succ} \in \mathcal{R}} \weight_{\succ} = 1  \\
	& \textstyle \sum_{{\succ} \in \mathcal{R}} \weight_{\succ} \cdot \swap(\succ, \rhd)^2
	\ge \sum_{{\succ} \in \mathcal{R}} \weight_{\succ} \cdot \swap(\succ, \rhd^*)^2
	&& \quad  \text{for all } {\rhd} \in \mathcal{R} \\
	& \weight_{\succ} \in [0,1] && \quad \text{for all } {\succ} \in \mathcal{R} \\
	& \alpha \ge 0
\end{alignat*}

\subsubsection*{Maximum average distance between Squared Kemeny and a group of voters}

In \Cref{fig:noncohesive-group-bound}, we show for each $\alpha$ what is the worst average distance of a group $S$ of voters of size $\alpha$ to the Squared Kemeny ranking. To compute this, we fix (wlog) some output ranking ${\rhd^*} \in \mathcal{R}$. We then iterate through values $q \in \{0, \frac{1}{200}, \frac{2}{200}, \dots, \frac{200}{200}\}$ and for each $q$, we maximize $\alpha$ such that there exists a profile $R$ and a subprofile $S$ of $R$ with
\begin{itemize}
	\item $\rhd^*$ is a Squared Kemeny ranking for $R$,
	\item $S$ has size $\alpha$,
	\item the average distance between $S$ and $\rhd^*$ is at least $q\cdot \binom{m}{2}$.
\end{itemize}

This can be phrased as the following linear program:
\begin{alignat*}{2}
	\text{maximize}\quad & \alpha \\
	\text{subject to}\quad 
	& \text{group}_{\succ} \le \weight_{\succ} && \quad \text{for all } {\succ} \in \mathcal{R} \\
	& \textstyle \sum_{{\succ} \in \mathcal{R}} \text{group}_{\succ} \ge \alpha \\
	& \textstyle \sum_{{\succ} \in \mathcal{R}} \text{group}_{\succ} \cdot \swap(\succ, \rhd^*) \ge (q \cdot \binom{m}{2}) \cdot \alpha \\
	& \textstyle \sum_{{\succ} \in \mathcal{R}} \weight_{\succ} = 1  \\
	& \textstyle \sum_{{\succ} \in \mathcal{R}} \weight_{\succ} \cdot \swap(\succ, \rhd)^2
	\ge \sum_{{\succ} \in \mathcal{R}} \weight_{\succ} \cdot \swap(\succ, \rhd^*)^2
	&& \quad  \text{for all } {\rhd} \in \mathcal{R} \\
	& \weight_{\succ} \in [0,1] && \quad \text{for all } {\succ} \in \mathcal{R} \\
	& \text{group}_{\succ} \in [0,1] && \quad \text{for all } {\succ} \in \mathcal{R} \\
	& \alpha \ge 0
\end{alignat*}

\subsubsection*{Lower bound on achievable guarantee on the average distance to a group of voters}
In \Cref{fig:noncohesive-group-bound}, we also show a lower bound. To obtain this, we iterate through values $q \in \{0, \frac{1}{200}, \frac{2}{200}, \dots, \frac{200}{200}\}$ and for each $q$, we maximize $\alpha$ such that there is a profile $R$ satisfying:
\begin{quote}
	for every output ranking $\rhd$, there exists a subprofile $S^{\rhd}$ of $R$ with size at least $\alpha$ such that the average distance between $S^{\rhd}$ and $\rhd$ is at least $q \cdot \binom{m}{2}$.
\end{quote}

This can be phrased as a (very large) linear program with $O((m!)^2)$ variables, similar to the previous program, with variables $\text{group}^{\rhd}_{\succ}$ for each $\rhd$ together with copies of the associated constraints (the top three types of constraints in the previous program).

\subsection{Profiles where Squared Kemeny Returns the Reverse Ranking}
\label{app:hump-profiles}

We mentioned in \Cref{sec:proportionality} that the Squared Kemeny rule may return the reverse ranking of an input ranking with weight $\alpha$ for $\alpha$ up to around $17\%$. It follows from \Cref{thm:prop-guarantee} that this cannot happen for $\alpha > 25\%$. (While for the Kemeny rule, it can happen up to $\alpha = 50\%$.) Here we exhibit, for $m = 4,5$ alternatives, profiles where Squared Kemeny returns the reverse ranking for maximum $\alpha$. We show these profiles to underline our claim in the main body that these profiles are difficult to interpret and have no obvious structure.

For $m = 4$, the maximum is $\alpha = 17.5\% = 7/40 = 21/120$ with the following profile with 120 voters (divide the voter counts by 120 to obtain fractional weights), where Squared Kemeny uniquely selects $d \succ c \succ b \succ a$, the reverse of the first ranking:
\begin{align*}
	21 \quad & a \succ b \succ c \succ d \\
	18 \quad & a \succ d \succ c \succ b \\
	14 \quad & b \succ d \succ c \succ a \\
	18 \quad & c \succ b \succ a \succ d \\
	14 \quad & c \succ d \succ b \succ a \\
	21 \quad & d \succ b \succ a \succ c \\
	14 \quad & d \succ c \succ a \succ b
\end{align*}

For $m = 5$, the maximum is $\alpha = 3696/21088 = 231/1318 \approx 17.527\%$ with the following profile with 21088 voters (divide the voter counts by 21088 to obtain fractional weights), where Squared Kemeny uniquely selects $e \succ d \succ c \succ b \succ a$, the reverse of the first ranking:
\begin{align*}
	3696 \quad & a \succ b \succ c \succ d \succ e \\
	2984 \quad & a \succ e \succ d \succ c \succ b \\
	435 \quad & b \succ a \succ e \succ d \succ c \\
	200 \quad & b \succ e \succ d \succ a \succ c \\
	2188 \quad & b \succ e \succ d \succ c \succ a \\
	435 \quad & c \succ b \succ a \succ e \succ d \\
	200 \quad & c \succ e \succ b \succ a \succ d \\
	2253 \quad & c \succ e \succ d \succ b \succ a \\
	2984 \quad & d \succ c \succ b \succ a \succ e \\
	1272 \quad & d \succ e \succ c \succ a \succ b \\
	2188 \quad & e \succ c \succ b \succ a \succ d \\
	2253 \quad & e \succ d \succ b \succ a \succ c
\end{align*}

These profiles are obtained via (integer) linear programs like the first one described in \Cref{app:proportionality-linear-programs}. For $m = 5$, our profile uses the smallest possible number of integral voters (and also the smallest number of different rankings (12) and smallest number of distinct weights (8)). For $m = 4$, there exists a different profile with 40 instead of 120 voters, but the profile we show is optimal with respect to the number of different rankings used (7) and is optimal with respect to the number of different weights used (4).

For $m = 6$, the optimal value of $\alpha$ is $\alpha \approx 17.0567\%$, and for $m = 7$, it is $\alpha \approx 16.6893\%$.

\section{Additional Material for Section \ref{sec:computation}}

\subsection{Proof of \Cref{thm:npcomplete}}
\label{app:np-hard-proof}

Before we prove the NP-completeness of computing the Squared Kemeny rule, we should clarify how the input profiles are to be encoded: they should be given as a list of rankings that occur in the profile, together with their weights.

\npcomplete*
\begin{proof}
	Recall a standard inequality between the 1- and 2-norms:
	\[
	\|x\|_2 \le \|x\|_1 \le \sqrt{n} \cdot \|x\|_2 \quad \text{for all } x\in \mathbb{R}^n,
	\] where $\|x\|_2 = \sqrt{x_1^2 + \dots + x_n^2}$ and $\|x\|_1 = \sum_{i=1}^n |x_i|$.
	We will use this inequality with $n = 4$:
	\begin{equation}
		\|x\|_1 \le 2 \cdot \|x\|_2 
		\label{eq:norms-r4}
	\end{equation}
	
	Write $C_{\K}(R, \rhd) = \sum_{{\succ} \in \mathcal{R}} R(\succ) \cdot \swap(\succ, \rhd)$ for the Kemeny score of the ranking $\rhd$ in profile $R$.
	It is well-known that the following problem of computing the Kemeny rule on profiles with 4 voters is NP-complete \citep{dwork2001rank,biedl2009complexity,bachmeier2019k}.
	
	\medskip
	\begin{tabular}{ll}
		Input: & Profile $R$ in which exactly 4 rankings occur with equal weight, target score $k$ \\
		Question: & Does there exist $\rhd$ with $C_{\K}(R, \rhd) \le k$?
	\end{tabular}
	\medskip
	
	By reducing from that problem, we will show that the following problem is NP-complete:
	
	\medskip
	\begin{tabular}{ll}
		Input: & Profile $R$ in which exactly 4 rankings occur with equal weight, target score $s$ \\
		Question: & Does there exist $\rhd$ with $C_{\sqK}(R, \rhd) \le s$?
	\end{tabular}
	\medskip
	
	Consider an instance $(R,k)$ of the Kemeny problem, with $R$ defined on alternative set $A$ with $|A| = m$ and with rankings $\succ_1, \succ_2, \succ_3, \succ_4$ occurring in $R$ with weights $1/4$ each. 
	We construct an instance of the Squared Kemeny problem. 
	Our alternative set is going to be $A' = \bigcup_{a\in A} \{a^{(1)}, a^{(2)}, a^{(3)}, a^{(4)}\}$, consisting of 4 copies of each alternative $a \in A$. Thus $|A| = 4m$.
	The profile $R'$ on $A'$ consists of the following 4 rankings, each with weight 1/4:\footnote{The same profile construction is used in the reduction in Theorem 6 of \citet{biedl2009complexity} (showing that egalitarian Kemeny is NP-complete).}
	\begin{align*}
	{\succ_a'} &= {\succ_1^{(1)}} \cdot {\succ_2^{(2)}} \cdot {\succ_3^{(3)}} \cdot {\succ_4^{(4)}} \\
	{\succ_b'} &= {\succ_2^{(1)}} \cdot {\succ_3^{(2)}} \cdot {\succ_4^{(3)}} \cdot {\succ_1^{(4)}} \\
	{\succ_c'} &= {\succ_3^{(1)}} \cdot {\succ_4^{(2)}} \cdot {\succ_1^{(3)}} \cdot {\succ_2^{(4)}} \\
	{\succ_d'} &= {\succ_4^{(1)}} \cdot {\succ_1^{(2)}} \cdot {\succ_2^{(3)}} \cdot {\succ_3^{(4)}}
	\end{align*}
	This notation is to be understood as follows: $\succ_i^{(j)}$ refers to the ranking induced by $\succ_i$ applied to the alternatives $\{ a^{(j)} : a \in A \}$, i.e., the $j$-th copy of $A$.
	Rankings separated by a dot are concatenated. 
	Thus, each ranking in $R'$ has in its top $m$ positions the alternatives from the first copy of $A$, in the next $m$ positions the alternatives from the second copy of $A$, and so on.
	However, the rankings in $R'$ differ in how they order each copy; for example $\succ_a$ ranks the first copy in the same way that $\succ_1$ ranks $A$.
	Our target score is going to be $s = \tfrac14 k^2$.

	We show that there exist a ranking $\rhd$ with $C_{\K}(R, \rhd) \le k$ if and only if there exists a ranking $\rhd'$ with $C_{\sqK}(R', \rhd') \le s$.
	
	$\implies:$
Let $\rhd$ be such that $C_{\K}(R, \rhd) \le k$.
Let ${\rhd'} = {\rhd^{(1)}} \cdot {\rhd^{(2)}} \cdot {\rhd^{(3)}} \cdot {\rhd^{(4)}}$, i.e., the concatenation of 4 copies of $\rhd$.
Note that we have 
	\[
		\swap(\succ_a', \rhd') = \swap(\succ_1, \rhd) + \swap(\succ_2, \rhd) + \swap(\succ_3, \rhd) + \swap(\succ_4, \rhd) = C_{\K}(R, \rhd) \le k.
	\]
	Similarly, $\swap(\succ_b', \rhd') = \swap(\succ_c', \rhd') = \swap(\succ_d', \rhd') \le k$.
	Hence 
	\[
		C_{\sqK}(R', \rhd') \le \tfrac14 (k^2 + k^2 + k^2 + k^2) = \tfrac14 k^2 = s.
	\]
	
	$\impliedby:$
	Let $\rhd'$ be such that $C_{\sqK}(R', \rhd') \le s = \tfrac14 k^2$.
	
	Consider an optimum Kemeny ranking $\rhd$ in $R$, and let $t = C_{\K}(R, \rhd)$ be its Kemeny score. We want to show that $t \le k$.
	
	It is easy to see that ${\rhd^*} = {\rhd^{(1)}} \cdot {\rhd^{(2)}} \cdot {\rhd^{(3)}} \cdot {\rhd^{(4)}}$ is an optimum Kemeny ranking in $R'$, 
	and it has Kemeny score $C_{\K}(R', \rhd') = t$ because $\swap(\succ_x, \rhd^*) = t$ for $x = a,b,c,d$. Now, we have
	\begin{align*}
		t = C_{\K}(R', \rhd^*) \overset{\text{opt}}{\le} C_{\K}(R', \rhd') \overset{\eqref{eq:norms-r4}}{\le} 2 \cdot \sqrt{C_{\sqK}(R', \rhd')} \le 2 \cdot \sqrt{s} = 2 \cdot \sqrt{\tfrac14 k^2} = k,
	\end{align*}
	and thus $t \le k$, as required.
\end{proof}

\subsection{ILP Formulation}
\label{app:ilp-formulation}

Here, we present an Integer Linear Programming formulation for computing the Squared Kemeny rule.
Let $R$ be a profile with $n$ different rankings. Then the ILP formulation will contain $\binom{m}{2}$ binary variables, $2n$ continuous variables, and $O(m^3 + n)$ constraints. To transform swap distances into squared swap distances, we use the trick described by \citet{caragiannis2019unreasonable} for computing the maximum Nash welfare solution for fair allocation, and generalized by \citet{bredereck2020mixed} to other ILPs with convex or concave objective functions.

In the formulation, for every pair $a, b \in A$ of distinct alternatives, we have a binary variable $x_{a,b}$ encoding whether $a \rhd b$ in the output ranking $\rhd$. The first type of constraint encodes completeness of the binary relation $\rhd$, while the second encodes transitivity of $\rhd$. For each ranking $\succ$ appearing in the profile $R$, the formulation includes a (continuous) variable $\text{dist}_{\succ}$ which is constrained to equal $\swap(\succ, \rhd)$.\footnote{This variable can be eliminated from the formulation by replacing its value in the constraints placed on $\text{sqdist}_{\succ}$.}
There is also a continuous variable $\text{sqdist}_{\succ}$ which is constrained to be at least $\swap(\succ, \rhd)^2$. It will equal that value in the optimum solution.
\begin{alignat*}{2}
	\text{minimize}\quad & \textstyle \sum_{{\succ} \in \mathcal{R}} R(\succ) \cdot \text{sqdist}_{\succ} \\
	\text{subject to}\quad 
	& x_{a,b} + x_{b,a} = 1 && \quad \text{for all } a, b \in A, a \neq b \\
	& x_{a,b} + x_{b,c} + x_{c,a} \leq 2 && \quad \text{for all } a, b, c \in A, \text{ all distinct} \\
	& \text{dist}_{\succ} = \textstyle  \sum_{a, b \in A \: : \: a \succ b} \: x_{b,a} && \quad \text{for all } {\succ} \in \mathcal{R} \\
	& \text{sqdist}_{\succ} \geq k^2 + ((k + 1)^2 - k^2) \cdot (\text{dist}_{\succ} - k) && \quad \text{for all } {\succ} \in \mathcal{R} \text{ and all } k \in [\textstyle\binom{m}{2}]  \\
	& x_{a,b} \in \{0,1\} && \quad \text{for all } a, b \in A, a \neq b
\end{alignat*}

\subsection{4-Approximation to Squared Kemeny}
\label{app:4-approximation}

\begin{theorem}
	There is a polynomial-time 4-approximation algorithm for the Squared Kemeny rule.
\end{theorem}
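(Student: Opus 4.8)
The plan is to show that the straightforward algorithm ``output the input ranking with the smallest Squared Kemeny cost'' is a $4$-approximation. This algorithm clearly runs in polynomial time: the profile is given as a list of the rankings in $\supp(R)$ together with their weights, so there are only polynomially many candidate rankings, and for each ${\succ} \in \supp(R)$ we can compute $C_{\sqK}(R, {\succ})$ by summing over $\supp(R)$ and evaluating swap distances, all in polynomial time. So the only content is the approximation guarantee.

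For the analysis, I would fix an optimal ranking ${\rhd^*}$, writing $C^* = C_{\sqK}(R, {\rhd^*}) = \min_{\rhd} C_{\sqK}(R, \rhd)$. First I would apply an averaging argument: since $C^* = \sum_{{\succ} \in \mathcal{R}} R({\succ}) \cdot \swap({\succ}, {\rhd^*})^2$ is a weighted average of the numbers $\swap({\succ}, {\rhd^*})^2$ over ${\succ} \in \supp(R)$, there exists an input ranking ${\succ_0} \in \supp(R)$ with $\swap({\succ_0}, {\rhd^*})^2 \le C^*$, and hence $\swap({\succ_0}, {\rhd^*}) \le \sqrt{C^*}$. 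Next I would bound the cost incurred by using ${\succ_0}$ itself as the output ranking. For every ${\succ} \in \supp(R)$, the triangle inequality for the swap distance gives
\[
	\swap({\succ}, {\succ_0}) \le \swap({\succ}, {\rhd^*}) + \swap({\rhd^*}, {\succ_0}) \le \swap({\succ}, {\rhd^*}) + \sqrt{C^*},
\]
and then the elementary inequality $(a+b)^2 \le 2a^2 + 2b^2$ yields $\swap({\succ}, {\succ_0})^2 \le 2\,\swap({\succ}, {\rhd^*})^2 + 2C^*$. Multiplying by $R({\succ})$ and summing over ${\succ}$ gives
\[
	C_{\sqK}(R, {\succ_0}) \le 2 \sum_{{\succ} \in \mathcal{R}} R({\succ}) \cdot \swap({\succ}, {\rhd^*})^2 + 2C^* = 2C^* + 2C^* = 4C^*.
\]
Since the algorithm returns the input ranking of minimum Squared Kemeny cost, its output has cost at most $C_{\sqK}(R, {\succ_0}) \le 4\,C^*$, which is the desired bound.

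I do not expect a genuine obstacle here; the argument is entirely elementary. The two points that need a little care are (i) making sure the averaging step delivers a ranking that actually lies in $\supp(R)$, so that it is among the candidates the algorithm evaluates, and (ii) the choice of the constant in the step $(a+b)^2 \le 2a^2 + 2b^2$, which is exactly what produces the factor $4$; since ${\rhd^*}$ can legitimately sit ``between'' the input rankings, a finer splitting of the cross term would not improve the worst-case constant. (A sharper approximation ratio, as claimed elsewhere in the paper, instead goes through the Kemeny ranking and its PTAS rather than through an input ranking.)
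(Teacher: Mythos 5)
Your proof is correct and uses essentially the same ingredients as the paper's: the triangle inequality for $\swap$, the bound $(a+b)^2 \le 2a^2+2b^2$, and an averaging argument over the profile, yielding the same factor $4$ for the same candidate algorithm. The only (immaterial) structural difference is that you average first to locate a specific input ranking ${\succ_0}$ with $\swap({\succ_0},{\rhd^*}) \le \sqrt{C^*}$ and then bound its cost, whereas the paper bounds the expected cost $\mathbb{E}_{{\rhd}\sim R}[C_{\sqK}(R,\rhd)] \le 4\,C_{\sqK}(R,{\rhd^*})$ of a random input ranking and concludes by averaging at the end.
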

\begin{proof}
	Let $R$ be a profile and let ${\rhd^*} \in \sqK(R)$ be some Squared Kemeny output ranking. Consider the expected Squared Kemeny cost of a random ranking $\rhd$ drawn according to the weights in $R$ (i.e., viewing $R$ as a probability distribution). We have
	\begin{align*}
		\mathbb{E}_{{\rhd} \sim R} [C_{\sqK}(R, \rhd)]
		&= \mathbb{E}_{{\rhd} \sim R} [\mathbb{E}_{{\succ} \sim R}[\swap(\succ, \rhd)^2]]
		\tag{definition} \\
		&= \mathbb{E}_{{\succ} \sim R}[\mathbb{E}_{{\rhd} \sim R} [\swap(\succ, \rhd)^2]]
		\tag{linearity of expectation} \\
		&\le \mathbb{E}_{{\succ} \sim R}[\mathbb{E}_{{\rhd} \sim R} [(\swap(\succ, \rhd^*) + \swap(\rhd^*, \rhd))^2]]
		\tag{triangle inequality} \\
		&\le \mathbb{E}_{{\succ} \sim R}[\mathbb{E}_{{\rhd} \sim R} [2\swap(\succ, \rhd^*)^2 + 2\swap(\rhd^*, \rhd)^2]] 
		\tag{$(a+b)^2 \le 2a^2 + 2b^2$}
		\\
		&= \mathbb{E}_{{\succ} \sim R}[\mathbb{E}_{{\rhd} \sim R} [2\swap(\succ, \rhd^*)^2]] + \mathbb{E}_{{\succ} \sim R}[\mathbb{E}_{{\rhd} \sim R} [2\swap(\rhd^*, \rhd)^2]] 
		\tag{linearity}
		\\
		&= \mathbb{E}_{{\succ} \sim R}[2\swap(\succ, \rhd^*)^2] + \mathbb{E}_{{\rhd} \sim R} [2\swap(\rhd^*, \rhd)^2] \tag{$\mathbb{E}[\text{const.}] = \text{const.}$} \\
		&= 4 \cdot C_{\sqK}(R, \rhd^*). \tag{definition}
	\end{align*}
	Thus, it follows that there exists ${\rhd} \in \supp(R)$ with
	\[
		C_{\sqK}(R, \rhd) \le 4 \cdot C_{\sqK}(R, \rhd^*).
	\]
	Thus, the algorithm that goes through all rankings $\rhd$ in $\supp(R)$ and outputs one with minimum $C_{\sqK}(R, \rhd)$ is a 4-approximation of the Squared Kemeny rule.
\end{proof}

The same proof strategy has been used to obtain a 2-approximation to the Kemeny rule \citep{ailon2008aggregating}.

\subsection{Proof of \Cref{thm:approximation}}
\label{app:approximation-proof}

\approximation*
\begin{proof}
	Fix an arbitrary profile $R$ and ranking $\rhd \in \mathcal{R}$.
	Also, let $\rhd_{\sqK} \in \sqK(R)$ be a ranking selected by the Squared Kemeny rule.
	Furthermore, let $\alpha \ge 1$
	be such that 
	\(
	\textstyle \sum_{\succ \in \mathcal{R}} R(\succ) \cdot \swap(\succ, \rhd) =
	\alpha \textstyle \sum_{\succ \in \mathcal{R}} R(\succ) \cdot \swap(\succ, \rhd_{\sqK}).
	\)
	Observe that we can always find such an $\alpha$:
	if the (weighted) average swap distance to $\rhd$ 
	is smaller than to $\rhd_{\sqK}$, we can set $\alpha = 1$,
	and otherwise we can set it as the ratio between the distances.
	
	It is sufficient to prove that
	\begin{equation*}
		C_{\sqK}(R, \rhd) \le 2 \alpha^2 \cdot C_{\sqK}(R, \rhd_{\sqK}). 
	\end{equation*}
	Indeed, if $\rhd$ is a $(\sqrt{1 + \nicefrac{\epsilon}{2}})$-approximation of Kemeny
	(and as e.g., $(1 + \epsilon^2) < \sqrt{1 + \nicefrac{\epsilon}{2}}$
	for small enough values of $\epsilon$,
	we can find such a ranking in polynomial time \citep{KMSc07a}),
	then surely $\alpha \le (\sqrt{1 + \nicefrac{\epsilon}{2}})$, thus
	$\rhd$ is a $(2 + \epsilon)$-approximation of Squared Kemeny as well.
	Observe that this inequality is equivalent to
	\(
	C_{\sqK}(R, \rhd) - \alpha^2 \cdot C_{\sqK}(R, \rhd_{\sqK}) \le \alpha^2 \cdot C_{\sqK}(R, \rhd_{\sqK}). 
	\)
	Which, using the $x^2 - y^2 = (x + y)(x - y)$ formula, we can write as
	\begin{multline}
		\label[ineq]{ineq:approximation}
		\textstyle \sum_{\succ \in \mathcal{R}} R(\succ) 
		(\swap(\succ, \rhd) + \alpha \swap(\succ,\rhd_\sqK))
		(\swap(\succ, \rhd) - \alpha \swap(\succ,\rhd_\sqK))
		\le \\
		\textstyle \sum_{\succ \in \mathcal{R}} R(\succ)
		\cdot (\alpha \swap(\succ,\rhd_\sqK))^2.
	\end{multline}
	We will prove \cref{ineq:approximation} using the following lemma.
	\begin{lemma}
		\label{lemma:approximation}
		Given real numbers
		$x_1, x_2, \dots,x_n \ge 0$,
		$y_1,y_2,\dots,y_n \ge 0$
		and $z_1,z_2,\dots,z_n$ such that
		\(
		x_1 + x_2 + \dots + x_n = 1,
		\)
		\(
		x_1 z_1 + x_2 z_2 + \dots + x_n z_n \le 0
		\)
		and $z_i \le y_j$ for every $i, j \in [n]$, it holds that
		\[
		\sum_{i=1}^n x_i y_i z_i \le \frac{1}{4} \left(\sum_{i=1}^n x_i y_i\right)^2.
		\]
	\end{lemma}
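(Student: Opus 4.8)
The plan is to reduce the multi-variable inequality to a single-variable optimization by exploiting the sign structure of the $z_i$. Split the index set into $P = \{i : z_i > 0\}$ and $N = \{i : z_i \le 0\}$. Since $z_i \le y_j$ for all $i,j$, and there exists at least one $z_i$ with $z_i \le 0$ (otherwise $\sum x_i z_i > 0$, contradicting the hypothesis), every $y_j \ge z_i$ for that index, but more importantly every $y_j \ge 0$ is already assumed, and for $i \in P$ we have $0 < z_i \le y_i$, so $y_i z_i \ge z_i^2 \ge 0$; for $i \in N$ we have $z_i \le 0 \le y_i$, so $y_i z_i \le 0$. The left-hand side $\sum_i x_i y_i z_i$ is therefore bounded above by dropping the nonpositive terms: $\sum_i x_i y_i z_i \le \sum_{i \in P} x_i y_i z_i$.

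Next I would bound each surviving term. For $i \in P$, I claim $y_i z_i \le \tfrac14(y_i + z_i)^2$ is the wrong direction; instead note $z_i \le y_i$ gives $y_i z_i \le y_i^2$, but that is too lossy. The key trick is to compare against $\bigl(\sum_i x_i y_i\bigr)^2$. Let $S = \sum_i x_i y_i \ge 0$. I would use the constraint $\sum_i x_i z_i \le 0$ to write $\sum_{i \in P} x_i z_i \le -\sum_{i \in N} x_i z_i = \sum_{i \in N} x_i |z_i|$, and then use $z_i \le y_i$ (for $i \in P$) together with the fact that for $i \in N$, $|z_i|$ can be as large as we like but is damped by $x_i$. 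A cleaner route: for $i \in P$, bound $x_i y_i z_i \le x_i S z_i$ is false in general, so instead I would apply the weighted AM–GM / Cauchy–Schwarz style estimate
\[
\sum_{i\in P} x_i y_i z_i \le \Bigl(\max_{i \in P} z_i\Bigr) \sum_{i \in P} x_i y_i \le \Bigl(\max_{i \in P} z_i\Bigr) \cdot S,
\]
and then separately bound $\max_{i\in P} z_i$. From $z_i \le y_j$ for all $j$ we get $\max_{i} z_i \le \min_j y_j \le S$ is not immediate either; rather $\max_i z_i \le y_j$ for every $j$, so averaging, $\max_i z_i \le \sum_j x_j y_j = S$. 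Combining, $\sum_{i\in P} x_i y_i z_i \le S \cdot S = S^2$, which is weaker than the claimed $\tfrac14 S^2$. To recover the factor $\tfrac14$, I would use the extra information from $\sum_i x_i z_i \le 0$: letting $m = \max_i z_i \ge 0$, the set $P$ carries weight $w = \sum_{i\in P} x_i$, and $\sum_{i\in P} x_i z_i \le m w$, while the negative part must cancel it, so $mw \le \sum_{i\in N} x_i|z_i|$. One then optimizes the bound $\sum_{i\in P} x_i y_i z_i \le m \sum_{i \in P} x_i y_i \le m(S - \sum_{i\in N}x_i y_i)$ over the free parameter, and the product $m \cdot (\text{something} \le S - m w)$ is maximized giving the $\tfrac14$ via the elementary inequality $t(S-t) \le \tfrac14 S^2$.

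The main obstacle I anticipate is pinning down exactly which quantity plays the role of $t$ in the final $t(S-t)\le\tfrac14 S^2$ step, i.e., correctly tracking how the constraint $\sum_i x_i z_i \le 0$ forces a trade-off between the size of the positive $z_i$'s and the weight/magnitude they can carry, so that the two competing factors multiply to at most $\tfrac14 S^2$. A likely clean formulation: set $a = \sum_{i \in P} x_i z_i \ge 0$ (so $a \le \sum_{i\in N} x_i |z_i|$ and moreover $a \le S$ since each $z_i \le y_i$ on $P$ and $z_i \le y_j$ everywhere). Then $\sum_{i\in P} x_i y_i z_i \le (\max_i z_i) \cdot \sum_{i\in P} x_i y_i$, and one bounds $\sum_{i \in P} x_i y_i \le S - a$ using $y_i \ge z_i$ on $N$ only after a sign fix — this is the delicate point — while $\max_i z_i \le$ (a quantity $\le a/w + \dots$). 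Getting this bookkeeping exactly right, rather than the approach, is where the real care is needed; once the right two factors are identified, the finish is the one-line AM–GM bound $t(S-t) \le S^2/4$. I would also double-check the edge cases $S = 0$ (forcing all relevant terms to vanish) and $P = \emptyset$ (LHS $\le 0 \le$ RHS trivially) separately at the start.
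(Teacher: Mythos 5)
Your overall instinct --- that the proof should end with a one-variable bound of the form $t(S-t)\le \tfrac14 S^2$ where $S=\sum_i x_i y_i$ --- matches the paper, which finishes with exactly $y_{\min}(S-y_{\min})\le \tfrac14 S^2$. But the route you propose has a genuine flaw: the very first reduction, dropping the terms with $z_i\le 0$, already loses too much. The truncated sum $\sum_{i\in P}x_i y_i z_i$ is \emph{not} bounded by $\tfrac14 S^2$; it can approach $S^2$. Concretely, take $n=2$, $x_1=1-\epsilon$, $x_2=\epsilon$, $y_1=y_2=1$, $z_1=1$, $z_2=-(1-\epsilon)/\epsilon$. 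All hypotheses hold ($\sum_i x_iz_i=0$ and $z_i\le y_j$ for all $i,j$), we have $S=1$, and $\sum_{i\in P}x_iy_iz_i=1-\epsilon$, which exceeds $\tfrac14 S^2=\tfrac14$; the full left-hand side equals $0$ only because the negative term cancels the positive one. So every subsequent estimate in your plan is aimed at proving something false, and no bookkeeping with $m=\max_i z_i$ and the weights on $P$ versus $N$ can rescue it: your intermediate target $m\bigl(S-\sum_{i\in N}x_iy_i\bigr)$ evaluates to $1-\epsilon$ in this example.

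The paper avoids this by never discarding the negative contribution. After ordering $y_1\ge\dots\ge y_n$ (so the hypothesis collapses to $z_i\le y_n$ for all $i$), it extremizes over the $z_i$ for fixed $x,y$: the maximizer of $\sum_i x_iy_iz_i$ subject to $\sum_i x_iz_i\le 0$ and $z_i\le y_n$ pushes every $z_i$ up to $y_n$ except one index (the one with the smallest $y_i$), whose $z$-value is driven negative so that the constraint holds with equality. The resulting value is exactly $y_n(S-y_n)$ --- the crucial $-y_n^2$ comes precisely from the retained negative term --- and then $t(S-t)\le \tfrac14 S^2$ finishes. If you want to keep a decomposition-style argument, you must carry $\sum_{i\in N}x_iy_iz_i$ through to the end rather than bounding it above by $0$.
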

	\begin{proof}
		Without loss of generality, let us assume that
		$y_1 \ge y_2 \ge \dots \ge y_n \ge 0$.
		Then, the condition that $z_i \le y_j$,
		for every $i, j \in [n]$,
		means simply that $z_i \le y_n$, for every $i \in [n]$.
		
		We will first show that for fixed values of $x_1, \dots, x_n, y_1, \dots, y_n$
		the sum $\sum_{i=1}^n x_i y_i  z_i$ is maximized if
		$z_i = y_n$, for every $i \in [n-1]$, and $z_n = (1 - x_n) y_n$.
		To this end, take $z_1, \dots, z_n$ defined like that
		and arbitrary $z'_1, \dots, z'_n$ such that
		$x_1 z'_1 + \dots + x_n z'_n = 1$ and $z'_i \le y_n$, for every $i \in [n]$.
		We will show that $\sum_{i=1}^n x_i y_i z_i \ge \sum_{i=1}^n x_i y_i z'_i$.
		Since for every $i \in [n-1]$ it holds that
		$y_i \ge y_n$ and $z'_i \le y_n = z_i$ we get that
		\[
		\sum_{i=1}^n x_i y_i z_i - \sum_{i=1}^n x_i y_i z'_i =
		\sum_{i=1}^n x_i y_i (z_i - z'_i) \ge
		y_n \sum_{i=1}^n x_i (z_i - z'_i).
		\]
		Furthermore,
		we have that $\sum_{i=1}^n x_i z_i = 0$ and $\sum_{i=1}^n x_i 'z_i \le 0$,
		hence $\sum_{i=1}^n x_i (z_i - z'_i) \ge 0$.
		Thus, indeed
		$\sum_{i=1}^n x_i y_i z_i \ge \sum_{i=1}^n x_i y_i z'_i$.
		
		In the remainder of the proof of this lemma,
		let us show that with such values of $z_1, \dots z_n$
		maximizing the sum $\sum_{i=1}^n x_i y_i z_i$,
		the upper bound from the thesis still holds.
		Observe that for such values of $z_1, \dots, z_n$
		we have
		\[
		\sum_{i=1}^n x_i y_i z_i = 
		\sum_{i=1}^{n-1} x_i y_i y_n - (1 - x_n) y_n^2 =
		y_n \left( \sum_{i=1}^{n-1} (x_i y_i) - (1 - x_n) y_n \right)=
		y_n \left( \sum_{i=1}^n (x_i y_i) - y_n \right).
		\]
		Now, assume that the sum $\sum_{i=1}^n (x_i y_i)$ is a constant equal to $S$.
		What, given the value of $S$, would be the value of $y_n$ that
		would maximize the $\sum_{i=1}^n x_i y_i z_i$?
		The one that would maximize the term $y_n (S - y_n)$, which is a quadratic function
		with roots in $0$ and $S$.
		Hence, since it is concave,
		the maximum is obtained halfway between the roots at $y_n = \nicefrac{S}{2}$.
		Thus, we obtain that
		\(
		\sum_{i=1}^n x_i y_i z_i \le
		\nicefrac{1}{2} \cdot \sum_{i=1}^n (x_i y_i) \cdot \nicefrac{1}{2} \cdot \sum_{i=1}^n (x_i y_i),
		\)
		form which the thesis of the lemma follows.
	\end{proof}
	
	Now, let us use \Cref{lemma:approximation} in order to prove \cref{ineq:approximation},
	by showing that $x_\succ = R(\succ)$,
	$y_\succ = (\swap(\succ, \rhd) + \alpha \swap(\succ,\rhd_\sqK))$, and
	$z_\succ = (\swap(\succ, \rhd) - \alpha \swap(\succ,\rhd_\sqK))$
	would satisfy the conditions of the lemma.
	To this end, observe that indeed
	$\sum_{\succ \in \mathcal{R}} x_\succ = \sum_{\succ \in \mathcal{R}} R(\succ) = 1$.
	Moreover,
	by the definition of $\alpha$,
	\[
	\sum_{\succ \in \mathcal{R}} x_\succ z_\succ =
	\sum_{\succ \in \mathcal{R}} R(\succ) \cdot \swap(\succ, \rhd) -
	\alpha \sum_{\succ \in \mathcal{R}} R(\succ) \cdot \swap(\succ, \rhd_{\sqK}) \le 0.
	\]
	Thus, it suffices to show that for every $\succ, \succ' \in \mathcal{R}$ it holds that
	$z_\succ \le y_{\succ'}$, i.e.,
	\(
	\swap(\succ, \rhd) - \alpha \swap(\succ,\rhd_\sqK) \le
	\swap(\succ', \rhd) + \alpha \swap(\succ',\rhd_\sqK).
	\)
	For this, observe that using triangle inequality for $\swap$ distance
	two times, we get that
	\[
	\swap(\succ, \rhd) \le
	\swap(\succ', \rhd) + \swap(\succ, \succ') \le 
	\swap(\succ', \rhd) + \swap(\succ,\rhd_\sqK) + \swap(\succ',\rhd_\sqK).
	\]
	Since, as we assumed, $\alpha \ge 1$, we get the desired inequality.
	Therefore, we can indeed use \Cref{lemma:approximation} to show that
	\begin{multline}
		\label[ineq]{ineq:approximation2}
		\textstyle \sum_{\succ \in \mathcal{R}} R(\succ) 
		(\swap(\succ, \rhd) + \alpha \swap(\succ,\rhd_\sqK))
		(\swap(\succ, \rhd) - \alpha \swap(\succ,\rhd_\sqK))
		\le \\
		\textstyle \frac{1}{4} \left( \textstyle \sum_{\succ \in \mathcal{R}} R(\succ) (\swap(\succ, \rhd) + \alpha \swap(\succ,\rhd_\sqK)) \right)^2.
	\end{multline}
	Observe that from the definition of $\alpha$, we can bound the right-hand side of \cref{ineq:approximation2} by
	\[
	\frac{1}{4} \left(
	\sum_{\succ \in \mathcal{R}}
	R(\succ) (\swap(\succ, \rhd) + \alpha \swap(\succ,\rhd_\sqK))
	\right)^2 \le
	\alpha^2 \left( \sum_{\succ \in \mathcal{R}} R(\succ) \swap(\succ,\rhd_\sqK) \right)^2.
	\]
	Next, observe that $\sum_{\succ \in \mathcal{R}} R(\succ) \swap(\succ,\rhd_\sqK)$
	is a weighted average, hence from Jensen's inequality, we get that
	\[
	\frac{1}{4} \left(
	\sum_{\succ \in \mathcal{R}}
	R(\succ) (\swap(\succ, \rhd) + \alpha \swap(\succ,\rhd_\sqK))
	\right)^2 \le
	\alpha^2 \sum_{\succ \in \mathcal{R}} R(\succ) \swap(\succ,\rhd_\sqK)^2.
	\]
	Combining this with \cref{ineq:approximation2}, yields \cref{ineq:approximation}.
\end{proof}

\subsection{Distance Between Kemeny and Squared Kemeny}
\label{app:dist-kem-sqkem}

There are profiles where the outputs of the Kemeny and the Squared Kemeny rules are almost reverse to each other, namely have distance $\binom{m}{2} - 1$. Write $A = \{a_1, \dots, a_m\}$, $m \ge 4$, let $\succ_1$ be the ranking $a_1 \succ \cdots \succ a_m$, and let $\succ_2$ be the ranking $a_2 \succ a_1 \succ a_3 \succ \cdots \succ a_m$. Note that $\swap(\succ_1, \succ_2) = 1$.

Let $\epsilon > 0$. Consider the profile $R$ with
\[
R(\succ_1) = 2 + \epsilon, \:
R(\succ_2) = 0,
\text{ and }
R(\succ) = 1 
\text{ for all }
{\succ} \in \mathcal{R} \setminus \{\succ_1, \succ_2\}.
\]
Note that these weights sum up to more than $1$, but we can normalize the weights without changing the argument.

Now, let us discuss the outputs of the Kemeny and Squared Kemeny rules for such profile $R$.
To this end, consider an arbitrary ranking $\rhd \in \mathcal{R}$
and denote the sum of distances from $\rhd$ to every other ranking in $\mathcal{R}$ by
\(
	D_1 = \sum_{\succ \in \mathcal{R}} \swap(\succ, \rhd).
\)
Observe that the Kemeny rule cost of ranking $\rhd$ is equal to
\[
	C_{\text{Kemeny}}(R, \rhd) = D_1 + (1 + \epsilon)\cdot \swap(\succ_1,\rhd) - \swap(\succ_2,\rhd).
\]
Since $\succ_1$ and $\succ_2$ differ only on the ordering of the pair of alternatives, $a_1, a_2$, we get that
\begin{equation}
	\label{eq:dist-kem-sqkem-2alternatives}
	\swap(\succ_1,\rhd) - \swap(\succ_2,\rhd) =
	\begin{cases}
		1, & \mbox{if } a_2 \rhd a_1, \mbox{ and}\\
		-1, & \mbox{otherwise.}
	\end{cases}
\end{equation}
Thus, we obtain
\[
	C_{\text{Kemeny}}(R, \rhd)  =
	\begin{cases}
		D_1 + \epsilon \cdot \swap(\succ_1, \rhd) +1, & \mbox{if } a_2 \rhd a_1, \mbox{ and}\\
		D_1 + \epsilon \cdot \swap(\succ_1, \rhd) -1, & \mbox{otherwise,}
	\end{cases}
\]
which is minimized for $\rhd = {\succ_1}$.
Thus, $\text{Kemeny}(R) = \{\succ_1\}$ for every $\epsilon > 0$.

Now, let us consider the output of the Squared Kemeny rule.
To this end, let us denote
\(
	D_2 = \sum_{\succ \in \mathcal{R}} \swap(\succ, \rhd)^2
\)
and observe that
\[
	C_{\sqK}(R, \rhd) = D_2 + (1 + \epsilon)\cdot \swap(\succ_1,\rhd)^2 - \swap(\succ_2,\rhd)^2.
\]
Let us denote $d=\min(\swap(\succ_1,\rhd), \swap(\succ_2,\rhd))$ and observe that
$\swap(\succ_1,\rhd)+\swap(\succ_2,\rhd) = 2d +1$.
Hence, by \Cref{eq:dist-kem-sqkem-2alternatives}, we get
\begin{equation*}
\swap(\succ_1,\rhd)^2 - \swap(\succ_2,\rhd)^2 =
	(2d+1)(\swap(\succ_1,\rhd) - \swap(\succ_2,\rhd)) =
	\begin{cases}
		2d+1, & \mbox{if } a_2 \rhd a_1, \mbox{ and}\\
		-2d-1, & \mbox{otherwise.}
	\end{cases}
\end{equation*}
Therefore, we obtain that
\[
	C_{\sqK}(R, \rhd)   =
	\begin{cases}
		D_2 + \epsilon \cdot \swap(\succ_1, \rhd)^2 + 2d+1, & \mbox{if } a_2 \rhd a_1, \mbox{ and}\\
		D_2 + \epsilon \cdot \swap(\succ_1, \rhd)^2 - 2d-1 & \mbox{otherwise.}
	\end{cases}
\]
For $\epsilon < \binom{m}{2}^{-2}$, the term 
$\epsilon \cdot \swap(\succ_1, \rhd)^2$ will be strictly smaller than 1.
Hence, the value $C_{\sqK}(R, \rhd)$
will be minimized for a ranking $\rhd$ such that $a_1 \rhd a_2$ and the value of $d$ is maximized.
This will be the case for the ranking
$\rhd^*$ such that $a_m \rhd^* a_{m-1} \rhd^* \dots \rhd^* a_3 \rhd^* a_1 \rhd^* a_2$.
Thus, $\sqK(R) = \{\rhd^*\}$.
Since $\swap(\succ_1, \rhd^*)=\binom{m}{2} -1$,
we see that for the profile $R$ with $0 <\epsilon < \binom{m}{2}^{-2}$,
the Kemeny and Squared Kemeny rules indeed output almost reversed rankings.

\section{Additional Material for Section \ref{sec:experiments}}

\subsection{City Experiment Data}
\label{app:cities}

\Cref{tab:data:cities} presents the data we have used for the city ranking experiment in \Cref{sec:experiments:city}.
The GDP per capita data is taken from Wikipedia.\footnote{\url{https://en.wikipedia.org/wiki/List_of_cities_by_GDP}, accessed: 6 February 2024}
The air quality ranking is based on the average PM 2.5 concentration for the year 2018.
This was the year for which the data was the most complete,
however in a few cases we had to use the data from different year (as noted in the table).
The majority of the PM 2.5 concentration data comes from the World Health Organization database~\citep{WHO-2024-AirQuality}
(the only exception is the data for Cairo and Lagos that comes from an online article~\citep{Ogu-2023-AirQuality}).
Finally, the sunniness ranking is based on the average number of hours of sunshine per year.
The data for this was gathered from the Wikipedia articles about each city on 6 February 2024.

\begin{table}[tbh]
	\small
	\begin{tabular}{llll}
		\toprule
		\textbf{City} & \textbf{GDP per capita (US\$)} & \textbf{Avg. PM 2.5 conc. ($\mathbf{\mu g/m^3}$)} & \textbf{Avg. Sunshine h. per year} \\
		\midrule
		Bangkok & $12,670$ & $23.14^{2019}$ & $2,212$ \\
		Buenos Aires & $14,024$ & $10.26^{2015}$ & $2,384$ \\
		Cairo & $8,685$ & $47.40^{2022, \dagger}$ & $3,451$ \\
		Dubai & $47,557$ & $53.93$ & $3,570$ \\
		Dublin & $104,394$ & $7.89$ & $1,452$ \\
		Hong Kong & $52,431$ & $20.09$ & $1,829$ \\
		Istanbul & $14,989$ & $28.78$ & $2,181$ \\
		Johannesburg & $16,033$ & $22.69$ & $3,124$ \\
		Lagos & $3,607$ & $36.10^{2022, \dagger}$ & $1,844$ \\
		Lahore & $2,878$ & $123.88^{2019}$ & $3,034$ \\
		London & $66,108$ & $10.49$ & $1,675$ \\
		Mexico & $13,798$ & $22.00$ & $2,526$ \\
		Moscow & $29,012$ & $14.00^{2016}$ & $1,731$ \\
		Mumbai & $10,651$ & $75.45$ & $2,612$ \\
		New York City & $114,293$ & $7.65$ & $2,535$ \\
		Paris & $63,119$ & $14.01$ & $1,717$ \\
		Rio de Janeiro & $15,742$ & $11.45^{2015}$ & $2,182$ \\
		Rome & $40,535$ & $13.98$ & $2,724$ \\
		San Francisco & $157,704$ & $11.65$ & $3,062$ \\
		Seoul & $36,677$ & $22.93$ & $2,143$ \\
		Shanghai & $26,672$ & $37.66$ & $1,851$ \\
		Sydney & $73,034$ & $11.27^{2019}$ & $2,639$ \\
		Tokyo & $51,124$ & $12.91$ & $1,927$ \\
		Toronto & $69,110$ & $8.00$ & $2,066$ \\
		Zurich & $108,104$ & $12.13$ & $1,694$ \\
		\bottomrule
	\end{tabular}
	\caption{The data used to the city ranking analysis in \Cref{sec:experiments:city}.
	The year in a superscript of the values PM 2.5 concentration column
	signifies that the data used was from a year that is different from $2018$
	(since for $2018$ no data was available).
	Also, $\dagger$ signifies a different source of data.}
	\label{tab:data:cities}
\end{table}

\subsection{Worst-Case Average Distance}
\label{app:avgdist}

\Cref{fig:avgdist:app} presents the plots described in \Cref{sec:experiments:group-distance}
for the profiles sampled from models described in \Cref{sec:embeddings}.

\begin{figure}[th]
	\centering
	\begin{subfigure}[t]{0.3\linewidth}
		\includegraphics[width=\linewidth]{img/avgdist/small/avgdist_disc_8_50_100.jpg}
		\caption{Disc}
	\end{subfigure}
	\quad
	\begin{subfigure}[t]{0.3\linewidth}
		\includegraphics[width=\linewidth]{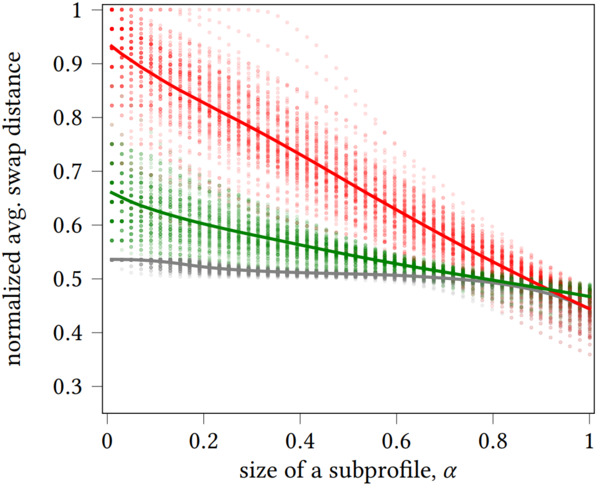}
		\caption{Circle}
	\end{subfigure}
	\quad
	\begin{subfigure}[t]{0.3\linewidth}
	\includegraphics[width=\linewidth]{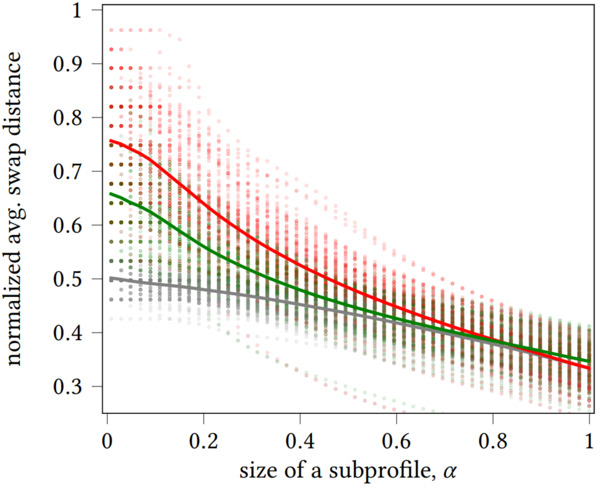}
	\caption{Countries}
	\end{subfigure}
	\\[10pt]
	\begin{subfigure}[t]{0.3\linewidth}
	\includegraphics[width=\linewidth]{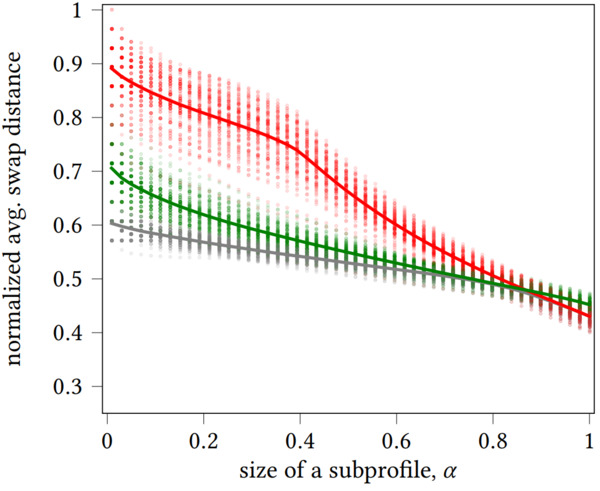}
	\caption{Mallows Mxt. {\scriptsize ($\phi_1=\phi_2=0.5$)}}
	\end{subfigure}
	\quad
	\begin{subfigure}[t]{0.3\linewidth}
	\includegraphics[width=\linewidth]{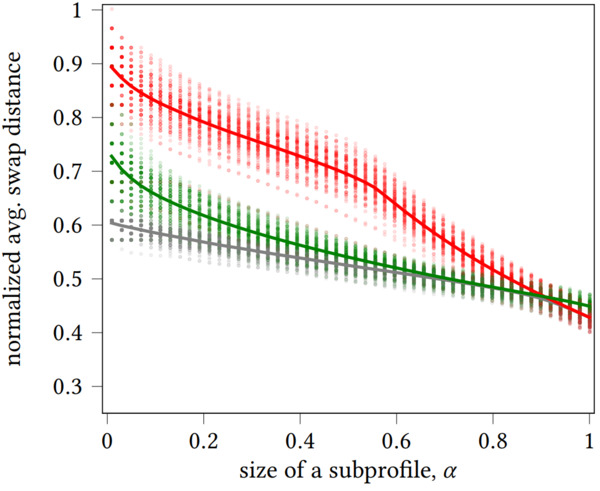}
	\caption{Mallows Mxt. {\scriptsize ($\phi_1=0.7, \phi_2=0.3$)}}
	\end{subfigure}
	\quad
	\begin{subfigure}[t]{0.3\linewidth}
	\includegraphics[width=\linewidth]{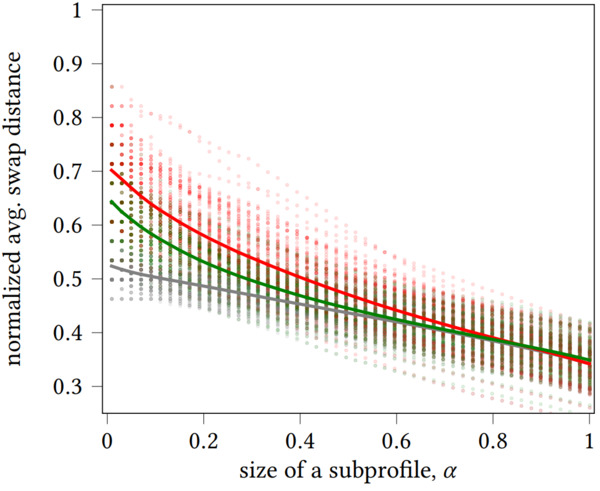}
	\caption{Breakfast}
	\end{subfigure}
	\caption{The maximal average distances between the subprofile of size $\alpha$ and the output of the Kemeny (red) and Squared Kemeny (green) rules, plus lower bound (gray).}
	\label{fig:avgdist:app}
\end{figure}

\end{document}